\newcommand\bx{{\mathbf x}}
\newcommand\by{{\mathbf y}}
\newcommand\bz{{\mathbf z}}
\newcommand\bw{{\mathbf w}}
\newcommand\LL{{\mathbb L}}
\newcommand\PP{{\mathbb P}}
\newcommand\RR{{\mathbb R}}
\newcommand\TT{{\mathbb T}}
\newcommand\ZZ{{\mathbb Z}}
\newcommand\CC{{\mathbb C}}
\newcommand\ve{\varepsilon}
\newcommand{\mc}[1]{{\mathcal #1}}
\newcommand{\mf}[1]{{\mathfrak #1}}
\newtheorem{prop}{Proposition}
\newtheorem{theo}{Theorem}
\newtheorem{Lemma}{Lemma}
\title[]{Homogenization results for a linear dynamics in random Glauber type environment}
\author{C\'edric Bernardin}
\email{Cedric.Bernardin@umpa.ens-lyon.fr}
\address{%
Universit\'e de Lyon and CNRS, UMPA, UMR-CNRS 5669, ENS-Lyon,
46, all\'ee d'Italie, 69364 Lyon Cedex 07 - France.
}%
\date{\today}
\thanks{\textsc{Acknowledgements.} We thank J. Fritz and S. Olla for useful discussions on \cite{FFL}. This research has been partially supported by the French Ministry of Education through the ANR-10-BLAN 0108 (SHEPI) grant and the Minist\`ere des Affaires \'etrang\`eres et europ\'eennes through the PHC Balaton 19458UK grant}
\keywords{Hydrodynamic limits, random media, Green-Kubo formula, homogenization}
\begin{document}
\maketitle

\begin{abstract}
We consider an energy conserving linear dynamics that we perturb by a Glauber dynamics with random site dependent intensity. We prove hydrodynamic limits for this non-reversible system in random media. The diffusion coefficient turns out to depend on the random field only by its statistics. The diffusion coefficient defined through the Green-Kubo formula is also studied and its convergence to some homogenized diffusion coefficient is proved.   \\

On consid\`ere un syst\`eme d'\'equations differentielles lin\'eaires coupl\'ees conservant une certaine \'energie et l'on perturbe ce syst\`eme par une dynamique de type Glauber dont l'intensit\'e varie al\'eatirement site par site. Nous prouvons les limites hydrodyanmiques pour ce syst\`eme non r\'eversible en milieu al\'eatoire. Le coefficient de diffusion d\'epend de l'al\'ea uniquement par sa loi. Nous \'etudions aussi le coefficient de diffusion d\'efini par la formule de Green-Kubo et montrons la convergence de celle-ci vers un coefficient de diffusion homog\'en\'eis\'e.

\end{abstract}

\section{Introduction}

The derivation of hydrodynamic limits for interacting particle diffusive systems in random environment has attracted a lot of interest in the last decade. One of the first paper to consider such question is probably \cite{Fr} where hydrodynamic behavior of a one-dimensional Ginzburg-Landau model in the presence of random conductivities is studied. In \cite{Q0}, a lattice gas with random rates is considered and a complete proof of hydrodynamic limits has been given in \cite{FM}, \cite{Q}. 
Other systems have been investigated such as exclusion processes and zero-range processes   (\cite{F1,F2,FJL,GJ,JL,N}). Interacting particle systems evolving in random media are in general  of non-gradient.  Roughly speaking the gradient condition means that the microscopic current associated to the conserved quantity is already of gradient form. Otherwise the general non-gradient techniques (\cite{KL}, \cite{V}) consists in establishing a microscopic fluctuation-dissipation equation which permits to replace the current by a gradient plus a fluctuation term. But, if the system evolves in a random medium, such a decomposition  does not hold microscopically because the fluctuations induced by the random medium are too large, and it is only in a mesoscopic scale that this fluctuation-dissipation equation makes sense (\cite{FM}, \cite{Q}).

In \cite{GJ, JL}, by extending some ideas of \cite{N}, a simpler approach is proposed. The idea is to introduce a functional transformation of the empirical measure, which turns the system into a gradient-model, in such a way that the transformed empirical measure is very close to the original empirical measure. The advantage of the method is that it avoids the heavy machinery of the non-gradient tools but is unfortunately restricted to specific models. Even if the techniques developed in \cite{FM}, \cite{Q} seem to be more robust than the precedent approach, it is not clear that in some situations, as in the situation considered here, they can be applied without a substantial modification. 

The interacting particle system we consider is the following. To a simple energy conserving linear dynamics, flips with site dependent rates are superposed. Fix a sequence $(\gamma_x)_x$ of positive numbers and denote by $(\eta (t))_{t \ge 0}$ the Markov process with state space ${\mathbb R}^{\mathbb Z}$ and generator given by
\begin{equation}
\label{eq:generator}
({\mathcal L} f )(\eta)= ({\mathcal A}f)(\eta) + ({\mathcal S} f)(\eta), \quad f:{\mathbb R}^{\ZZ} \to \RR
\end{equation}
where
$$(\mathcal A f )(\eta)=\sum_{x \in \ZZ} (\eta_{x+1}-\eta_{x-1})\partial_{\eta_x}f$$
and
$$({\mathcal S}f)(\eta)= \sum_{x \in \ZZ} \gamma_x \left[ f(\eta^x)-f(\eta)\right]$$
with $\eta^x$ the configuration obtained from $\eta$ by flipping $\eta_x$: $(\eta^x)_z = \eta_z$ if $z \ne x$, $(\eta^x)_x= -\eta_x$. This system conserves the energy $\sum_{x} e_x$, $e_x =\eta_x^2 /2$, and the product of centered Gaussian probability measures with variance $T>0$ are invariant for the dynamics. 

Let $(\gamma_x)_x$ be a sequence satisfying (\ref{eq:gounds}) and (\ref{eq:limg}). For example, the sequence $(\gamma_x)_x$ is a realization of i.i.d. positive bounded below and above random variables with positive finite mean. We show (cf. Theorem \ref{th:hl}) that, starting from a local equilibrium state with temperature profile $T_0 =1/\beta_0$, the system evolves in a diffusive time scale following a temperature profile $T$, which is a solution of the heat equation
\begin{equation}
\label{eq:chaleur}
\begin{cases}
\partial_t T ={\bar \gamma}^{-1} \Delta T\\
T (0,\cdot) = \beta_0^{-1} (\cdot)
\end{cases}
\end{equation}
where $\bar \gamma$ is the average of the flip rates $\gamma_x$ defined by (\ref{eq:limg}).

One of the main interest of the model is its non-reversibility. To the best of our knowledge, it is the first time that hydrodynamic limits are established for a non-reversible interacting particle system evolving in a random medium. In fact, our first motivation was to work with a simplified version of the energy conserving model of heat conduction with random masses (\cite{B2}) and we think that some of the methods developed in this paper could be useful to study this model.       

The derivation of the hydrodynamic limits presents three difficulties: the first is that the system is non-gradient. The second one is that it is non-reversible and that the symmetric part ${\mathcal S}$ of the generator is very degenerate and  gives only few pieces of information on the ergodic properties of the system. The third difficulty is more technical. The state space is non-compact and the control of high energies is non-trivial. The first problem is solved by using the "corrected empirical measure" method introduced in \cite{GJ}, \cite{JL} and some special features of the model. For the second one, we apply in this context some deep ideas introduced in \cite{FFL} (see also \cite{OVY}). The third problem is solved by observing that the set of convex combinations of Gaussian measures is preserved by the dynamics. The control of large energies is then reduced to the control of large covariances. 

In the perspective to study heat conduction models with random masses our main interest lies in the properties of the diffusion coefficient (given here by $1/{\bar \gamma}$).  

The diffusion coefficient is also often expressed by the Green-Kubo formula, which is nothing but the space-time variance of the current at equilibrium. The Green-Kubo expression is only formal in the sense that a double limit (in space and time) has to be taken. For reversible systems, the existence is not difficult to establish. But for non-reversible systems even the convergence of the formula is challenging (\cite{Oc}). Let us remark that a priori the Green-Kubo formula depends on the particular realization of the disorder.

If we let aside the existence problem, widely accepted heuristic arguments predict the equality between the diffusion coefficient defined through hydrodynamics and the diffusion coefficient defined by the Green-Kubo formula.

The second main theorem of our paper shows that the homogenization effect also occurs for the Green-Kubo formula (see Theorem \ref{th:homo}): for almost every realization of the disorder, the Green-Kubo formula exists and is independent of the disorder. Unfortunately we did not succeed to prove that the value of the Green-Kubo formula is $1/{\bar \gamma}$.

The paper is organized as follows. In section \ref{sec:model} we define the system. The proof of hydrodynamic limits is given in section \ref{sec:hl}. The two main technical steps which are the derivation of a one block lemma and the control of high energies are postponed to sections \ref{sec:one} and \ref{sec:mom-bounds}. The study of the Green-Kubo formula is the content of the last section.

\section{The model}
\label{sec:model}

For any $\alpha>0$,  let $\Omega_\alpha$ be the set composed of configurations $\eta =(\eta_x)_{x \in \ZZ}$ such that $\| \eta\|_{\alpha} <+\infty$ where
\begin{equation*}
\| \eta \|_{\alpha}^2 = \sum_{x \in \ZZ} e^{- \alpha |x|} \eta_x^2
\end{equation*}
Let $\Omega=\cap_{\alpha>0} \Omega_\alpha$ be equipped with its natural product topology and its Borel $\sigma$-field. The set of Borel probability measures on $\Omega$ will be denoted by ${\mathcal P} (\Omega)$. We also introduce the set $C_0^k (\Omega)$, $k \ge 1$, composed of bounded local functions on $\Omega$ which are differentiable up to order $k$ with bounded partial derivatives.  

The time evolution of the process $(\eta (t))_{t \ge 0}$ can be defined as follows. Let $\{{\mc N}_x \, ; \, x \in \ZZ\}$ be a sequence of independent Poisson processes. We shall denote by $\gamma_x>0$ the intensity of ${\mc N}_x$. We assume there exist positive constants $\gamma_-$ and $\gamma_+$ such that
\begin{equation}
\label{eq:gounds}
\forall x \in \ZZ, \quad  \gamma_- \le \gamma_x \le \gamma_+
\end{equation}
For every realization of the random element ${\mc N}=({\mc N}_x)_{x \in \ZZ}$, consider the set of integral equations:
\begin{equation}
\label{eq:dyneq}
\eta_x (t) = (-1)^{{\mc N}_{x}(t)} \left( \eta_x (0) -\int_{0}^t (-1)^{{\mc N}_{x}(s)} (\eta_{x+1} (s) -\eta_{x-1} (s)) ds\right)
\end{equation}

For each initial condition $\sigma \in \Omega$ the equations (\ref{eq:dyneq}) can be solved by a classical iterative scheme. The solution $\eta(\cdot): =\eta(\cdot, \sigma)$ defines a strong Markov process with c\`adl\`ag trajectories. Moreover each path $\eta (\cdot, \sigma)$ is a continuous and differentiable function of the initial data $\sigma$ (\cite{DaPrato}, \cite{Fr}, \cite{FFL}). We define the corresponding semigroup $(P_t)_{t \ge 0}$ by  $(P_t f)(\sigma) ={\mathbb E}_{\mc N} (f(\eta (t, \sigma)))$ where ${\mathbb E}_{\mc N}$ denotes the expectation with respect to the Poisson clocks and $f$ is a bounded measurable function on $\Omega$. 

Since the state space is not compact Hille-Yosida theory can not be applied directly. Nevertheless, the differentiability with respect to initial conditions and stochastic calculus show that the Chapman-Kolmogorov equations 
\begin{equation*}
(P_t f )(\sigma) = f(\sigma) +\int_0^t ({\mathcal L} P_s f)(\sigma) ds, \quad f \in C_0^{1} (\Omega)
\end{equation*}
and
\begin{equation*}
(P_t f )(\sigma) = f(\sigma) +\int_0^t (P_s {\mathcal L} f)(\sigma) ds, \quad f \in C_0^{1} (\Omega)
\end{equation*}
are valid with ${\mathcal L}$ the formal generator defined by (\ref{eq:generator}).

The two Chapman-Kolmogorov equations permit to deduce that the probability measures $\nu \in {\mathcal P} (\Omega)$, which are invariant for $(\eta (t))_{t \ge 0}$, are characterized by the stationary Kolmogorov equation
\begin{equation*}
\int ({\mathcal L} f)(\eta) d\nu (\eta)=0 {\rm{ \; \; for\, all \;}}  f \in C_{0}^{1} (\Omega) 
\end{equation*}

In particular, every Gibbs measure $\mu_\beta$ with inverse temperature $\beta>0$ is a stationary probability measure. Observe that $\mu_\beta$ is nothing but the product of centered Gaussian probability measures on $\RR$ with variance $\beta^{-1}$. It is easy to show that $(P_t)_{t \ge 0}$ defines a strongly continuous contraction semigroup in ${\mathbb L}^2 (\mu_\beta)$  whose generator is a closed extension of ${\mathcal L}$.

In fact, the infinite volume dynamics is well approximated by the finite dimensional dynamics $\eta^{n} (t) = \left\{ \eta_x^n (t) \, ; \, x \in \ZZ\right\}$, $n \ge 2$. It is defined by the generator ${\mc L}_n ={\mc A}_n + {\mc S}_n$ where, for any function $f \in C_0^1 (\Omega)$, 
\begin{equation*}
{\mc A}_n f  =\sum_{x=-n+1}^{n-1} (\eta_{x+1} -{\eta_{x-1}}) \partial_{\eta_x} f \,  -\eta_{n-1} \partial_{\eta_n} f  + \eta_{-(n-1)} \partial_{\eta_{-n}} f
\end{equation*}
and
\begin{equation*}
({\mc S}_n f) (\eta)=\sum_{x=-n}^n \gamma_x \left[ f(\eta^x) -f(\eta) \right]
\end{equation*}
Observe that $\eta_x^n (t)$, $|x| >n$, do not change in time. Moreover,  the total energy $\sum_{x \in \ZZ} e_x$ is conserved by the finite dimensional dynamics. We denote by $(P_t^n)_{t \ge 0}$ the corresponding semigroup. Let us fix a positive time $T>0$, a parameter $\alpha>0$ and a function $\phi \in C_0^1 (\Omega)$. One can prove there exist constants $C_n:= C(n,\alpha,T, \phi)$, $n \ge 2$, such that
\begin{equation}
\label{eq:10}
\sup_{t \in [0,T]} \left| (P_t^n \phi)(\eta) - (P_t \phi) (\eta) \right | \le C_n \| \eta\|_{\alpha}^2
\end{equation}  
and
$$\lim_{n \to \infty} C_n =0$$ 

This approximation is only used in the proof of Lemma \ref{lem:lem9}. The proof of (\ref{eq:10}) in a similar context can be found in \cite{BOO}, chapter 2 (see also \cite{FFL}).

\section{Hydrodynamic limits}
\label{sec:hl}

For any function $u: {\ZZ} \to \RR$, the discrete gradient $\nabla u$ of $u$ is the function defined on $\ZZ$ by
$$\forall x \in \ZZ, \quad (\nabla u) (x) =u( x+1) -u(x)$$

The hydrodynamic limits are established in a diffusive scale. This means that we perform the time acceleration $t \to N^2 t$ and the space dilatation $x \to x/N$. In the rest of the paper, apart from section \ref{sec:GK},  the process $(\eta (t))_{t \ge 0}$ is the Markov process defined above with this time change. The corresponding generator is $N^{2} {\mathcal L}$.

The local conservation of energy $e_x =\eta_x^2 /2$ is expressed by the following microscopic continuity equation
$$e_x (t) -e_x (0) = -N^2 \int_0^t  (\nabla j_{x-1,x}) (\eta (s)) ds$$
where the current $j_{x,x+1}:=j_{x,x+1} (\eta)$ is defined by
$$j_{x,x+1}(\eta) =-\eta_x \eta_{x+1}$$

We denote by $C_0 (\RR)$ the space of continuous functions on $\RR$ with compact support and by $C^k_0 (\RR)$, $k \ge 1$, the space of compactly supported functions which are differentiable up to order $k$. Let ${\mc M}$ (resp. ${\mc M}^+$) be the space of Radon measures (resp. positive Radon measures) on $\RR$ endowed with the weak topology.  If $G \in C^2_{0} (\RR)$ and $m \in {\mc M}$ then $\langle m, G \rangle$ denotes the integral of $G$ with respect to $m$.

The empirical positive Radon measure $\pi_t^N \in {\mc M}^+$, associated to the process $e (t) := \{e_x (t)\, ; \, x \in \ZZ\}$, is defined by
\begin{equation*}
\pi_t^N (du) = \cfrac{1}{N} \sum_{x \in \ZZ} e_x (t) \,\delta_{x/N} (du)
\end{equation*}

Fix a strictly positive inverse temperature profile $\beta_0 : {\mathbb R} \to (0,+\infty)$ and a positive constant ${\bar \beta}$ such that 
\begin{equation}
\label{eq:supen}
\lim_{N \to \infty} \cfrac{1}{N^2} \sum_{x \in \ZZ}\left[  \cfrac{1}{\beta_0 (x/N)} -\cfrac{1}{{\bar \beta}}\right]^2 =0
\end{equation}
Denote by $\mu^N= \mu_{\beta_0 (\cdot)}^N \in {\mc P} (\Omega)$ the product probability measure  defined by
$$\mu_{\beta_0 (\cdot)}^N (d\eta)= \prod_{x \in \ZZ} g_{\beta_0(x/N)} (\eta_x) d\eta_x$$
where $g_\beta(u) du$ is the centered Gaussian probability measure on $\RR$ with variance $\beta^{-1}$. 

We assume that the initial state satisfies 
\begin{equation}
\label{eq:entinit}
H(\mu^N | \mu_{\bar \beta}) \le C_0 N
\end{equation}
for a positive constant $C_0$ independent of $N$. Here $H(\cdot| \cdot)$ is the relative entropy, which is defined, for two probability measures $P,Q \in {\mc P} (\Omega)$, by
\begin{equation}
\label{eq:vfh0}
H(P|Q) = \sup_{\phi} \left\{ \int \phi dP -\log \left( \int e^{\phi} dQ \right)  \right\}
\end{equation}   
with the supremum carried over all bounded measurable functions $\phi$ on $\Omega$. Let us recall the entropy inequality, which states that for every positive constant $a>0$ and every bounded measurable function $\phi$,
\begin{equation}
\label{eq:enti}
\int \phi \, dP \le a^{-1} \, \left\{ \log \left( \int e^{a \phi} dQ \right) + H(P | Q)  \right\}
\end{equation}

Fix a positive time $T>0$. The law of the process on the path space $D([0,T], \Omega)$, induced by the Markov process $(\eta(t))_{t \ge 0}$ starting from $\mu^N$, is denoted by ${\mathbb P}_{\mu^N}$. For any time $s\ge 0$, the probability measure on $\Omega$ given by the law of $\eta (s)$ is denoted by $\mu_s^N$. 

Since entropy is decreasing in time, (\ref{eq:entinit}) implies that
\begin{equation}
\label{eq:enttime}
\forall s \ge 0, \quad H(\mu^N_s | \mu_{\bar \beta}) \le C_0 N
\end{equation}

The conditions (\ref{eq:supen}) and (\ref{eq:entinit}) are introduced to get some moment bounds (see section \ref{sec:mom-bounds}). They are satisfied by any continuous function $\beta^{-1}_{0}$ going to ${\bar \beta}^{-1}$ at infinity sufficiently fast.

\begin{theo}
\label{th:hl}
Let $(\gamma_x)_{x\in \ZZ}$ be a sequence of positive numbers satisfying (\ref{eq:gounds}) and such that
\begin{equation}
\label{eq:limg}
\lim_{K \to \infty} \cfrac{1}{K} \sum_{x=1}^K \gamma_x ={\bar \gamma}, \quad \lim_{K \to \infty} \cfrac{1}{K} \sum_{x=-K}^0 \gamma_x ={\bar \gamma}
\end{equation}
for some ${\bar \gamma} \in (0,\infty)$.  Assume that the initial state $\mu^N=\mu_{\beta_0 (\cdot)}^N$ satisfies (\ref{eq:entinit}) and $\beta_0$ satisfies (\ref{eq:supen}).

Then, under ${\PP}_{\mu^N}$, $\pi_t^N$ converges in probability to $T_t /2$ where $T_t$ is the unique weak solution of (\ref{eq:chaleur}): For every $G \in C_0 (\RR)$, every $t>0$, and every $\delta>0$,
\begin{equation*}
\lim_{N \to \infty} {\mathbb P}_{\mu^N} \left[ \left| \langle \pi_t^N , G \rangle - \cfrac{1}{2}\, \langle T_t , G \rangle \right| \ge \delta \right]=0
\end{equation*}
\end{theo}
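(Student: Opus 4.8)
\emph{Proof strategy.} The plan is to run the empirical measure / relative entropy scheme, the two non‑standard ingredients being that the system is non‑gradient and that the medium $(\gamma_x)$ is random, so that the energy current cannot be closed directly on $\pi^N$. I would proceed in four stages: (i) moment bounds, tightness and regularity of limit points; (ii) the (martingale‑free) exact evolution of $\langle\pi^N_\cdot,G\rangle$ together with a microscopic fluctuation–dissipation identity; (iii) a corrected empirical measure that resums the non‑gradient hierarchy and homogenizes the random coefficient; (iv) identification of the limit via uniqueness for \eqref{eq:chaleur}.

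\emph{Step 1 (moment bounds, tightness, regularity).} Starting from the entropy bound \eqref{eq:enttime}, I would use the structural fact — special to this model — that the set of convex combinations of centered Gaussian product measures is preserved by $(P_t)$, so that controlling high energies reduces to controlling the covariance functions $S_s(x,y)=\int\eta_x\eta_y\,d\mu^N_s$, which solve a closed linear evolution. This is done in Section~\ref{sec:mom-bounds} and yields uniform bounds of the type $\sup_{s\le T}\EE_{\mu^N}\big[\tfrac1N\sum_x\varphi(x/N)\,e_x(s)^2\big]\le C$ for $\varphi\in C_0(\RR)$. From these one gets tightness of $\{\PP_{\mu^N}\circ(\pi^N)^{-1}\}$ in $D([0,T],\mc M^+)$ and, since the energy density is $L^2$‑bounded, that every limit point concentrates on trajectories $\pi_t(du)=\tfrac12 T_t(u)\,du$ with $T_t$ a function.

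\emph{Step 2 (martingale‑free evolution and fluctuation–dissipation).} Because $e_x=\eta_x^2/2$ is unchanged by a flip of $\eta_x$, the path $t\mapsto\langle\pi^N_t,G\rangle$ is absolutely continuous and the associated Dynkin martingale vanishes identically; summing by parts gives, pathwise, for $G\in C^2_0(\RR)$,
\begin{equation*}
\langle\pi^N_t,G\rangle-\langle\pi^N_0,G\rangle=\int_0^t N\sum_x(\nabla G)(x/N)\,j_{x,x+1}(\eta(s))\,ds .
\end{equation*}
Computing $\mc L$ on $\eta_x\eta_{x+1}$ (where $\mc S$ acts as multiplication by $-2(\gamma_x+\gamma_{x+1})$ and $\mc A$ produces $2\nabla e_x$ plus the neighbouring off‑diagonal terms) gives the microscopic fluctuation–dissipation identity
\begin{equation*}
j_{x,x+1}=-\frac{\nabla e_x}{\gamma_x+\gamma_{x+1}}+\frac{\eta_{x-1}\eta_{x+1}-\eta_x\eta_{x+2}}{2(\gamma_x+\gamma_{x+1})}+\mc L\!\left(\frac{\eta_x\eta_{x+1}}{2(\gamma_x+\gamma_{x+1})}\right).
\end{equation*}
Inserting this in the drift, the $\mc L$‑exact piece becomes, after Dynkin applied to the $N^2$‑accelerated generator and division by $N^2$, a boundary term of order $N^{-1}$ plus a martingale with quadratic variation $O(N^3)$, hence negligible by Step~1; only the $\gamma$‑weighted gradient term and the distance‑two term survive.

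\emph{Step 3 (corrected empirical measure and homogenization).} This is the core of the argument and the step I expect to be hardest. Neither surviving term closes on $\pi^N$: the coefficient $1/(\gamma_x+\gamma_{x+1})$ oscillates at the microscopic scale, and the distance‑two correlations are \emph{not} negligible — they are again of order $1/N$ and feed the whole chain $\{\eta_x\eta_{x+k}\}_{k\ge1}$. Following the corrected‑empirical‑measure idea of \cite{GJ,JL}, I would introduce local functions $v^{N,G}_x(\eta)$, depending on $G$ and on $(\gamma_y)$ through a discrete corrector equation that resums this $k$‑hierarchy, and replace $\pi^N_t$ by $\tilde\pi^N_t=\pi^N_t+\tfrac1N\sum_x v^{N,G}_x(\eta(t))\,\delta_{x/N}$, chosen so that: (a) $\sup_{t\le T}\big|\langle\tilde\pi^N_t,G\rangle-\langle\pi^N_t,G\rangle\big|\to0$ in probability, so $\tilde\pi^N$ and $\pi^N$ have the same limit points; and (b) the drift of $\langle\tilde\pi^N_t,G\rangle$, after a one‑block replacement, equals $\int_0^t\bar\gamma^{-1}\langle\pi^N_s,G''\rangle\,ds$ up to vanishing errors. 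The one‑block lemma of Section~\ref{sec:one} replaces the local nonlinear fields occurring in these errors by their averages under the local Gibbs (Gaussian) state; this is the genuine ergodic input, and the extreme degeneracy of $\mc S$ (a pure sign flip, with essentially no spectral gap) forces one to exploit the transport $\mc A$, in the spirit of \cite{FFL}. Finally the Cesàro hypothesis \eqref{eq:limg}, which gives convergence of the spatial averages of $\gamma_x+\gamma_{x+1}$ to $2\bar\gamma$, is what turns the $\gamma$‑dependent coefficient produced by the corrector, once averaged against the slowly varying $G$, into the constant $\bar\gamma^{-1}$.

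\emph{Step 4 (conclusion).} Collecting Steps 1–3, every limit point of $\PP_{\mu^N}\circ(\pi^N)^{-1}$ is supported on $\pi_t(du)=\tfrac12 T_t(u)\,du$ with $T$ a weak solution of \eqref{eq:chaleur} and $T_0=\beta_0^{-1}$; since that equation has a unique weak solution in the relevant class, the sequence converges and, tightness plus this deterministic limit, the claimed convergence in probability of $\langle\pi^N_t,G\rangle$ to $\tfrac12\langle T_t,G\rangle$ follows (first for $G\in C^2_0(\RR)$, then for $G\in C_0(\RR)$ by density and the uniform energy bounds). The two points I expect to require real work are the construction and quantitative control of the corrector in the random medium (Step~3) and the accompanying one‑block lemma for this degenerate, non‑reversible dynamics (Section~\ref{sec:one}); the high‑energy estimates (Section~\ref{sec:mom-bounds}) are technically heavy but conceptually reduce, as above, to bounds on covariances.
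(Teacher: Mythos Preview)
Your overall architecture matches the paper (moment bounds via the preserved Gaussian structure, a corrected empirical measure in the spirit of \cite{GJ,JL}, a one-block lemma handling the non-reversible, degenerate dynamics, and identification of the limit). But Step~3 as you describe it is both more complicated than and conceptually different from what the paper actually does, and the difference is not cosmetic.

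The correction in the paper is a correction of the \emph{test function}, not of the configuration: one replaces $G$ by the $\gamma$-dependent function $T_{\gamma,\ell}G$ built so that
\[
\frac{N}{\gamma_x+\gamma_{x+1}}\,\nabla(T_\gamma G)(x)=(\nabla_N G)(x/N),
\]
i.e.\ so that the discrete gradient of the new test function absorbs the random coefficient $1/(\gamma_x+\gamma_{x+1})$ exactly. With this choice, the fluctuation--dissipation identity you wrote in Step~2 yields, after summation by parts, a drift of the form
\[
\frac{1}{N}\sum_x (\Delta_N G)(x/N)\Big(e_x(s)+\tfrac12\,\eta_{x-1}(s)\eta_{x+1}(s)\Big)
\]
(plus vanishing correction terms). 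The homogenized coefficient $\bar\gamma^{-1}$ does not appear here; it enters only through Lemma~\ref{lem:JL}, which says $T_{\gamma,\ell}G\to\bar\gamma G$ uniformly, so that $X^N\approx\bar\gamma\,\pi^N$. This is where \eqref{eq:limg} is used.

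In particular your assertion that the distance-two term ``is again of order $1/N$ and feeds the whole chain $\{\eta_x\eta_{x+k}\}_{k\ge1}$'' is the point where your plan diverges. Once the test function has been corrected, $\eta_{x-1}\eta_{x+1}$ appears with coefficient $\Delta_N G$, not $N\nabla_N G$, and the one-block lemma (Lemma~\ref{lem:1b}) disposes of it in a single step because $\mu_\beta(\eta_{x-1}\eta_{x+1})=0$. No hierarchy needs to be resummed, and no nonlinear local functions $v^{N,G}_x(\eta)$ enter the corrected empirical measure. Your proposed resummation would amount to carrying out the full non-gradient programme, which the paper explicitly avoids; it might be made to work, but it is substantially harder and would require controlling an infinite $\gamma$-dependent series of local functions.

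Two smaller remarks. In Step~2 you write that the martingale has quadratic variation $O(N^{3})$; this should be $O(N^{-1})$ or so (see the paper's estimate on $M^N(G)$). And for the regularity of limit points, the paper does not argue directly from $L^2$ energy bounds on $\pi^N_t$; it proves tightness of the pair $(X^N_\cdot,\int_0^\cdot\pi^N_s\,ds)$ and obtains absolute continuity of the limit of the time integral via an entropy/large-deviations argument, then transfers it to $\alpha_t$ through the relation $\beta_t=\bar\gamma^{-1}\int_0^t\alpha_s\,ds$.
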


We follow the method of the ``corrected empirical measure'' introduced in \cite{GJ,JL}. Since the state space is not compact, technical adaptations are necessary. In particular, it is not given for free that the corrected empirical measure and the empirical measure have the same limit points for the weak convergence. It would be trivial if the state space was compact. Moreover, a replacement lemma, reduced to a one-block estimate, has to be established (see section \ref{sec:one}).

For any $G \in C_0 (\RR)$, we define  $T_{\gamma} G : \ZZ \to \RR$ by
$$(T_{\gamma} G)(x) =\sum_{j<x} (\gamma_j +\gamma_{j+1})\left\{ G \left( \cfrac{j+1}{N}\right) -G \left( \cfrac{j}{N}\right) \right\}$$
Observe that
$$N\cfrac{1}{\gamma_x + \gamma_{x+1}}\left[ (T_\gamma G)(x+1) -(T_\gamma G)(x)\right]= (\nabla_N G)(x/N)$$
where $\nabla_N$ stands for the discrete derivative: $(\nabla_N G)(x/N) =N\{G((x+1)/N) -G(x/N) \}$.

Since $T_{\gamma} G$ may not belong to $\ell_1 (\ZZ)$, we modify $T_\gamma G$ in order to integrate it with respect to the empirical measure. Fix $0<\theta<1/2$ and consider a $C^2$ increasing nonnegative  function ${\tilde g}$ defined on ${\mathbb R}$ such that ${\tilde g}(q)=0$ for $q \le 0$, ${\tilde g} (q)=1$ for $q \ge 1$ and ${\tilde g}(q) =q$ for $q\in [\theta, 1-\theta]$.

Fix an arbitrary integer $\ell >0$ and let $g=g_{\theta,\ell} : \RR \to \RR$ be given by
$$g(q)= {\tilde g} (q/\ell)$$

We define
\begin{equation*}
(T_{\gamma, \ell} G) (x) = (T_{\gamma} G)(x) -\cfrac{T_{\gamma,G}}{T_{\gamma,g}} (T_\gamma g)(x) 
\end{equation*}
where
\begin{equation*}
T_{\gamma,h} = \sum_{x \in \ZZ} (\gamma_x + \gamma_{x+1}) \left\{ h((x+1)/N) -h(x/N)\right\}
\end{equation*}

In the rest of the paper we make the choice $\ell:=\ell (N)=N^{1/4}$. 

\begin{Lemma}
\label{lem:JL}
For each function $G \in C_{0}^2 (\RR)$, and each environment $\gamma$ satisfying (\ref{eq:gounds}) and (\ref{eq:limg}), 
$$\lim_{N \to \infty} N^{1/4} \sup_{x \in \ZZ} \left| T_{\gamma,\ell} G (x) - {\bar \gamma} G(x/N)\right|=0 $$
and
$$\lim_{N \to \infty} N^{1/4} T_{\gamma,G} =0$$
\end{Lemma}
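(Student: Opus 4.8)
The first step is to unwind the definitions. Write $T_{\gamma,G}$ in the compact form $T_{\gamma,h}=\sum_{x}(\gamma_x+\gamma_{x+1})\nabla h(x/N)$ for $h\in C_0^2(\RR)$. Since $G\in C_0^2(\RR)$, a discrete integration by parts (Abel summation) turns this into $-\sum_x (\nabla\gamma_\cdot+\nabla\gamma_{\cdot+1})\cdot(\text{something})$, but the cleaner route is: $T_{\gamma,G}=\sum_x(\gamma_x+\gamma_{x+1})(G((x+1)/N)-G(x/N))$. Split $\gamma_x+\gamma_{x+1}=2\bar\gamma+(\gamma_x-\bar\gamma)+(\gamma_{x+1}-\bar\gamma)$. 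The term $2\bar\gamma\sum_x(G((x+1)/N)-G(x/N))=0$ by telescoping (compact support). For the remaining fluctuating part, I would sum by parts once: letting $\Gamma_x=\sum_{j\le x}(\gamma_j+\gamma_{j+1}-2\bar\gamma)$, the Cesàro hypothesis \eqref{eq:limg} gives $\Gamma_x=o(|x|)$ as $|x|\to\infty$, and Abel summation against the second difference of $G$ yields $T_{\gamma,G}=-\frac1{N^2}\sum_x \Gamma_x\,\big(\text{discrete Laplacian of }G\text{ at }x/N\big)+o(1)$-type error from the boundary. Then $\frac1{N^2}\sum_x|\Gamma_x|\mathbf 1_{|x|\le RN}\le \frac{R}{N}\max_{|x|\le RN}\frac{|\Gamma_x|}{|x|}\cdot(\text{const})\to 0$ since $G$ is supported in $[-R,R]$; hence $N^{1/4}T_{\gamma,G}\to 0$. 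In fact one gets $T_{\gamma,G}=o(N^{-1})$, which is stronger than needed but harmless.

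Next, the denominator $T_{\gamma,g}$. Here $g(q)=\tilde g(q/\ell)$ with $\ell=N^{1/4}$, so $g$ ramps from $0$ to $1$ over the spatial region $x\in[0,\ell N]=[0,N^{5/4}]$ (more precisely, $g(x/N)=x/(\ell N)$ for $x/N\in[\theta\ell,(1-\theta)\ell]$). Then $T_{\gamma,g}=\sum_x(\gamma_x+\gamma_{x+1})(g((x+1)/N)-g(x/N))$; on the linear middle stretch $g((x+1)/N)-g(x/N)=1/(\ell N)$, and there are $\sim \ell N(1-2\theta)$ such terms, each with weight $\gamma_x+\gamma_{x+1}$ whose Cesàro average is $2\bar\gamma$; so $T_{\gamma,g}\to 2\bar\gamma(1-2\theta)+[\text{contributions from the two curved pieces of }\tilde g]$. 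Together with the curved pieces (where $\tilde g'$ is bounded and the total variation of $g$ over each is $\le\theta$ in $g$-units) one finds $T_{\gamma,g}\to 2\bar\gamma\cdot\int_0^1\tilde g'(s)\,ds=2\bar\gamma$; the key point is just that $T_{\gamma,g}$ converges to a strictly positive finite limit, so $1/T_{\gamma,g}$ is bounded.

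Finally, the main estimate. By definition $(T_{\gamma,\ell}G)(x)-\bar\gamma G(x/N)=\big[(T_\gamma G)(x)-\bar\gamma G(x/N)\big]-\frac{T_{\gamma,G}}{T_{\gamma,g}}(T_\gamma g)(x)$. For the bracket, write $(T_\gamma G)(x)=\sum_{j<x}(\gamma_j+\gamma_{j+1})\nabla G(j/N)$ and compare with $\bar\gamma G(x/N)=\bar\gamma\sum_{j<x}(G((j+1)/N)-G(j/N))$ (valid since $G$ has compact support, so the sum over $j<x$ eventually stabilizes — for $x/N$ below the support it is $0=\bar\gamma G(x/N)$, consistent). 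Thus the bracket equals $\sum_{j<x}(\gamma_j+\gamma_{j+1}-2\bar\gamma)\cdot\frac12\cdot 2\,\nabla G(j/N)$... more carefully, $(\gamma_j+\gamma_{j+1})-2\bar\gamma$ times $\nabla G(j/N)$, summed over $j<x$. Sum by parts using $\Gamma_x$ above: this is $\Gamma_{x-1}\nabla G(x/N)$-boundary minus $\sum_{j<x}\Gamma_j\,\nabla^2 G$, and since $\nabla G(j/N)=O(1/N)$, $\nabla^2G(j/N)=O(1/N^2)$, $\Gamma_j=o(j)$, and everything is supported in $|j|\lesssim RN$, the bracket is $\sup_x|\cdots|=o(1)$ — uniformly in $x$, because the estimate only involves sums over the (fixed) support of $G$. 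The subtracted term is bounded by $|T_{\gamma,G}|\cdot\|1/T_{\gamma,g}\|\cdot\sup_x|(T_\gamma g)(x)|$, and $\sup_x|(T_\gamma g)(x)|\le \sum_x(\gamma_x+\gamma_{x+1})|g((x+1)/N)-g(x/N)|\le\gamma_+\cdot O(1)$ since the total variation of $q\mapsto g(q)$ is $1$ and there are $O(\ell N)$ increments of size $O(1/(\ell N))$; so this term is $O(|T_{\gamma,G}|)=o(N^{-1})$. Multiplying the whole expression by $N^{1/4}$ still gives $0$ in the limit.

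\textbf{Main obstacle.} The delicate point is making the first Cesàro hypothesis \eqref{eq:limg} yield a genuinely \emph{uniform-in-$x$} bound on the bracket $(T_\gamma G)(x)-\bar\gamma G(x/N)$ after summation by parts, with a decay rate that survives multiplication by $N^{1/4}$. The hypothesis only gives $\Gamma_x=o(|x|)$, i.e. $\varepsilon(x)|x|$ with $\varepsilon(x)\to0$, and after two summations by parts against $\nabla^2 G=O(N^{-2})$ over a window of length $O(N)$ one gets roughly $N^{-2}\cdot N\cdot N\cdot\varepsilon(cN)=\varepsilon(cN)=o(1)$; then $N^{1/4}\cdot o(1)$ is \emph{not} automatically $o(1)$. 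One must therefore exploit that $G$ is fixed and use the specific structure: the correct bound is $N^{1/4}\sup_x|{\rm bracket}|\le N^{1/4}\cdot C_G\cdot N\cdot\sup_{|j|\le RN}\frac{|\Gamma_j|}{N}\cdot\frac{1}{N}$... this still needs $\sup_{|j|\le RN}|\Gamma_j|=o(N^{3/4})$, which does \emph{not} follow from \eqref{eq:limg} alone. I expect the actual argument must either (a) first prove the weaker statement with $N^{1/4}$ replaced by $1$ and then observe that the lemma as used downstream only needs that weaker form, or (b) reorganize so that the $N^{1/4}$ is absorbed into the telescoping structure differently — e.g. by noting $T_{\gamma,\ell}G(x)-\bar\gamma G(x/N)$ telescopes as a sum whose \emph{increments} are $\frac{1}{N}(\gamma_x+\gamma_{x+1}-2\bar\gamma)\nabla_N G(x/N)+(\text{correction})$, each of order $O(N^{-1})$ but with the Cesàro cancellation acting on the partial sums. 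Reconciling the gain from the Cesàro averaging with the loss from the $N^{1/4}$ prefactor is the crux, and I suspect it is exactly here that the choice $\ell=N^{1/4}$ and $\theta<1/2$ enters quantitatively, together with a more careful accounting than the crude bounds above.
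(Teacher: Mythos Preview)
The paper does not prove this lemma; it simply cites Lemma~4.1 of \cite{JL} as a ``slight modification.'' Your plan --- Abel summation against the discrete second difference of $G$, using the partial sums $\Gamma_x=\sum_{i\le x}(\gamma_i+\gamma_{i+1}-2\bar\gamma)$ --- is the natural one, and the obstacle you isolate is real. In fact it is fatal at the stated level of generality: take $\gamma_x=\bar\gamma+1/\log(|x|+2)$. This sequence satisfies (\ref{eq:gounds}) and (\ref{eq:limg}), yet for $G\in C_0^2(\RR)$ supported in $[1,2]$ with $G\ge 0$, $G\not\equiv 0$, a two-term expansion of $1/\log j$ on $j\in[N,2N]$ gives
\[
T_{\gamma,G}=\frac{2}{(\log N)^2}\int_1^2\frac{G(u)}{u}\,du+O\!\Big(\frac{1}{(\log N)^3}\Big),
\]
so $N^{1/4}T_{\gamma,G}\to\infty$. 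The first assertion fails too: at $x=\lfloor 3N/2\rfloor$ one has $(T_\gamma G)(x)-2\bar\gamma G(x/N)\sim G(3/2)/\log N$, while the subtracted correction $-\tfrac{T_{\gamma,G}}{T_{\gamma,g}}(T_\gamma g)(x)$ is $O(|T_{\gamma,G}|)$ as you noted, hence negligible in comparison. (There is also an apparent factor-of-two slip in the statement: the limit should be $2\bar\gamma G$, not $\bar\gamma G$.)

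What your argument would actually need is $\sup_{|j|\le RN}|\Gamma_j|=o(N^{3/4})$; this holds a.s.\ when $(\gamma_x)$ are i.i.d.\ (the law of the iterated logarithm gives $|\Gamma_j|=O(\sqrt{|j|\log\log|j|})$), but not under the bare Ces\`aro hypothesis (\ref{eq:limg}). So your diagnosis in the last paragraph is exactly right, and neither of your proposed fixes (a) or (b) can work here: the $N^{1/4}$ rate is genuinely used downstream (see (\ref{eq:compa0})), and no reorganization of the summation recovers it without a quantitative strengthening of (\ref{eq:limg}).
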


\begin{proof}
This is a slight modification of Lemma 4.1 in \cite{JL}. 
\end{proof}

We shall denote by $X_t^N \in {\mc M}$ the corrected empirical measure defined by 
\begin{equation*}
X_t^N (G) =X_{t}^{N,\gamma} (G) = \cfrac{1}{N} \sum_{x \in \ZZ} T_{\gamma,\ell} G (x) \, e_t (x)
\end{equation*}

The system is non-gradient but  we have 
\begin{equation*}
j_{x,x+1}= -\cfrac{1}{\gamma_x +\gamma_{x+1}} \nabla \left[ e_x +\cfrac{1}{2}\eta_{x-1}\eta_{x+1}\right] + {\mathcal L} \left( \cfrac{1}{2(\gamma_x +\gamma_{x+1})}\eta_x \eta_{x+1}\right)
\end{equation*}

This  implies that
\begin{eqnarray*}
N^2 {\mathcal L} \left[ X^N (G)\right]&=&\cfrac{1}{N} \sum_{x \in \ZZ} \left[ (\Delta_N G) (x/N)-\cfrac{T_{\gamma,G}}{T_{\gamma,g}} (\Delta_N g)(x/N)\right] \left(e_x+\cfrac{1}{2}\eta_{x-1}\eta_{x+1} \right)\\
&+&\cfrac{1}{2} \, {\mathcal L}\, \left( \sum_{x \in \ZZ} \left[ (\nabla_N G )(x/N) -\cfrac{T_{\gamma,G}}{T_{\gamma,g}}( \nabla_N g)(x/N)\right]\eta_x \eta_{x+1}\right)
\end{eqnarray*}
where $\Delta_N$ stands for the discrete Laplacian: 
$$(\Delta_N G)(x/N)= N^{2}\left\{ G((x+1)/N)+G((x-1)/N) - 2 G(x/N) \right\}$$
Therefore, we have
\begin{equation}
\label{eq:decompo}
X_t^N (G) -X_0^N (G)= U_t^N (G)+V_t^N (G)+M_t^N (G)
\end{equation}
with $M^N (G)$ a martingale and $U^N (G)$, $V^N (G)$, which are given by
\begin{equation*}
 U_t^N (G) = \int_0^t ds \; \cfrac{1}{N} \sum_{x \in \ZZ} B_N^G (x/N)\left(e_x (s) +\cfrac{1}{2}\eta_{x-1} (s)\eta_{x+1} (s)  \right)
\end{equation*}
where
$$B_N^G (x/N) =\left[ (\Delta_N G) (x/N)-\cfrac{T_{\gamma,G}}{T_{\gamma,g}} (\Delta_N g)(x/N)\right] $$
and
\begin{equation*}
V_t^N (G)= \cfrac{1}{N^2}\sum_x \left[ (\nabla_N G )(x/N) -\cfrac{T_{\gamma,G}}{T_{\gamma,g}}( \nabla_N g)(x/N)\right](\eta_x (t) \eta_{x+1}(t)- \eta_x(0)\eta_{x+1}(0))
\end{equation*}

\begin{Lemma}
The sequence $\left\{ \left( X_{\cdot}^N , \int_0^{\cdot} \pi_s^N ds \right) \in D ([0,T], {\mc M} ) \times D([0,T], {\mc M}^+ ) \, ; \, N \ge 1 \right\}$ is tight. 
\end{Lemma}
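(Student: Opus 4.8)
The plan is to prove tightness of each coordinate separately, since tightness of the product is implied by tightness of the marginals. For the second coordinate, $\int_0^\cdot \pi_s^N\, ds$, I would first observe that $\pi_s^N$ is a positive measure and that tightness in $D([0,T],{\mc M}^+)$ reduces, via a standard criterion (e.g. the one in \cite{KL}), to proving for each fixed $G\in C_0^2(\RR)$ the tightness in $D([0,T],\RR)$ of the real-valued processes $t\mapsto \int_0^t \langle \pi_s^N,G\rangle\, ds$, together with a uniform control of the mass on compacts. The time-integrated processes are automatically equicontinuous: for $s<t$,
\begin{equation*}
\left| \int_0^t \langle \pi_r^N,G\rangle\, dr - \int_0^s \langle \pi_r^N,G\rangle\, dr\right| \le |t-s| \sup_{r\le T} |\langle \pi_r^N,G\rangle|,
\end{equation*}
so the only issue is the stochastic boundedness of $\sup_{r\le T}|\langle \pi_r^N,G\rangle|$. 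This is where the moment bounds of section \ref{sec:mom-bounds} enter: using the entropy bound (\ref{eq:enttime}) together with the entropy inequality (\ref{eq:enti}) applied to an appropriate exponential functional of the energies, one gets $\EE_{\mu^N}[\sup_{r\le T}|\langle \pi_r^N,G\rangle|]\le C(G)$ uniformly in $N$. Combined with a uniform estimate on $\pi_r^N$ of sets of the form $\{|u|\ge A\}$ (again from the moment bounds), this yields tightness of the second coordinate in $D([0,T],{\mc M}^+)$; the limit points are in fact continuous in time.

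For the first coordinate $X_\cdot^N$, I would use the decomposition (\ref{eq:decompo}): $X_t^N(G)-X_0^N(G)=U_t^N(G)+V_t^N(G)+M_t^N(G)$, and prove tightness of each term in $D([0,T],\RR)$ for each fixed $G\in C_0^2(\RR)$, after first checking that $X_0^N(G)$ is stochastically bounded (this follows from Lemma \ref{lem:JL}, which gives $\sup_x|T_{\gamma,\ell}G(x)|\le C(G)$, and the moment bounds on $\frac1N\sum_x e_x$ over compacts coming from the support of $G$ and the decay $\|\eta\|_\alpha$). For $U_t^N(G)$: by Lemma \ref{lem:JL} the coefficients $B_N^G(x/N)$ are uniformly bounded by $\|\Delta_N G\|_\infty + o(1)\le C(G)$ and supported (up to the $g$-correction, which is negligible by the $N^{1/4}T_{\gamma,G}\to0$ part of Lemma \ref{lem:JL}) on a fixed compact, so $|U_t^N(G)-U_s^N(G)|\le C(G)|t-s|\sup_{r\le T}\frac1N\sum_{|x|\le CN}(e_x(r)+\frac12|\eta_{x-1}(r)\eta_{x+1}(r)|)$, and the supremum is stochastically bounded by the moment estimates; hence $U^N(G)$ is equicontinuous and tight. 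For the martingale $M^N(G)$, I would compute its quadratic variation $\langle M^N(G)\rangle_t = \int_0^t N^2\big({\mc L}(X_s^N(G))^2 - 2 X_s^N(G){\mc L}X_s^N(G)\big)\,ds$, which only picks up contributions from the Glauber part ${\mc S}$ (the antisymmetric part ${\mc A}$ being a first-order operator contributes nothing to the carré du champ), giving a bound $\langle M^N(G)\rangle_t \le \frac{C(G)}{N}\int_0^t \frac1N\sum_{|x|\le CN}\gamma_x\, e_x(s)^2\, ds \le \frac{C(G)}{N}\cdot(\text{bounded})$, so $\langle M^N(G)\rangle_T\to0$ in probability and Aldous' criterion (or Doob's inequality plus the vanishing quadratic variation) gives tightness with $M^N(G)\to0$.

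For $V_t^N(G)$: here the prefactor is $1/N^2$ and the coefficients $(\nabla_N G)(x/N)$ are $O(1)$ and effectively supported on a compact, so $|V_t^N(G)|\le \frac{C(G)}{N^2}\sum_{|x|\le CN}(|\eta_x(t)\eta_{x+1}(t)|+|\eta_x(0)\eta_{x+1}(0)|)\le \frac{C(G)}{N}\cdot\frac1N\sum_{|x|\le CN}(e_x(t)+e_x(0)+\dots)$, which is $O(1/N)$ in probability uniformly in $t$ by the moment bounds — so $V^N(G)$ converges uniformly to $0$ and is trivially tight. Putting the four pieces together, each $X_\cdot^N(G)$ is tight in $D([0,T],\RR)$, and tightness in $D([0,T],{\mc M})$ follows by the same Radon-measure criterion as above. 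The main obstacle throughout is that the state space is non-compact, so every ``uniform boundedness'' assertion must be backed by the energy moment estimates of section \ref{sec:mom-bounds}; modulo those estimates (which I invoke as a black box here), the rest is the standard Aldous–Rebolledo scheme.
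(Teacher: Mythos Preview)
Your overall architecture (Aldous' criterion, the decomposition (\ref{eq:decompo}), and treating $U^N$, $V^N$, $M^N$ separately) is the same as the paper's, but there are two concrete gaps. The first is the support issue: you repeatedly assume the relevant coefficients live in $\{|x|\le CN\}$ (``supported on a fixed compact''). This is false. Because $g(q)=\tilde g(q/\ell)$ with $\ell=N^{1/4}$, the correction $\tfrac{T_{\gamma,G}}{T_{\gamma,g}}(T_\gamma g)$, and hence $T_{\gamma,\ell}G$, $B_N^G$, the coefficients in $V^N$, and those in the quadratic variation of $M^N$, are supported on $\{|x|\le CN\ell\}=\{|x|\le CN^{5/4}\}$, a window of growing macroscopic size $N^{1/4}$. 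Every one of your bounds of the form $\frac1N\sum_{|x|\le CN}(\cdots)$ therefore misses a nontrivial piece. Saying the $g$-correction is ``negligible by $N^{1/4}T_{\gamma,G}\to0$'' is not enough on its own: one must balance the smallness of the coefficient (of order $T_{\gamma,G}\cdot\ell^{-1}$ or $T_{\gamma,G}\cdot\ell^{-2}$ depending on the term) against the energy sum over the larger window, invoking Lemma~\ref{lem:mom-bounds} with $M_N=N^{5/4}$. The paper carries out this balancing explicitly for each of $U^N$, $V^N$ and $M^N$; you have not.

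The second gap is that your tightness argument for $\int_0^\cdot\pi_s^N\,ds$ (and your estimate on $U^N$) rests on controlling $\EE_{\mu^N}\bigl[\sup_{r\le T}|\langle\pi_r^N,G\rangle|\bigr]$, i.e.\ with the supremum \emph{inside} the expectation. Lemma~\ref{lem:mom-bounds} only yields $\sup_{t}\EE_{\mu^N}[\,\cdot\,]$, with the supremum \emph{outside}; since the energy in a finite window is not conserved, promoting this to a pathwise bound is not free and does not follow from the entropy inequality you cite. The paper circumvents the problem altogether: it shows via (\ref{eq:compa}) that $\int_0^T|\bar\gamma\langle\pi_s^N,G\rangle-X_s^N(G)|\,ds\to0$ in $\LL^1$, then uses that the map $x\mapsto\int_0^\cdot x(s)\,ds$ is continuous on $D([0,T],\RR)$, so tightness of $\int_0^\cdot\pi_s^N\,ds$ is inherited from tightness of $X_\cdot^N$. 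This reduces the whole lemma to Aldous for $X_\cdot^N(G)$ alone, and the only moment inputs needed are the fixed-time estimates of Lemma~\ref{lem:mom-bounds}.
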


\begin{proof}
It is well known that the sequence $$\left\{ \left( X_{\cdot}^N , \int_0^{\cdot} \pi_s^N ds \right) \in D([0,T], {\mc M} ) \times D([0,T], {\mc M}^+ )\, ;\, N \ge 1\right\}$$ is tight if and only if the sequence $$\left\{ \left( X_{\cdot}^N (G) , \int_0^{\cdot} \pi_s^N (H) ds \right) \in D([0,T], \RR) \times D([0,T], \RR) \, ;\, N \ge 1\right\}$$ is tight for every $G,H \in C_0^2 (\RR)$. 

By Aldous criterion for tightness in $D([0,T], {\mathbb R})^2$, it is sufficient to show that

\begin{enumerate}
\item For every $t \in [0,T]$ and every $\varepsilon >0$, there exists a finite constant $A>0$ such that
$$\sup_N {\mathbb P}_{\mu^N} \left( \left| Y_t^N (G) \right| \ge A \right) \le \varepsilon$$

\item For every $\delta>0$,
$$\lim_{\varepsilon \to 0} \limsup_{N \to \infty} \sup_{ \tau \in \Theta , \theta \le \varepsilon} {\mathbb P}_{\mu^N} \left[ \left| Y^N_{\tau + \theta}(G) -Y_{\tau}^N (G)\right|\ge \delta\right]=0$$
where $\Theta$ is the set of all stopping times bounded by $T$.
\end{enumerate}
for  $Y_{\cdot}^N (G)=X_{\cdot}^N (G)$ and $Y_{\cdot}^N (G)=\int_0^{\cdot} \pi_s^N (G)ds$.

Since $G$ has compact support, there exists a constant $K>0$ (independent of $t$ and $N$) such that
\begin{eqnarray}
\label{eq:compa0}
{\mathbb E}_{\mu^N} \left[\left| {\bar \gamma} \langle \pi_t^N, G \rangle -X_t^N (G) \right| \right]\\
\le \left( N^{1/4} \sup_{x \in \ZZ} |{\bar \gamma} G(x./N) -(T_{\gamma,\ell} G) (x) | \right) \, {\mathbb E}_{\mu^N} \left[ \cfrac{1}{N^{5/4}} \sum_{|x| \le K  N^{5/4}} e_x (t) \right] \nonumber
\end{eqnarray}
and consequently
\begin{eqnarray}
\label{eq:compa}
{\mathbb E}_{\mu^N} \left[\int_0^T dt  \left| {\bar \gamma} \langle \pi_t^N, G \rangle -X_t^N (G) \right| \right]\\
\le \left( N^{1/4} \sup_{x \in \ZZ} |{\bar \gamma} G(x./N) -(T_{\gamma,\ell} G) (x) | \right)\int_0^T dt \, {\mathbb E}_{\mu^N} \left[ \cfrac{1}{N^{5/4}} \sum_{|x| \le K  N^{5/4}} e_x (t) \right] \nonumber
\end{eqnarray}
By Lemma \ref{lem:JL} and Lemma \ref{lem:mom-bounds}, the right-hand side of (\ref{eq:compa0}) (resp. of (\ref{eq:compa})) vanishes as $N \to \infty$. Hence, it is sufficient to show Aldous criterion for $Y_{\cdot}^N (G)=X_{\cdot}^N (G)$ and for $Y_{\cdot}^N (G)=\int_0^{\cdot} X_s^N (G)ds$. 


From the definition of the Skorohod topology, it is easy to show that the application $\Phi$ from $D([0,T],\RR)$ onto itself defined by
$$\Phi: x:= \left\{ x(t)\, ; \, 0 \le t \le T \right\} \to \Phi (x) := \left\{ \int_0^t x(s) ds \, ; \, 0 \le t \le T \right\}$$
is continuous. Thus, if $(X_{\cdot}^N (G))_N$ is tight, then $(\int_0^{\cdot} X_{s}^N (G))_{N}$ is tight.

Therefore it just remains to show Aldous criterion for $Y_{\cdot}^N (G)=X_{\cdot}^N (G)$.

{\textit{Proof of (1) for $X^N_{\cdot} (G)$:}}

\begin{equation*}
{\mathbb P}_{\mu^N} \left[ |X_t^{N} (G)| \ge A \right] \le \cfrac{1}{A}\, {\mathbb E}_{\mu^N} \left(\cfrac{1}{N} \sum_{x \in \ZZ} \left| (T_{\gamma,\ell} G)(x)\right | e_x (t) \right)  
\end{equation*}
We write $(T_{\gamma,\ell} G ) (x) = ((T_{\gamma,\ell} G ) (x)-{\bar \gamma} G (x/N)) + {\bar \gamma} G(x/N)$ and we get that
\begin{eqnarray*}
{\mathbb P}_{\mu^N} \left[ |X_t^{N} (G)| \ge A \right] &\le& \cfrac{1}{A}\, {\mathbb E}_{\mu^N} \left(\cfrac{1}{N } \sum_{|x| \le K N^{5/4}} \left| (T_{\gamma,\ell} G)(x) -{\bar \gamma} G(x/N) \right | e_x (t) \right) \\
&+& \cfrac{{\bar \gamma}}{A}\, {\mathbb E}_{\mu^N} \left(\cfrac{1}{N} \sum_{|x| \le K N} \left| G(x/N) \right | e_x (t) \right) 
\end{eqnarray*}
The first term on the right-hand side of the previous inequality can be bounded above by the right-hand side of (\ref{eq:compa0}), which vanishes. By Lemma \ref{lem:mom-bounds}, the second term is bounded above by $C/A$ with a constant $C$ independent of $N$. Therefore, the first condition is satisfied.

{\textit {Proof of (2) for $X^N_{\cdot} (G)$}:}

Recall the decomposition (\ref{eq:decompo}). In order to estimate the term 
$${\mathbb E}_{\mu^N} \left[ \left| U_{\tau+\ve}^N (G) -U_{\tau} (G) \right| \right]$$
we observe that $|B_N^G (x/N)|$ is bounded above by 
\begin{eqnarray*}
C \left[ {\bf 1}_{|x| \le KN} +\cfrac{\left|T_{\gamma,G}\right|}{\left|T_{\gamma,g}\right|} \left( \ell^{-2} + (N\ell^{-3})\right){\bf 1}_{\{x/(N\ell) \in [1-2\theta, 1+2 \theta] \cup [-2\theta, 2\theta]\}}\right]
\end{eqnarray*}
where $C,K$ are constants depending on $\theta$ and $G$ but not on $N$. By Schwarz inequality,  we are reduced to estimate
\begin{equation*}
 {\mathbb E}_{\mu^N} \left[ \int_{0}^{T+ \varepsilon} ds \; \cfrac{1}{N} \sum_{|x| \le 2 K N} e_x (s) \right]
\end{equation*}
and
\begin{equation*}
 \cfrac{\left|T_{\gamma,G}\right|}{\left|T_{\gamma,g}\right|} \left( \cfrac{1}{\ell^2} + \cfrac{1}{N\ell^3} \right){\mathbb E}_{\mu^N} \left[ \int_{0}^{T + \varepsilon} ds \; \cfrac{1}{N} \sum_{|x| \le (1+3\theta) N\ell} e_x (s) \right]
\end{equation*}
By Lemma \ref{lem:mom-bounds}, the first term is of order one. It is not difficult to show that  $\liminf_{N \to \infty} T_{\gamma,g} >0$, and Lemma \ref{lem:JL} gives $\ell T_{\gamma,G} \to 0$. Thus, by Lemma \ref{lem:mom-bounds}, the second one vanishes as $N$ goes to infinity.

The two last terms of (\ref{eq:decompo}) are given by
\begin{eqnarray*}
&\int_0^t ds \,{\mathcal L} \,\left( \sum_{x} \left[ (\nabla_N G )(x/N) -\cfrac{T_{\gamma,G}}{T_{\gamma,g}}( \nabla_N g)(x/N)\right]\eta_x \eta_{x+1}\right)(s)\\
&=V_t^N (G) +M_t^N (G)
\end{eqnarray*}
By using Lemma \ref{lem:JL} and Lemma \ref{lem:mom-bounds}, similar estimates as before show that
\begin{equation*}
\lim_{N \to \infty} \sup_{t \in [0,T+\varepsilon]} {\mathbb E}_{\mu^N} \left[ \left|V_{t}^N (G) \right| \right] =0
\end{equation*}

By computing the quadratic variation of the martingale $M^N (G)$, one obtains that (we recall that $\sup_x \gamma_x \le \gamma_+ $)
\begin{eqnarray*}
&{\mathbb E}_{\mu^N} \left[ \left( M_{\tau +\varepsilon}^N (G) -M^N_{\tau}(G)\right)^2\right]&\\
\le& \cfrac{2\gamma_+ }{N^2} \; {\mathbb E}_{\mu^N} \left[ \int_{\tau}^{\tau +\varepsilon} ds \sum_x  \left\{(\nabla_N G )(x/N) -\cfrac{T_{\gamma,G}}{T_{\gamma,g}}( \nabla_N g)(x/N) \right\}^2 e_s (x) e_{s} (x+1)\right]&
\end{eqnarray*}

Observe that 
$$\left\{(\nabla_N G )(x/N) -\cfrac{T_{\gamma,G}}{T_{\gamma,g}}( \nabla_N g)(x/N) \right\}^2 \le C {\bf 1}_{|x| \le K N}+\ell^{-2} (T_{\gamma,G}/T_{\gamma,g})^2 {\bf 1}_{|x| \le K N^{5/4}}$$

Since $\liminf_{N \to \infty} T_{\gamma,g} >0$ and $\ell T_{\gamma,G} \to 0$, from Lemma \ref{lem:mom-bounds}, we get
\begin{equation*}
\sup_{s \ge 0} \; {\mathbb E}_{\mu^N}\left\{\cfrac{1}{N^2} \sum_{|x| \le KN} e_x^2 (s)\right\} 
\end{equation*}
and
\begin{equation*}
\sup_{s \ge 0} {\mathbb E}_{\mu^N}\left\{\cfrac{1}{N^3} \sum_{|x| \le KN^{5/4}} e_x^2 (s)\right\} 
\end{equation*}
go to $0$ with $N$ (and are in particular bounded above by a constant independent of $N$).
\end{proof}

\begin{Lemma}
Let $(\alpha,\beta) \in {\mc M}  \times {\mc M}^+ $ be a limit point of the sequence 
$$\left\{ \left( X_{\cdot}^N , \int_0^{\cdot} \pi_s^N ds \right) \in D([0,T], {\mc M} ) \times D([0,T], {\mc M}^+ ) \, ;\, N \ge 1\right\}.$$ 

For every $G \in C_0^{2} (\RR)$ and every $t \in [0,T]$, we have
\begin{equation*}
\alpha_{t} (G) - \alpha_0 (G) = {\bar \gamma}^{-1} \int_{0}^{t} \alpha_s (\Delta G) ds, \quad \beta_t = {\bar \gamma}^{-1} \int_{0}^{t} \alpha_s ds
\end{equation*}
\end{Lemma}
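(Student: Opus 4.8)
The plan is to pass to the limit in the decomposition (\ref{eq:decompo}) along a subsequence realizing $(\alpha,\beta)$ as a limit point, after showing that every term on the right-hand side except the one driven by the discrete Laplacian is negligible, and to identify $\beta$ in terms of $\alpha$ by means of the comparison estimates (\ref{eq:compa0})--(\ref{eq:compa}). A useful preliminary observation is that a flip at site $x$ leaves $e_x=\eta_x^2/2$ unchanged, so the trajectories $t\mapsto X_t^N(G)$ are continuous; the same holds trivially for $t\mapsto\int_0^t\pi_s^N(H)\,ds$. Hence $\alpha$ and $\beta$ have continuous paths and, along the chosen subsequence, $X_t^N(G)\to\alpha_t(G)$ and $\int_0^t\pi_s^N(H)\,ds\to\beta_t(H)$ for every fixed $t\in[0,T]$ and every $G,H\in C_0^2(\RR)$, with no Skorohod subtleties.

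First I would identify $\beta$ from $\alpha$. Integrating (\ref{eq:compa0}) in time, Lemma \ref{lem:JL} and Lemma \ref{lem:mom-bounds} give $\EE_{\mu^N}\big[\int_0^T|{\bar\gamma}\,\pi_t^N(G)-X_t^N(G)|\,dt\big]\to0$, so $\bar\gamma\int_0^t\pi_s^N(G)\,ds-\int_0^tX_s^N(G)\,ds\to0$ in $L^1(\PP_{\mu^N})$ for every $t$. Since the map $\Phi:x\mapsto\big(\int_0^\cdot x(s)\,ds\big)$ is continuous on $D([0,T],\RR)$, we have $\int_0^tX_s^N(G)\,ds\to\int_0^t\alpha_s(G)\,ds$ along the subsequence, while $\int_0^t\pi_s^N(G)\,ds\to\beta_t(G)$; passing to the limit yields $\bar\gamma\,\beta_t(G)=\int_0^t\alpha_s(G)\,ds$, that is, the second identity $\beta_t=\bar\gamma^{-1}\int_0^t\alpha_s\,ds$.

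Next comes the core step, the analysis of $U^N_t(G)$. Writing $B_N^G(x/N)=(\Delta_NG)(x/N)-\tfrac{T_{\gamma,G}}{T_{\gamma,g}}(\Delta_Ng)(x/N)$, the correction term is controlled using $|(\Delta_Ng)(x/N)|\lesssim\ell^{-2}$ on a set of $O(N\ell)$ sites together with $\liminf_N T_{\gamma,g}>0$ and $\ell\,T_{\gamma,G}\to0$ (Lemma \ref{lem:JL}) and the bound $\EE_{\mu^N}\big[\tfrac1N\sum_{|x|\lesssim N\ell}\big(e_x(s)+\tfrac12|\eta_{x-1}(s)\eta_{x+1}(s)|\big)\big]=O(\ell)$ (Lemma \ref{lem:mom-bounds}); this makes that contribution to $U_t^N(G)$ vanish in $L^1(\PP_{\mu^N})$. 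In the remaining term I would replace $(\Delta_NG)(x/N)$ by $(\Delta G)(x/N)$, the error being bounded for each $s$ by $\|\Delta_NG-\Delta G\|_\infty\,\tfrac1N\sum_{|x|\le KN}\big(e_x(s)+\tfrac12|\eta_{x-1}(s)\eta_{x+1}(s)|\big)$, which tends to $0$ in $L^1$ because $\|\Delta_NG-\Delta G\|_\infty\to0$ for $G\in C_0^2(\RR)$ and the empirical sum has bounded expectation by Lemma \ref{lem:mom-bounds}. There remains $\int_0^t ds\,\tfrac1N\sum_x(\Delta G)(x/N)\big(e_x(s)+\tfrac12\eta_{x-1}(s)\eta_{x+1}(s)\big)$: the first half is exactly $\int_0^t\pi_s^N(\Delta G)\,ds$, while the second half vanishes in $L^1(\PP_{\mu^N})$ by the replacement (one-block) lemma of Section \ref{sec:one}, since the local function $\eta_{-1}\eta_1$ has zero average under every Gaussian product measure $\mu_\beta$, so its one-block limit is $0$. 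Thus $U_t^N(G)=\int_0^t\pi_s^N(\Delta G)\,ds+o(1)$ in probability.

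Finally I would conclude: from the proof of tightness we already have $V_t^N(G)\to0$ in $L^1$ and $\EE_{\mu^N}\big[(M_t^N(G))^2\big]\to0$, so (\ref{eq:decompo}) together with the previous paragraph gives $X_t^N(G)-X_0^N(G)=\int_0^t\pi_s^N(\Delta G)\,ds+o(1)$ in probability. Passing to the limit along the subsequence, the left-hand side converges to $\alpha_t(G)-\alpha_0(G)$ and $\int_0^t\pi_s^N(\Delta G)\,ds$ converges to $\beta_t(\Delta G)$ (as $\Delta G\in C_0(\RR)$), whence $\alpha_t(G)-\alpha_0(G)=\beta_t(\Delta G)$; inserting the identity $\beta_t(\Delta G)=\bar\gamma^{-1}\int_0^t\alpha_s(\Delta G)\,ds$ established in the second paragraph yields the first identity. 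I expect the only genuinely hard input to be the replacement (one-block) estimate of Section \ref{sec:one}, where the degeneracy of the symmetric part $\mathcal S$ and the non-reversibility have to be confronted via the ideas borrowed from \cite{FFL}; within the present lemma everything else is routine bookkeeping of error terms using Lemma \ref{lem:JL} and Lemma \ref{lem:mom-bounds}.
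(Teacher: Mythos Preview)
Your proposal is correct and follows essentially the same route as the paper: use the tightness estimates to kill $V_t^N(G)$, $M_t^N(G)$ and the $\Delta_N g$ correction, invoke the one-block Lemma~\ref{lem:1b} to remove the $\eta_{x-1}\eta_{x+1}$ term so that $U_t^N(G)$ reduces to $\int_0^t\pi_s^N(\Delta G)\,ds+o(1)$, and identify $\beta_t=\bar\gamma^{-1}\int_0^t\alpha_s\,ds$ via (\ref{eq:compa}). Your version is just more explicit (e.g.\ the continuity of $t\mapsto X_t^N(G)$ from flip-invariance of $e_x$, the replacement of $\Delta_N G$ by $\Delta G$), and you should note that Lemma~\ref{lem:1b} is stated as convergence in probability rather than in $L^1$, which is all you need and what you actually use in the concluding paragraph.
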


\begin{proof}

In the proof of the tightness of $X_{\cdot}^N$ we have seen that the term
$$ {\mathbb E}_{\mu^N} \left[\int_0^t ds \; {\mathcal L} \left( \sum_{x} \left[ (\nabla_N G )(x/N) -\cfrac{T_{\gamma,G}}{T_{\gamma,g}}( \nabla_N g)(x/N)\right]\eta_x \eta_{x+1}\right)(s) \right]$$
and the term
\begin{equation*}
 {\mathbb E}_{\mu^N} \left[ \int_{0}^{t} ds \; \cfrac{1}{N} \sum_{x \in \ZZ} \cfrac{T_{\gamma,G}}{T_{\gamma,g}}( \Delta_N g)(x/N)\left(e_x (s) +\cfrac{1}{2}\eta_{x-1} (s)\eta_{x+1} (s)  \right)\right]
\end{equation*}
vanish as $N \to \infty$. By using Lemma \ref{lem:1b}, it implies that
\begin{equation*}
\alpha_{t} (G) - \alpha_0 (G) = \beta_t (\Delta G) 
\end{equation*}
Moreover, by (\ref{eq:compa}), we have
\begin{equation*}
\beta_t ={\bar \gamma}^{-1} \int_{0}^{t} \alpha_s ds
\end{equation*}
\end{proof}

\begin{Lemma}
Any limit point $\beta$ of the sequence $\{ \int_0^{\cdot} \pi_s^N ds \in D([0,T], {\mc M^+} )\, ; \, N \ge 1\}$ is such that, for any $t \in [0,T]$, $\beta_t$ is absolutely continuous with respect to the Lebesgue measure on $\RR$. 
\end{Lemma}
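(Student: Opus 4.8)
The plan is to show that for any limit point $\beta$ and any $t\in[0,T]$, the measure $\beta_t$ charges no set of Lebesgue measure zero, by controlling its density through an $\mathbb{L}^2$-type a priori bound. The key observation is that $\beta_t$ is a limit of $\int_0^t \pi_s^N\, ds$, which is a time average of the empirical energy measure. First I would fix a smooth nonnegative test function $G\in C_0^2(\RR)$ and note that $\int_0^t \pi_s^N(G)\,ds = \frac{1}{N}\sum_{x}G(x/N)\int_0^t e_x(s)\,ds$. The idea is to estimate, for any interval $[a,b]$, the quantity $\EE_{\mu^N}\big[\int_0^t \pi_s^N(\mathbf 1_{[a,b]})\,ds\big]$ in terms of $(b-a)$ plus a term that vanishes as the interval shrinks. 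Concretely, using the entropy inequality (\ref{eq:enti}) with reference measure $\mu_{\bar\beta}$ and the entropy bound (\ref{eq:enttime}), together with the explicit Gaussian structure of $\mu_{\bar\beta}$, one bounds $\EE_{\mu^N_s}\big[\frac{1}{N}\sum_{x\in I_N} e_x\big]$ for a block $I_N$ of $\approx cN$ sites by something like $C_0/\lambda + \lambda^{-1}\log\EE_{\mu_{\bar\beta}}[\exp(\lambda \frac{1}{N}\sum_{x\in I_N}e_x)]$; choosing $\lambda$ proportional to $N$ and computing the Gaussian Laplace transform yields a bound of the form $c + C|I_N|/N$ for the normalized energy in a macroscopic box of width $|I_N|/N$, hence for the mass $\beta_t([a,b])$ one gets $\beta_t([a,b]) \le C\, t\,(b-a) + \text{(boundary correction)}$.

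More precisely, the cleanest route is: for each $\delta>0$ cover $\RR$ by intervals, and use superadditivity/translation-invariance of the entropy estimate to get a uniform bound $\EE_{\mu^N}\big[\int_0^t \frac{1}{N}\sum_{x/N\in[a,b]}e_x(s)\,ds\big]\le C(b-a) + o_N(1)$ valid for all intervals $[a,b]$ with $b-a$ bounded (the key point is that the constant $C$ does not depend on the position or, for $b-a$ small, degenerates linearly in $b-a$; this uses that $\log\int e^{\lambda e_0}d\mu_{\bar\beta}$ is finite for $\lambda<\bar\beta$ and behaves linearly in the number of sites). Passing to the limit along the subsequence defining $\beta$, and using that $\mathbf 1_{[a,b]}$ can be sandwiched between continuous functions, one obtains $\beta_t([a,b])\le C(b-a)$ for every interval, which by a standard covering argument (Vitali-type, or outer regularity of Radon measures) implies $\beta_t(A)=0$ whenever $A$ has Lebesgue measure zero, i.e. $\beta_t \ll dx$. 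One should double-check that $\beta_t$ is well-defined for each fixed $t$ as opposed to only $ds$-a.e.\ $t$: since $\int_0^\cdot \pi_s^N ds$ has uniformly (in $N$) Lipschitz trajectories in a suitable weak sense — because $\pi_s^N$ has uniformly bounded expected mass on compacts by Lemma \ref{lem:mom-bounds} — the limit $\beta_\cdot$ has continuous trajectories and $\beta_t$ is unambiguous.

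The main obstacle is obtaining the a priori energy bound uniformly over all intervals with the right linear-in-$(b-a)$ scaling, rather than merely a bound on compact sets; this is exactly where the hypotheses (\ref{eq:supen})--(\ref{eq:entinit}) and the Gaussian explicitness of $\mu_{\bar\beta}$ are essential, and where one must be careful that tails at spatial infinity do not contribute (this is controlled by the moment estimates of Section \ref{sec:mom-bounds}, i.e. Lemma \ref{lem:mom-bounds}). A secondary technical point is the exchange of limits: one needs that $\liminf$ of the bound survives weak convergence, which is handled by approximating indicators from above by $C_0^2$ functions and using that the upper bound is continuous in the test function. Modulo these points the argument is routine, and in fact this lemma together with the preceding one identifies $\alpha_s = \frac{1}{2}\rho_s\,dx$ with $\rho_s\in\mathbb{L}^1$ and upgrades the weak formulation $\alpha_t(G)-\alpha_0(G)=\bar\gamma^{-1}\int_0^t\alpha_s(\Delta G)\,ds$ to the heat equation (\ref{eq:chaleur}).
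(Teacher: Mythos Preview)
Your approach has a genuine gap at the key step. The entropy inequality applied to the energy in a single macroscopic block $I_N$ of width $|I_N|/N \approx (b-a)$ gives, with $\lambda=\lambda_0 N$ and $\lambda_0<\bar\beta$,
\[
\EE_{\mu_s^N}\Big[\tfrac1N\sum_{x\in I_N}e_x\Big]\;\le\;\frac{C_0}{\lambda_0}\;+\;\frac{|I_N|}{N}\cdot\frac{1}{2\lambda_0}\log\frac{\bar\beta}{\bar\beta-\lambda_0}.
\]
The second term is indeed linear in $(b-a)$, but the first term $C_0/\lambda_0$ is a fixed positive constant that does \emph{not} depend on the interval and cannot be made small by any choice of $\lambda_0\in(0,\bar\beta)$ (indeed $C_0/\lambda_0>C_0/\bar\beta$). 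The bound you actually obtain is therefore $\beta_t([a,b])\le t\,C_0/\bar\beta+C't(b-a)$, and the constant term survives as $(b-a)\to 0$. This is useless for a Vitali covering argument: covering a null set $A$ by many small intervals produces an infinite sum of constants. The underlying reason is that $H(\mu_s^N|\mu_{\bar\beta})\le C_0N$ is a \emph{global} entropy bound on the whole line; when tested against a localized observable you still pay the full entropy cost, which does not localize.

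The paper circumvents this by testing not against indicator functions of intervals but against a \emph{global} convex functional, the large-deviations rate function $I_0(\pi)=\int h(\pi(u))\,du$ for $\pi(du)=\pi(u)\,du$ and $I_0(\pi)=+\infty$ otherwise, where $h$ is the Legendre transform of $\log M_{\bar\beta}$. One shows via the entropy inequality and Laplace--Varadhan that any weak limit $R^*$ of the law of $t^{-1}\int_0^t\pi_s^N\,ds$ satisfies $\int I_0(\pi)\,dR^*(\pi)\le C_0<\infty$; since $I_0$ is infinite on singular measures, this forces $\pi\ll dx$ $R^*$-a.s. The point is that the entropy cost $C_0$ is now balanced against a functional that is itself extensive in space, so the bookkeeping comes out right. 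Your proposal could only be rescued by a \emph{local} entropy bound $H_{\Lambda}(\mu_s^N|\mu_{\bar\beta})\le C_0|\Lambda|$ uniform in $s$ and in the box $\Lambda$, which is not available here for the time-evolved measure.
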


\begin{proof}
Fix a positive time $t$ and let $R_{\mu^N}$ be the probability measure on ${\mc M}^+$ given by
\begin{equation*}
R_{\mu^N} (A) = {\mathbb P}_{\mu^N} \left\{ \cfrac{1}{t} \int_0^t \pi_s^N ds \in A  \right\}
\end{equation*} 
for every Borel subset $A$ of ${\mc M}^+$. Let $J:{\mc M}^+ \to [0, +\infty)$ be a continuous and bounded function. By the entropy inequality (\ref{eq:enti}) and by using (\ref{eq:enttime}) we have
\begin{equation}
\label{eq:LVV}
\int J(\pi) dR_{\mu^N} (\pi)  \le C_0 +\cfrac{1}{N} \log \left( \int e^{N J(\pi) }dR_{\mu_{\bar \beta}} (\pi) \right)
\end{equation} 
By the Laplace-Varadhan theorem, the second term on the right hand side converges as $N$ goes to infinity to
\begin{equation*}
\sup_{\pi \in {\mc M}^+} \left[ J(\pi) -I_0 (\pi)\right]
\end{equation*}
where $I_0$ is the large deviations rate function for the random measure $\pi$ under $R_{\mu_{\bar \beta}}$. It is a simple exercise to compute the rate function $I_0$. We have
\begin{equation*}
I_{0} (\pi) = \sup_{f \in C_0 (\RR)} \left\{ \int f(u) \pi (du) - \int \log M_{\bar \beta} (f(u)) du  \right\}
\end{equation*}
where $M_{\bar \beta} (\alpha)$ is the Laplace transform of $\eta_0^2 /2$ under ${\mu}_{\bar \beta}$:
$$ M_{\bar \beta} (\alpha)=\mu_{\bar \beta} (e^{\alpha \eta_0^2 /2}) = \sqrt{{\bar \beta}/({\bar \beta - \alpha})}$$
if $\alpha < \bar \beta$, and $+ \infty$ otherwise.

The function $I_0$ also takes the simple form
\begin{equation*}
I_0 (\pi) =
\begin{cases}
\int_{\RR} h(\pi(u)) du \; {\rm{if}} \; \pi (du) =\pi(u) du,\\
+ \infty \; \rm{otherwise}
\end{cases}
\end{equation*}
where the Legendre transform $h$ of $M_{\bar \beta}$ is given by $h(\alpha)= {\bar \beta} \alpha -{1}/2 -{1/2} \log (2 \alpha {\bar \beta}) \ge 0$ if $\alpha >0$, and $+ \infty$ otherwise. 

Let $(f_k)_{k \ge 1}$ be a dense sequence in $C_0 (\RR)$ with $f_1$ being the function identically equal to $0$. Then $I_0$ is the increasing limit of $J_k \ge 0$ defined by
 \begin{equation*}
J_{k} (\pi) = \sup_{1\le j \le k} \left\{ \int f_j (u) \pi (du) - \int \log M_{\bar \beta} (f_j (u)) du  \right\} \wedge k
\end{equation*}

By using (\ref{eq:LVV}) we have
\begin{equation*}
\limsup_{N \to \infty} \int J_k (\pi) dR_{\mu^N} (\pi) \le C_0
\end{equation*}
for each $k$. Since $J_k$ is a lower semi-continuous function, any limit point $R^*$ of $R_{\mu^N}$ is such that
\begin{equation*}
\int J_k (\pi) dR^* (\pi) \le C_0
\end{equation*}
By the monotone convergence theorem, we have $\int I_0 (\pi) dR^{*} (\pi) \le C_0 < +\infty$. Since $I_0 (\pi)$ is equal to $+ \infty$ if $\pi$ is not absolutely continuous with respect to the Lebesgue measure, it implies that 
$$R^* \left\{ \pi; \pi (du) =\pi (u) du \right\}=1$$
and the lemma is proved.
\end{proof}

We conclude as follows. Let $(\alpha,\beta)$ be a limit point of $(X^N_{\cdot} (G), \int_0^{\cdot} \pi_s^N (G))_{N\ge 1}$. From the equation 
$$\alpha_\cdot (G)  - \alpha_0 (G) ={\bar \gamma}^{-1} \int_0^{\cdot} \alpha_s (\Delta G) ds$$ 
we see that $\alpha$ is time continuous. Moreover, if $A$ is a subset of $\RR$ with zero Lebesgue measure, then $\beta_t (A) =0$ for any $t \in [0,T]$. This implies that $\alpha_t (A)=0$ for any $t \in [0,T]$, i.e. that $\alpha_t$ is absolutely continuous with respect to the Lebesgue measure on $\RR$.

By uniqueness of weak solution to the heat equation, we have that $2 \alpha$ is the Dirac mass concentrated on the (smooth) solution of the heat equation $(t,u) \in [0,T] \times \RR \to {\tilde T}_{t} (u) $ starting from ${\bar \gamma}\beta_0^{-1}$: ${\partial_t} {\tilde T}_t = {\bar \gamma}^{-1} \Delta {\tilde T}_t, \; {\tilde T}_0 = {\bar \gamma} {\beta_0}^{-1}$.

Hence we conclude that $\{ X^N_{\cdot} \in D([0,T], {\mc M}) \, ; \, N \ge 1\}$ converges in distribution to $({\tilde T}_\cdot (u) /2) \, du$. Since the limit is continuous in time we have that $\{X_t^N\, ; \, N \ge 1 \}$ converges in distribution to the deterministic limit $({\tilde T}_t (u) /2) \,  du$. Since convergence in distribution to a deterministic variable implies convergence in probability, this implies that 
$$\lim_{N \to \infty} {\mathbb P}_{\mu^N}\left[ \left |X_{t}^N (G) -\cfrac{1}{2}\int {\tilde T}_t (u) G(u) du \right| \ge \varepsilon \right] =0$$
We use again (\ref{eq:compa0})  and the fact that ${\bar \gamma}^{-1} {\tilde T}_t =T_t$ to get
$$\lim_{N \to \infty} {\mathbb P}_{\mu^N} \left[ \left |\pi_t^N (G) -\cfrac{1}{2} \int T_t (u) G(u) du \right| \ge \varepsilon \right] =0$$
and the theorem is proved.

\section{One-block estimate}
\label{sec:one}

The aim of this section is to prove the following so-called one block estimate (\cite{KL}).

\begin{Lemma}[One block estimate] 
\label{lem:1b}
For any $G \in C_0^2 (\RR)$, any $t \ge 0$, and any $\delta>0$,  
\begin{equation*}
\lim_{N \to \infty} {\mathbb P}_{\mu^N} \left[ \left| \cfrac{1}{N} \sum_{x \in \ZZ} (\Delta_N G) (x/N) \int_0^t ds\,  \eta_{x-1} (s) \eta_{x+1}(s) \right| \ge \delta \right] =0
\end{equation*}
\end{Lemma}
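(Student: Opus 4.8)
The plan is to reduce the statement to an $L^1(\PP_{\mu^N})$ estimate --- convergence in $L^1$ implies convergence in probability by Markov's inequality --- and to exploit that the relevant quadratic monomials are eigenfunctions of the symmetric part of the generator. Set
\[
V_N(\eta)=\frac1N\sum_{x\in\ZZ}(\Delta_N G)(x/N)\,\eta_{x-1}\eta_{x+1},
\]
so the quantity to control is $\int_0^t V_N(\eta(s))\,ds$. A direct computation gives, for $w_x:=\eta_{x-1}\eta_{x+1}$, that $({\mathcal S}w_x)=-2(\gamma_{x-1}+\gamma_{x+1})w_x$; hence $V_N$ lies in the range of ${\mathcal S}$, with explicit preimage $g_N=-\frac1N\sum_x\frac{(\Delta_N G)(x/N)}{2(\gamma_{x-1}+\gamma_{x+1})}\,w_x$, so that $V_N={\mathcal S}g_N={\mathcal L}g_N-{\mathcal A}g_N$. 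By Dynkin's formula for the accelerated process,
\[
\int_0^t V_N(\eta(s))\,ds=\frac1{N^2}\big(g_N(\eta(t))-g_N(\eta(0))-M^N_t\big)-\int_0^t({\mathcal A}g_N)(\eta(s))\,ds,
\]
with $M^N$ a martingale. The boundary terms are $O(N^{-2})$ in $L^1(\PP_{\mu^N})$ by Lemma \ref{lem:mom-bounds} (they are $N^{-1}$ times a sum over a box of size $\propto N$ of bounded coefficients against $\eta_{x-1}\eta_{x+1}$), and the bracket of $M^N$ --- only the Glauber part contributes --- equals $N^2\int_0^t\sum_x\gamma_x(g_N(\eta^x(s))-g_N(\eta(s)))^2\,ds$, which, a flip at $x$ changing only the two monomials indexed $x\pm1$ in $g_N$, is bounded by $C\int_0^t\sum_{|x|\le KN}(\eta_x\eta_{x+2})^2(s)\,ds$, of order $N$ in expectation by the second-moment bound of Lemma \ref{lem:mom-bounds}; hence $N^{-2}M^N_t\to0$ in $L^2$. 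The problem is reduced to showing $\int_0^t({\mathcal A}g_N)(\eta(s))\,ds\to0$ in probability.

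The point now is that ${\mathcal A}(\eta_{x-1}\eta_{x+1})=(\eta_x-\eta_{x-2})\eta_{x+1}+\eta_{x-1}(\eta_{x+2}-\eta_x)$ involves \emph{only} off-diagonal monomials (no $e_x=\eta_x^2/2$ appears, because the sites $x-1,x+1$ are at distance two). After one summation by parts, ${\mathcal A}g_N=\frac1N\sum_x(\nabla a_x)(\eta_x\eta_{x+1}+\eta_{x-1}\eta_{x+2})$ with $a_x=\frac{(\Delta_N G)(x/N)}{2(\gamma_{x-1}+\gamma_{x+1})}$; this is again a normalized sum of off-diagonal quadratic monomials, hence again in the range of ${\mathcal S}$. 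Since those monomials are mutually orthogonal in $L^2(\mu_{\bar\beta})$ (product measure), with second moment $\bar\beta^{-2}$, and $\gamma_-\le\gamma_x\le\gamma_+$, one computes the $H_{-1}$ norm relative to the Glauber part explicitly, $\|{\mathcal A}g_N\|_{-1,{\mathcal S}}^2:=\langle {\mathcal A}g_N,(-{\mathcal S})^{-1}{\mathcal A}g_N\rangle_{\mu_{\bar\beta}}=O(1/N)$, and the same computation gives $\|V_N\|_{-1,{\mathcal S}}^2=O(1/N)$ directly (this gain of one power of $N$ over the naive size is exactly the decorrelation produced by the flips). I would then use the entropy inequality (\ref{eq:enti}) with (\ref{eq:enttime}) and a Feynman--Kac/Rayleigh--Ritz bound: for $a>0$,
\[
\EE_{\mu^N}\Big[\int_0^t({\mathcal A}g_N)(\eta(s))\,ds\Big]\le\frac1a H(\mu^N|\mu_{\bar\beta})+\frac{t}{a}\sup_{\|f\|_{L^2(\mu_{\bar\beta})}=1}\big\{a\langle ({\mathcal A}g_N) f^2\rangle_{\mu_{\bar\beta}}-N^2\langle(-{\mathcal S})f,f\rangle_{\mu_{\bar\beta}}\big\},
\]
the symmetric part appearing in the last term because the antisymmetric ${\mathcal A}$ contributes nothing to the symmetrized generator --- so this bound is available \emph{even though the process is non-reversible}. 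Granting the estimate $|\langle W f^2\rangle_{\mu_{\bar\beta}}|\le C\,\|W\|_{-1,{\mathcal S}}\,\langle(-{\mathcal S})f,f\rangle_{\mu_{\bar\beta}}^{1/2}$ for $W$ of this form, completing the square, taking $a$ of order $N^{3/2}$, and using $H(\mu^N|\mu_{\bar\beta})\le C_0N$ and $\|{\mathcal A}g_N\|_{-1,{\mathcal S}}^2=O(1/N)$, the right-hand side is $O(N^{-1/2})$; running the same argument with $-{\mathcal A}g_N$ and a Fenchel--Young step to pass from $\int$ to $|\int|$ yields $\EE_{\mu^N}[|\int_0^t{\mathcal A}g_N\,ds|]=O(N^{-1/2})\to0$. (One could equally skip the fluctuation--dissipation step and apply this estimate directly to $V_N={\mathcal S}g_N$; the decomposition above just makes the role of the moment bounds transparent.)

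The main obstacle is precisely the input bound $|\langle W f^2\rangle_{\mu_{\bar\beta}}|\lesssim\|W\|_{-1,{\mathcal S}}\langle(-{\mathcal S})f,f\rangle^{1/2}$ in the present \emph{non-compact} setting. Writing $W=(-{\mathcal S})h$ with $h=(-{\mathcal S})^{-1}W$ and using the Glauber Dirichlet-form identity, $\langle W f^2\rangle_{\mu_{\bar\beta}}=\frac12\sum_x\gamma_x\langle(h(\eta^x)-h(\eta))(f(\eta^x)^2-f(\eta)^2)\rangle_{\mu_{\bar\beta}}$, and applying Cauchy--Schwarz, one is led to control $\langle(h(\eta^x)-h(\eta))^2(f(\eta^x)+f(\eta))^2\rangle_{\mu_{\bar\beta}}$; since $h(\eta^x)-h(\eta)$ is of the order $N^{-1}\eta_x\eta_{x+2}$ --- \emph{unbounded}, the state space not being compact --- this is not covered by the textbook (bounded-spin) one-block argument, where that increment is bounded and the weight simply disappears. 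This is where the moment estimates of Lemma \ref{lem:mom-bounds}, the ideas of \cite{FFL}, and the preservation of the class of convex combinations of Gaussian measures by the dynamics enter: one restricts the variational problem to densities $f^2$ with a priori controlled covariance --- equivalently, runs the argument along the Gaussian mixtures reachable from $\mu^N$ --- for which the offending quadratic weight can be absorbed. Note that the extreme degeneracy of ${\mathcal S}$ (which is blind to the transport dynamics, so $\langle(-{\mathcal S})f,f\rangle=0$ for any function $f$ of the energies alone) does no harm here, since all the pairings that occur are against $W=(-{\mathcal S})h$ or its images, which annihilate the flip-invariant directions. Controlling this interplay of non-compactness with the degenerate noise is the heart of the proof; the rest is the standard combination of Dynkin's formula, the $H_{-1}$ bound and the entropy inequality.
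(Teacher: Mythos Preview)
Your approach is genuinely different from the paper's, and as you yourself flag, it contains a real gap. The decomposition $V_N={\mathcal S}g_N$ and the $H_{-1}$ computation $\|V_N\|_{-1,{\mathcal S}}^2=O(1/N)$ are correct, and the Dynkin/martingale reductions are fine. But the crucial inequality $|\langle W f^2\rangle_{\mu_{\bar\beta}}|\le C\,\|W\|_{-1,{\mathcal S}}\,\langle(-{\mathcal S})f,f\rangle^{1/2}$ is simply not available here: the flip increments of $h=(-{\mathcal S})^{-1}W$ are of the form $c\,\eta_{x-1}\eta_{x+1}$, and without a bound on $\sup_x|\eta_x|$ the Cauchy--Schwarz step produces an uncontrolled factor $\langle \eta_{x-1}^2\eta_{x+1}^2(f+f\circ\sigma_x)^2\rangle$. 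Your proposed remedy---``restrict the variational problem to densities with controlled covariance''---does not work as stated: the Feynman--Kac/Rayleigh--Ritz bound is an \emph{upper} bound obtained by a supremum over \emph{all} $f\in L^2(\mu_{\bar\beta})$, so you are not free to restrict the test class. The Gaussian-mixture structure of $\mu_s^N$ (Lemma~\ref{lem:mom-bounds}) tells you about the \emph{actual} law of the process, not about the equilibrium variational problem you land on after the entropy inequality; the two live in different places and your argument does not connect them. So the ``heart of the proof'' that you defer is in fact the whole proof.

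The paper avoids this obstacle entirely by taking the classical one-block route in the spirit of \cite{FFL} rather than a Kipnis--Varadhan $H_{-1}$ estimate. One averages in space and time to form $\nu^N=\frac{1}{(2N+1)t}\sum_{|x|\le N}\int_0^t\tau_x\mu_s^N\,ds$, shows that the finite-box marginals $\nu_k^N$ are tight with entropy density bounded by $C_0$ (this uses only the entropy inequality and product structure of $\mu_{\bar\beta}$, no $H_{-1}$ input), and passes to a translation-invariant limit $\nu$ which is stationary for ${\mathcal L}$ and has finite entropy density. The substantial work---and this is where the \cite{FFL} ideas and the degeneracy of ${\mathcal S}$ are actually handled---is Proposition~\ref{prop:ergo}: any such $\nu$ is a mixture $\int\mu_\beta\,d\lambda(\beta)$ with $\int\beta^{-1}d\lambda<\infty$. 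Since $\mu_\beta(\eta_y\eta_{y+1})=0$, the block average $\frac{1}{2k+1}\sum_{|y|\le k}\eta_y\eta_{y+1}$ vanishes in $L^1(\nu)$ as $k\to\infty$ by a direct second-moment computation, and the lemma follows. The non-compactness is absorbed into the entropy-density machinery and the identification of the stationary measures, not into a variational bound against ${\mathcal S}$.
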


Since $G \in C^2_0 (\RR)$, we can replace $(\Delta_N G)(x/N)$ by
$$\cfrac{1}{2k+1} \sum_{|y-x| \le k} (\Delta_N G)(y/N)$$
as soon as $k \ll N$ and we are left to prove that
\begin{equation*}
\lim_{k \to \infty}\lim_{N \to \infty} {\mathbb E}_{\mu^N} \left[ \cfrac{1}{2N+1} \sum_{|x| \le N} \int_0^t ds\, \left|\cfrac{1}{2k+1} \sum_{|x-y| \le k} \eta_{y}(s)\eta_{y+1} (s) \right|  \right]=0
\end{equation*}

Given two probability measures $P,Q$ on $\Omega$ and $\Lambda$ a finite subset of $\ZZ$, $H_{\Lambda} (P | Q)$ denotes the relative entropy of the projection of $P$ on ${\mathbb R}^{\Lambda}$ with respect to the projection of $Q$ on ${\mathbb R}^{\Lambda}$. We shall denote the projection of $P$ on $\RR^{\Lambda}$ by $P |_{\Lambda}$. If $\Lambda=\Lambda_k= \{ -k,\ldots,k\}$, we use the short notation $P_k$.  

We define the space-time average of $(\mu_s^N)_{0 \le s \le t}$ by
\begin{equation*}
  \nu^{N} = \frac{1}{(2N+1)t} \sum_{|x| \le N} \int_0^t \tau_x \mu_s^N \; ds
\end{equation*}

Here $\tau_x$ denotes the shift by $x$: for any $\eta \in \Omega$, the configuration $\tau_x \eta$ is defined by $(\tau_x \eta)_{z} =\eta_{x+z}$; for any function $g$ on $\Omega$, $\tau_x g$ is the function on $\Omega$ given by $(\tau_x g) (\eta)= g(\tau_x \eta)$; for any $p \in {\mc P} (\Omega)$, $\tau_x p$ is the push-forward of $p$ by $\tau_x$. The probability measure $p$ is said to be translation invariant if $\tau_x p =p$ for any $x \in \ZZ$.

We have to show
\begin{equation}
\label{eq:finfin}
\lim_{k \to \infty}\lim_{N \to \infty} \int_{\RR^{\Lambda_k} }d{\nu}_k^N \left[  \left|\cfrac{1}{2k+1} \sum_{|y| \le k} \eta_y \eta_{y+1} \right|  \right]=0
\end{equation}

\begin{Lemma}
  For each fixed $k$, the sequence of probability measure $(\nu^N_k)_{N \ge k}$ on $\RR^{\Lambda_k}$ is tight. 
\end{Lemma}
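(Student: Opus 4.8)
The plan is to establish tightness of $(\nu_k^N)_{N \ge k}$ on $\RR^{\Lambda_k}$ by controlling a single moment, uniformly in $N$. Since $\RR^{\Lambda_k}$ is finite-dimensional, it suffices to show that
\begin{equation*}
\sup_{N \ge k} \int_{\RR^{\Lambda_k}} \Big( \sum_{|y| \le k} \eta_y^2 \Big) \, d\nu_k^N < +\infty,
\end{equation*}
because the sublevel sets $\{\sum_{|y|\le k}\eta_y^2 \le R\}$ are compact in $\RR^{\Lambda_k}$ and Markov's inequality then gives uniform tightness. Unfolding the definition of $\nu^N$, the left-hand side equals
\begin{equation*}
\frac{1}{(2N+1)t} \sum_{|x| \le N} \int_0^t ds \int \Big( \sum_{|y| \le k} \eta_{x+y}^2 \Big) \, d\mu_s^N
= \frac{1}{(2N+1)t} \int_0^t ds \sum_{|x| \le N} \sum_{|y|\le k} \mu_s^N(\eta_{x+y}^2),
\end{equation*}
so after exchanging the finite sums this is bounded by $\frac{2k+1}{(2N+1)t}\int_0^t ds \sum_{|z| \le N+k} \mu_s^N(e_z) \cdot 2$, i.e. it is controlled by a space-time average of the energy density.

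The key input is then a moment bound on the energy under $\mu_s^N$, which I expect to be exactly the content of Lemma \ref{lem:mom-bounds} (referred to repeatedly above, e.g. in the tightness proof of $X_\cdot^N$ where terms like $\frac1N\sum_{|x|\le KN} e_x(t)$ are shown to be $O(1)$). Concretely, one uses the entropy inequality (\ref{eq:enti}) together with the bound (\ref{eq:enttime}), $H(\mu_s^N|\mu_{\bar\beta}) \le C_0 N$: for a suitable $a>0$,
\begin{equation*}
\int \Big( \frac{1}{N} \sum_{|z| \le N+k} e_z \Big) d\mu_s^N \le \frac{1}{aN}\left\{ \log \mu_{\bar\beta}\Big( e^{ a \sum_{|z|\le N+k} e_z} \Big) + C_0 N \right\},
\end{equation*}
and since under the product Gaussian measure $\mu_{\bar\beta}$ the exponential moment factorizes, $\log \mu_{\bar\beta}(e^{a\sum e_z}) = \sum_{|z|\le N+k} \log M_{\bar\beta}(a)$, which is $O(N)$ provided $a < \bar\beta$. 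Dividing by $N$ gives a bound uniform in $s$ and $N$, hence a uniform bound on the space-time average after integrating $ds/t$ over $[0,t]$.

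Putting these together yields $\sup_N \int \sum_{|y|\le k}\eta_y^2 \, d\nu_k^N \le C(k,t,C_0,\bar\beta) < \infty$, and tightness follows. The only genuinely delicate point is making sure the constant $a$ in the entropy inequality can be chosen below $\bar\beta$ independently of $N$ so that the Gaussian exponential moments stay finite; this is harmless here because $\bar\beta$ is fixed. I would also remark that $\mu_s^N$ need not be translation invariant, but that is irrelevant since we only use the scalar energy bound, not any stationarity of $\nu^N$ itself. The heavier uniform moment estimates (second moments $\mu_s^N(e_z^2)$, control over non-product states via the Gaussian-combination structure mentioned in the introduction) are deferred to section \ref{sec:mom-bounds} and are not needed for this first-moment tightness statement.
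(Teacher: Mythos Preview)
Your argument is correct: the first-moment bound $\sup_N \int \sum_{|y|\le k} e_y \, d\nu_k^N < \infty$ follows exactly as you describe, by unfolding $\nu^N$ as a space-time average and invoking the energy estimate of Lemma~\ref{lem:mom-bounds} (whose proof is indeed just the entropy inequality against the product Gaussian, with $a<\bar\beta$ fixed). Tightness on the finite-dimensional space $\RR^{\Lambda_k}$ then follows by Markov's inequality.

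The paper, however, takes a different and deliberately stronger route. Instead of bounding a moment directly, it proves the block-entropy estimate
\[
H_{\Lambda_k}(\nu^N \,|\, \mu_{\bar\beta}) \;\le\; C_0\,|\Lambda_k|
\]
via a ``block decomposition'' of $\{-N,\dots,N\}$ into $|\Lambda_k|$ disjoint translates, exploiting independence under $\mu_{\bar\beta}$ and convexity of $f\mapsto\log\int e^f\,d\mu_{\bar\beta}$. The moment bound (and hence tightness) is then read off from this entropy bound via (\ref{eq:enti}). The point of the extra work is that the entropy bound survives the limit $N\to\infty$ by lower semicontinuity, giving $H_{\Lambda_k}(\nu\,|\,\mu_{\bar\beta})\le C_0|\Lambda_k|$ for any limit point $\nu$, i.e.\ finite entropy density. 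This is essential input for Proposition~\ref{prop:ergo}, which characterizes the stationary translation-invariant measures \emph{with finite entropy density}. Your direct moment argument establishes tightness but not this entropy-density control, so you would still need the paper's block argument later.
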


\begin{proof}
It is enough to prove that there exists a constant $C_k<\infty$ independent of $N$ such that
\begin{equation}
\label{eq:bounden}
  \int \sum_{i \in \Lambda_k} e_i \, d\nu_k^N \le C_k
\end{equation}

We begin to prove that
\begin{equation}
\label{eq:entropo}
H_{\Lambda_k} (\nu^N | \mu_{\bar \beta}) = H \left(\nu_k^N\, \Big|\,  \mu_{\bar \beta} |_{\Lambda_k} \right) \le C_0 |\Lambda_k|
\end{equation} 

Fix a bounded measurable function $\phi$ depending only on the sites in $\Lambda:=\Lambda_k=\{-k, \ldots,k\}$. Assume for simplicity that $2N+1=(2k+1)(2p+1)$ for some $p \ge 1$. Then we can index the elements of the set $\{-N,\ldots,N\}$ in the following way
$$\{-N, \ldots,N\}= \{ x_j +y; j=-p, \ldots, p\,;\, y \in \Lambda_k\}$$
where $x_j = 2k j +1$. Since $\phi$ depends only on the sites in $\Lambda$, it is clear that under $\mu_{\bar \beta}$, for each $y \in \Lambda$, the random variables $\left( \tau_{x_j +y} \phi \right)_{j=-p,\ldots,p}$ are i.i.d.. 

Let ${\bar \mu}_t^N=t^{-1} \int_0^t \mu_s^N ds$. By convexity of the entropy and (\ref{eq:enttime}), we have $H({\bar \mu}_t^N | \mu_{\bar \beta}) \le C_0 N$.

We write
\begin{eqnarray*}
\int \phi d\nu_k^N &=& \int \left( \cfrac{1}{2N+1} \sum_{|x| \le N} \tau_x \phi \right) d{\bar \mu}_t^N\\
&\le&\cfrac{|\Lambda|}{2N+1} H({\bar \mu}_t^N | \mu_{\bar \beta})+ \cfrac{|\Lambda|}{2N+1}\log \left( \int d\mu_{\bar \beta} \, e^{|\Lambda|^{-1} \sum_{|x| \le N} \tau_x \phi}\right)\\
&\le& C_0 |\Lambda| +  \cfrac{|\Lambda|}{2N+1}\log \left( \int d\mu_{\bar \beta} \, e^{|\Lambda|^{-1} \sum_{ y \in \Lambda} (\sum_{|j| \le p} \tau_{y+x_j} \phi)}\right)\\
&\le& C_0 |\Lambda| + \cfrac{|\Lambda|}{2N+1}|\Lambda|^{-1} \sum_{ y \in \Lambda}\log \left( \int d\mu_{\bar \beta} \, e^{ \sum_{|j| \le p} \tau_{y+x_j} \phi}\right)
\end{eqnarray*} 
where we used the entropy inequality (\ref{eq:enti}) and the convexity of the application $ f \to \log \left( \int d\mu_{\bar \beta} \, e^f \right)$. By independence, for each $y$, of $\left( \tau_{x_j +y} \phi \right)_{j=-p,\ldots,p}$ and the translation invariance of $\mu_{\bar \beta}$, we get
\begin{equation}
\int \phi d\nu_k^N \le C_0 |\Lambda| + \log \left(\int e^\phi d\mu_{\bar \beta}\right)
\end{equation} 
This implies (\ref{eq:entropo}) and, by the entropy inequality (\ref{eq:enti}), the inequality (\ref{eq:bounden}).
\end{proof}

For any $k$, let $\nu^*_k$ be a limit point of the sequence $(\nu_k^N)_{N\ge k}$. The sequence of probability measures $(\nu^*_k)_{k \ge 0}$ forms a consistent family and, by Kolmogorov theorem, there exists a unique probability measure $\nu$ on $\Omega$ such that $\nu_k=\nu^*_k$. By construction, the probability measure $\nu$ is invariant by translations.

\begin{Lemma}
There exists  $C_0$ such that for any box $\Lambda_k=\{-k, \ldots,k\}$, $k \ge 0$, 
\begin{equation}
\label{eq:dyn-678}
H_{\Lambda_k} (\nu | \mu_{{\bar \beta}}) \le C_0 |\Lambda_k| 
\end{equation}
\end{Lemma}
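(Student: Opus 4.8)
The plan is to obtain (\ref{eq:dyn-678}) from the uniform bound (\ref{eq:entropo}) already established for the approximating measures, by passing to the limit with the help of the lower semicontinuity of the relative entropy. The point is that for a \emph{fixed} reference measure the functional $P \mapsto H(P\,|\,Q)$ is lower semicontinuous for the weak topology, so a bound that holds along the sequence is inherited by any weak limit point.

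Fix $k \ge 0$. By construction $\nu_k = \nu^*_k$ is the weak limit, along some subsequence $N_j \to \infty$ with $N_j \ge k$, of the probability measures $\nu^{N_j}_k$ on $\RR^{\Lambda_k}$ (such a limit point exists by the preceding tightness lemma), and by the very definition of $H_{\Lambda_k}$ one has $H_{\Lambda_k}(\nu\,|\,\mu_{\bar\beta}) = H(\nu_k\,|\,\mu_{\bar\beta}|_{\Lambda_k})$. I would then invoke the Donsker--Varadhan variational formula: for probability measures $P,Q$ on $\RR^{\Lambda_k}$,
$$H(P\,|\,Q) = \sup_{\phi}\left\{ \int \phi\, dP - \log \int e^{\phi}\, dQ \right\},$$
where the supremum may be restricted to bounded continuous functions $\phi$ on $\RR^{\Lambda_k}$ (and not merely bounded measurable ones as in (\ref{eq:vfh0})). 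For each such $\phi$, with $Q = \mu_{\bar\beta}|_{\Lambda_k}$ fixed, the map $P \mapsto \int \phi\, dP - \log \int e^{\phi}\, dQ$ is continuous for the weak topology; hence $P \mapsto H(P\,|\,\mu_{\bar\beta}|_{\Lambda_k})$, being a supremum of weakly continuous functionals, is lower semicontinuous.

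Applying lower semicontinuity along the subsequence and then (\ref{eq:entropo}) gives
$$H_{\Lambda_k}(\nu\,|\,\mu_{\bar\beta}) = H(\nu_k\,|\,\mu_{\bar\beta}|_{\Lambda_k}) \le \liminf_{j \to \infty} H\big(\nu^{N_j}_k\,\big|\,\mu_{\bar\beta}|_{\Lambda_k}\big) \le C_0\, |\Lambda_k|,$$
which is precisely (\ref{eq:dyn-678}). The only delicate point is that $\RR^{\Lambda_k}$ is not compact, so the naive argument through a continuity statement for the entropy does not apply; this is why one must use the variational representation with the supremum taken over bounded \emph{continuous} test functions to extract lower semicontinuity. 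Beyond that, nothing new is needed: $\nu_k$ is already known to be a bona fide probability measure from the tightness lemma, and no further control of large energies enters here.
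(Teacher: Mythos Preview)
Your proof is correct and follows essentially the same approach as the paper: use the uniform bound (\ref{eq:entropo}) on $H_{\Lambda_k}(\nu^N \,|\, \mu_{\bar\beta})$ and pass to the limit via lower semicontinuity of the relative entropy. The paper simply invokes lower semicontinuity as a known fact, whereas you spell out the justification through the variational formula with bounded continuous test functions; otherwise the arguments coincide.
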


\begin{proof}
We have seen in the proof of the previous lemma that
\begin{equation}
H_{\Lambda_k}(\nu^N | \mu_{\bar \beta})=H \left(\nu_k^N \; \Big| \;  \mu_{\bar \beta}|_{\Lambda_k} \right) \le C_0 |\Lambda_k|
\end{equation} 
Since the entropy is lower semicontinuous, it follows that
$$H_{\Lambda_k}(\nu | \mu_{\bar \beta}) \le C_0 |\Lambda_k|$$
\end{proof}

A translation invariant probability measure $\nu$ on $\Omega$ such that (\ref{eq:dyn-678}) is satisfied is said to have a finite entropy density. By a super-additivity argument (see \cite{BOO}, \cite{FFL}), the following limit
\begin{equation}
{\bar H} (\nu| \mu_{{\bar \beta}}) =\lim_{k \to \infty} \cfrac{H_{\Lambda_k} (\nu | \mu_{{\bar \beta}}) }{|\Lambda_k|}
\end{equation} 
exists and is finite. 
For any bounded local measurable function $\phi$ on $\Omega$, we define the limit
\begin{equation*}
{\bar F} (\phi) =\lim_{k \to \infty} \cfrac{1}{2k+1} {\bar F}_k (\phi), \quad {\bar F}_k (\phi) = \log \int e^{\sum_{i=-k}^{k} \tau_i \phi} d\mu_{\bar \beta}
\end{equation*}

The entropy density ${\bar H}(\nu |\mu_{\bar \beta})$ can be expressed by the variational formula
\begin{equation}
\label{eq:vfh}
{\bar H} (\nu | \mu_{{\bar \beta}}) = \sup_{\phi} \left\{ \int \phi d\nu -{\bar F} (\phi) \right\} 
\end{equation}
where the supremum is taken over all bounded local measurable functions $\phi$ on $\Omega$.



We now show the following lemma
\begin{Lemma}
For any function $F \in C_0^1 (\Omega)$, we have
\begin{equation*}
\int {\mc L}F \, d\nu =0
\end{equation*} 
\end{Lemma}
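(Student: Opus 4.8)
The plan is to show that the measure $\nu$, obtained as a limit of the space-time averages $\nu^N_k$, is stationary for the generator $\mathcal L$ — this is the standard conclusion of the ``one block'' analysis and it combines two ingredients: the fact that $\mathcal L$ is a sum of a mean-zero-in-time term (when applied to the microscopic dynamics) plus a term controlled by the entropy production, and the translation invariance of $\nu$. Concretely, for $F \in C_0^1(\Omega)$, I would start from the identity coming from the Chapman--Kolmogorov equations applied to $\mathbb P_{\mu^N}$ with the diffusively rescaled generator $N^2 \mathcal L$: for each fixed $x$ and each $t>0$,
\begin{equation*}
\mathbb E_{\mu^N}\left[ (\tau_x F)(\eta(t)) \right] - \mathbb E_{\mu^N}\left[ (\tau_x F)(\eta(0)) \right] = N^2 \int_0^t \mathbb E_{\mu^N}\left[ (\mathcal L \tau_x F)(\eta(s)) \right] ds .
\end{equation*}
Averaging over $|x| \le N$ and dividing by $2N+1$, the left-hand side is bounded in absolute value by $\frac{2\| F\|_\infty}{2N+1}\cdot(2N+1)$... no: more carefully, the left side is $\frac{1}{2N+1}\sum_{|x|\le N}\bigl(\mathbb E_{\mu^N}[(\tau_xF)(\eta(t))]-\mathbb E_{\mu^N}[(\tau_xF)(\eta(0))]\bigr)$, which is $O(1)$ since $F$ is bounded. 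Hence, after dividing by $N^2$,
\begin{equation*}
\left| \frac{1}{(2N+1)}\sum_{|x|\le N} \int_0^t \mathbb E_{\mu^N}\left[ (\mathcal L \tau_x F)(\eta(s))\right] ds \right| = \left| t \int \mathcal L' F \, d\nu^N \right| \le \frac{C(F)}{N^2} \to 0,
\end{equation*}
where $\mathcal L'$ denotes the translation-average of $\mathcal L$ acting through the shifts, i.e. using $\mathcal L \tau_x F = \tau_x(\mathcal L F)$ up to boundary corrections at $\pm N$ (the operator $\mathcal L$ commutes with shifts because the antisymmetric part $\mathcal A$ does and because $\mathcal S$ has site-dependent rates — here one must be a bit careful: $\mathcal S$ does \emph{not} commute with shifts, so $\tau_x \mathcal S F \neq \mathcal S \tau_x F$).

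This last point is exactly where the subtlety lies, so let me restructure. The correct approach uses that $\mathcal L = \mathcal A + \mathcal S$ and treats the two pieces separately. For $\mathcal A$, which commutes with translations up to a boundary term of order $\|F\|_{C^1}$ supported near $\pm N$, the average $\frac{1}{2N+1}\sum_{|x|\le N}\int_0^t\mathbb E_{\mu^N}[(\mathcal A\tau_xF)(\eta(s))]ds$ differs from $t\int\mathcal A F\,d\nu^N$ by $O(1/N)$, hence $\int \mathcal A F\,d\nu = 0$ in the limit (using the moment bound (\ref{eq:bounden}) to justify passing to the limit in the expectation of the local function $\mathcal A F$, which has at most quadratic growth). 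For $\mathcal S$, one writes $\sum_{|x|\le N}\tau_x(\gamma_\bullet[F(\eta^\bullet)-F(\eta)])$ and uses the convergence of the Cesàro averages of $\gamma_x$ from (\ref{eq:limg}): since $F$ is local, the flip term localizes and the disorder average $\frac1{2N+1}\sum_x \gamma_{x+j}\to\bar\gamma$ for each fixed $j$ in the support, so the limit picks up $\bar\gamma$ times a \emph{homogenized} flip operator, and combined with the $1/N^2$ bound this forces the corresponding integral against $\nu$ to vanish as well. The key obstacle is therefore the interchange of limits: one must control the contribution of the non-compactly-supported tails and the growth of $\mathcal L F$ (quadratic in $\eta$) against $\nu^N_k$, which is handled by the uniform entropy bound (\ref{eq:entropo})--(\ref{eq:bounden}) and the entropy inequality, exactly as in the tightness proof.

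Putting it together: take a weakly convergent subsequence $\nu^N_k \to \nu_k$ along which (\ref{eq:bounden}) holds; for $F \in C_0^1(\Omega)$ local and depending only on sites in some $\Lambda_m$, the function $\mathcal L F$ is local (depending on $\Lambda_{m+1}$) with quadratic growth, so $\int \mathcal L F\,d\nu^N \to \int \mathcal L F\,d\nu$ by the uniform second-moment bound (which gives uniform integrability of $\mathcal L F$). The chain of inequalities above gives $|\int \mathcal L F\,d\nu^N| = O(1/N^2) + O(1/N)$ (the $O(1/N)$ from boundary terms of $\mathcal A$ and from the rate of convergence of the Cesàro averages of $\gamma$, which by (\ref{eq:limg}) is $o(1)$ not necessarily $O(1/N)$ — so more precisely the boundary/disorder terms are $o(1)$). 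Taking $N \to \infty$ yields $\int \mathcal L F\,d\nu = 0$. The main obstacle, as noted, is the non-commutation of the Glauber part $\mathcal S$ with translations, which is precisely what makes the appearance of the homogenized rate $\bar\gamma$ (rather than the individual $\gamma_x$) in the stationarity equation both natural and nontrivial; one resolves it by exploiting that $F$ is local together with the ergodic-type averaging hypothesis (\ref{eq:limg}) on the environment.
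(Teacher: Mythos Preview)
You correctly observe that ${\mathcal S}$ does not commute with translations (a point the paper's own proof glosses over by writing $\int({\mathcal L}G)\,d\bar\mu_t^N = \int({\mathcal L}F)\,d\nu_k^N$ directly), and your treatment of ${\mathcal A}$ is fine: since the sum in ${\mathcal A}$ runs over all of $\ZZ$ there are in fact no boundary corrections, and the It\^o identity does give $\int{\mathcal A}F\,d\nu^N\to 0$ provided the ${\mathcal S}$ contribution can be discarded. The gap is in your handling of ${\mathcal S}$. After unwinding, the quantity you need is, for each $j$ in the support of $F$,
\[
\frac{1}{2N+1}\sum_{|x|\le N}\gamma_{x+j}\int\bigl[F(\eta^{\,j})-F(\eta)\bigr]\,d(\tau_x\bar\mu_t^N),
\]
where both the weight $\gamma_{x+j}$ and the integral depend on $x$. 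Hypothesis (\ref{eq:limg}) only gives $\frac{1}{2N+1}\sum_x\gamma_{x+j}\to\bar\gamma$; it does \emph{not} allow you to decouple a product $\frac{1}{2N+1}\sum_x\gamma_{x+j}a_x$ when the $a_x$ are merely bounded (take $\gamma_x$ and $a_x$ both $2$-periodic with the same parity). So the step ``the limit picks up $\bar\gamma$ times a homogenized flip operator'' is unjustified; and even if it were, it would yield a statement with homogenized rates rather than $\int{\mathcal L}F\,d\nu=0$ with the original $\gamma_x$.

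The right tool is not (\ref{eq:limg}) but entropy production. Because $G$ is an average of $2N+1$ translates of a bounded local function, a single flip changes it by at most $O(N^{-1})$: $|G(\eta^w)-G(\eta)|\le C_k/N$. Combining this with the diffusive entropy production bound $\int_0^t\sum_x\gamma_x\int(Y_x\sqrt{f_s^N})^2\,d\mu_{\bar\beta}\,ds=O(N^{-1})$ and Cauchy--Schwarz (over sites and over time) gives $\int_0^t\!\int{\mathcal S}G\,d\mu_s^N\,ds=O(N^{-1})$. The It\^o identity then yields $\int{\mathcal A}F\,d\nu=0$; the remaining equality $\int{\mathcal S}F\,d\nu=0$ follows from the flip invariance of $\nu$ established independently in Lemma~\ref{lem:lem9} via the same entropy production mechanism, and the two together give $\int{\mathcal L}F\,d\nu=0$.
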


\begin{proof}
Assume that $F\in C_0^1 (\Omega)$ has a support included in ${\mathbb R}^{\Lambda_{k-1}}$. 
We have
\begin{equation*}
\int {\mc L}F \, d\nu = \int {\mc L}F \, d\nu_k = \lim_{N \to \infty} \int {\mc L}F \, d\nu_k^N
\end{equation*} 

Define $G=(2N+1)^{-1} \sum_{|x| \le N} \tau_x F$. By It\^o formula
\begin{equation*}
\label{eq:Itoo1}
\begin{split}
&N^{-2} \left\{ \int d\mu_t^N (\eta) G (\eta) - \int d\mu^N (\eta) G(\eta)\right\} \\
&= \int_0^t ds \int d\mu_s^N (\eta) \, ({\mc L}G) (\eta)\\
&= t \int  d{\bar \mu}_t^N (\eta) \, ({\mc L}G) (\eta) \nonumber \\
&= t \int d\nu_k^N (\eta) \, ({\mc L} F)(\eta) \nonumber
\end{split}
\end{equation*}
Since $F$ (and hence $G$) is bounded, the left-hand side goes to $0$ as $N$ goes to infinity and it follows that
\begin{equation*}
\int {\mc L} F \, d\nu =0
\end{equation*}
\end{proof}

Recall that we want to show (\ref{eq:finfin}). From the previous lemmas, it is sufficient to prove that
\begin{equation*}
\lim_{k \to \infty} \int d{p} (\eta) \left[  \left|\cfrac{1}{2k+1} \sum_{|y| \le k} \eta_y \eta_{y+1} \right|  \right]=0
\end{equation*}
for any $p \in {\mc P} (\Omega)$ such that $p$ has finite entropy density, is stationary for ${\mc L}$ and translation invariant. 

Proposition \ref{prop:ergo}  gives the characterization of stationary probability measures, translation invariant, and with finite entropy density. By using the notations of this proposition, to complete the proof of Lemma \ref{lem:1b},  we have to show that
\begin{equation*}
\lim_{k \to \infty} \int_{(0, + \infty)} d\lambda (\beta) \int_{\RR^{2k+1}} d{\mu_{\beta}} \left[  \left|\cfrac{1}{2k+1} \sum_{|y| \le k} \eta_{y}\eta_{y+1} \right|  \right]=0
\end{equation*}
Since, under $\mu_\beta$, the random variables $(\sqrt{\beta} \eta_y)_{y}$ are distributed according to standard independent Gaussian variables, and $\int \beta^{-1} d\lambda (\beta) < +\infty$, it remains to prove 
\begin{equation*}
\lim_{k \to \infty} \int_{\RR^{2k+1}} d{\mu_{1}} \left[  \left|\cfrac{1}{2k+1} \sum_{|y| \le k} \eta (y)\eta (y+1) \right|  \right]=0
\end{equation*}
By using Schwarz inequality, a simple computation gives the result.

\begin{prop}
\label{prop:ergo}
Let $\nu$ be an invariant measure for ${\mathcal L}$ which is translation invariant with finite entropy density. Then, $\nu$ is a mixture of the Gaussian product measures $\mu_{\beta}$, $\beta>0$,
$$\nu = \int_{(0,+\infty)} d\lambda (\beta) \, \mu_\beta$$
and the probability measure $\lambda$ on $(0,+\infty)$ is such that
$$\int_{(0,+\infty)} \beta^{-1} d\lambda (\beta) < + \infty$$
\end{prop}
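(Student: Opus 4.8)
The plan is to characterize an invariant, translation invariant measure $\nu$ of finite entropy density by exploiting separately the information coming from the symmetric part $\mc S$ and the antisymmetric part $\mc A$ of the generator. The crucial tool is the hypothesis $\int \mc L F \, d\nu = 0$ for all $F \in C_0^1(\Omega)$ (which holds for $\nu$ by the preceding lemma), together with the finite entropy density bound $\bar H(\nu|\mu_{\bar\beta}) < +\infty$, which via the variational formula \eqref{eq:vfh} gives good a priori moment and integrability control so that all the manipulations below are licit. First I would test $\mc L$ against functions of the form $F(\eta) = \eta_x \phi(\eta)$ and $F(\eta) = \eta_x \eta_y \phi(\eta)$ for suitable local $\phi$, and more specifically against quadratic test functions, to extract relations among the covariances $C(x,y) = \int \eta_x \eta_y \, d\nu$. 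The flip part $\mc S$ is extremely rigid: since $\mc S$ flips the sign of a single coordinate, testing against $F = \eta_x \eta_y$ with $x \ne y$ yields $\int \mc S(\eta_x\eta_y)\,d\nu = -2(\gamma_x+\gamma_y) C(x,y)$ (each flip at $x$ or $y$ changes the sign of the product), whereas $\int \mc A(\eta_x\eta_y)\,d\nu$ is a combination of nearest-neighbour covariances; combined with $\gamma_x \ge \gamma_- > 0$ this forces, after using translation invariance to reduce $C(x,y)$ to a function $c(x-y)$ of the difference only, that $c(r) = 0$ for all $r \ne 0$. In other words $\nu$ has decorrelated, though not necessarily Gaussian, coordinates.

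The second step is to promote ``uncorrelated'' to ``independent and Gaussian''. Here I would test $\mc L$ against $F(\eta) = \psi(\eta_x)\,\eta_{x+1}$ and similar mixed functions where $\psi$ is a smooth bounded function of one coordinate: the $\mc S$ part contributes $\gamma_x[\psi(-\eta_x)-\psi(\eta_x)]\eta_{x+1}$ plus $\gamma_{x+1}\psi(\eta_x)(-2\eta_{x+1})$, the $\mc A$ part contributes $\psi'(\eta_x)(\eta_{x+1}-\eta_{x-1})\eta_{x+1} + \psi(\eta_x)(\eta_{x+2}-\eta_x)$. Integrating against $\nu$, using translation invariance and the already-established decorrelation, and varying $\psi$ through a rich enough class (e.g. $\psi(u) = e^{i\lambda u}$ or $\psi(u) = u^n$), should produce a closed recursion/ODE for the one-site marginal that is solved only by centered Gaussians, and simultaneously force the joint law to factorize. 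A cleaner route, which I would pursue in parallel, is to work directly with the characteristic functional $\chi(\xi) = \int \exp(i\sum_x \xi_x \eta_x)\,d\nu$ for finitely supported $\xi$: the identity $\int \mc L(e^{i\langle\xi,\eta\rangle})\,d\nu = 0$ becomes a functional equation in which the $\mc S$ piece is $\sum_x \gamma_x[\chi(\xi^{(x)}) - \chi(\xi)]$ (with $\xi^{(x)}$ the reflection $\xi_x \mapsto -\xi_x$) and the $\mc A$ piece is $\sum_x(\xi_{x+1}-\xi_{x-1})\,(-i)\partial_{\xi_x}\chi$. Positivity of the $\gamma_x$ plus the antisymmetric transport term should pin down $\chi(\xi) = \exp(-\tfrac12\sum_x \beta_x^{-1}\xi_x^2)$; translation invariance then forces $\beta_x \equiv \beta$. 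Since $\nu$ need not be extremal, one concludes that $\nu$ is a mixture $\int d\lambda(\beta)\,\mu_\beta$ by disintegrating over the conserved energy density (the ergodic decomposition of translation invariant stationary measures, or simply conditioning the above argument on the tail $\sigma$-field).

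Finally, the moment condition $\int \beta^{-1}\,d\lambda(\beta) < +\infty$ follows from the finite entropy density bound: by \eqref{eq:vfh} applied to $\phi = -a\,e_0 = -a\eta_0^2/2$ one gets $-a\int e_0\,d\nu - \bar F(-a\eta_0^2/2) \le \bar H(\nu|\mu_{\bar\beta}) =: C_0$, and $\bar F(-a\eta_0^2/2) = \log M_{\bar\beta}(-a) = -\tfrac12\log(1 + a/\bar\beta)$ is finite for $a > 0$, so $\int e_0\,d\nu \le a^{-1}(C_0 + \tfrac12\log(1+a/\bar\beta)) < +\infty$; since $\int e_0\,d\nu = \tfrac12\int \beta^{-1}\,d\lambda(\beta)$ for the mixture, this is exactly the claim. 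The main obstacle I anticipate is the second step: upgrading decorrelation to full Gaussianity and independence. Vanishing covariances are far weaker than independence for non-Gaussian measures, so one genuinely needs the functional equation for $\chi$ (or equivalently a hierarchy of moment identities closed by the flip symmetry), and one must be careful that the finite entropy density provides enough integrability to differentiate under the integral sign and to justify testing against unbounded functions like $\eta_x\eta_y$ by a truncation/approximation argument using $C_0^1(\Omega)$ functions. The transport term $\mc A$, being only antisymmetric and degenerate, contributes nothing to ``dissipation'' and serves only to couple neighbouring sites; it is the combination of this coupling with the single-site flip symmetry that must do all the work.
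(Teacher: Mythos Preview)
Your proposal has a real gap at the second step, precisely where you yourself flag the main obstacle. The single equation $\int {\mc L} F\,d\nu=0$ mixes the contributions of ${\mc A}$ and ${\mc S}$, and you never separate them. The covariance computation in your first step works only by a lucky cancellation: the ${\mc A}$-contribution telescopes to zero thanks to translation invariance, leaving $-2(\gamma_x+\gamma_y)c(x-y)=0$. For higher correlations and for the characteristic functional this cancellation no longer occurs, and the functional equation
\[
\sum_x \xi_x(\partial_{\xi_{x+1}}-\partial_{\xi_{x-1}})\chi \;+\; \sum_x \gamma_x\bigl[\chi(\xi^{(x)})-\chi(\xi)\bigr] \;=\; 0
\]
does not by itself pin down $\chi$: when the $\gamma_x$ are constant, every radial function $\chi(\xi)=f\bigl(\sum_x \xi_x^2\bigr)$ solves it (the first sum telescopes, the second vanishes since $\xi^{(x)}$ preserves $\sum_x\xi_x^2$). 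So ``positivity of the $\gamma_x$ plus the transport term'' is not what forces Gaussianity; your phrase ``should pin down'' hides the entire difficulty.

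What the paper actually does is prove \emph{first}, via an entropy-production argument (Lemma~\ref{lem:lem9}), that $\nu$ is invariant under every individual flip $\eta\mapsto\eta^x$. This does not follow from $\int{\mc L} F\,d\nu=0$; it uses the Dirichlet-form inequality $H(\nu_* P_t^n\,|\,\mu_{\bar\beta})+tD_n(\bar\nu_{*,t}^n)\le H(\nu_*\,|\,\mu_{\bar\beta})$ for finite-box truncations $\nu_*=\nu|_{\Lambda_m}\otimes\mu_{\bar\beta}|_{\Lambda_m^c}$, and then exploits translation invariance together with the finite entropy \emph{density} to pass to infinite volume and conclude $D_n(\nu)=0$. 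This is where the hypothesis $\bar H(\nu|\mu_{\bar\beta})<\infty$ does structural work, not merely moment control as you suggest. Once flip invariance is in hand, $\int{\mc S} F\,d\nu=0$ and hence $\int{\mc A} F\,d\nu=0$ hold \emph{separately}, and every function odd in any single coordinate integrates to zero. The paper then tests ${\mc A}$ against $\phi(\eta_x,\eta_{x+1})\psi(e_y:y\ne x,x+1)$ to obtain exchangeability of $\nu$, conditions on the energy density ${\mc E}$, and tests ${\mc A}$ against $\eta_x\phi$ to derive the Gaussian integration-by-parts identity characterizing $\nu_{\bz}=\mu_{1/{\bz}}$. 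Your disintegration over ${\mc E}$ and your step~3 are fine in spirit, but they become effective only after the flip-invariance lemma that your outline omits.
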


In order to give the proof of this proposition, we need the following lemma

\begin{Lemma}
\label{lem:lem9}
Let $\nu$ be an invariant measure for ${\mathcal L}$, translation invariant with finite entropy density. Then, for any local measurable bounded function $\phi$ on $\Omega$, we have 
\begin{equation*}
\forall x \in \ZZ, \quad \int \left[ \phi (\eta^x) - \phi (\eta)\right] d\nu =0
\end{equation*}
\end{Lemma}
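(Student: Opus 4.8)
The plan is to prove that $\nu$ is invariant under each single flip $\eta \mapsto \eta^x$ by combining the invariance $\int \mathcal{L}F\, d\nu = 0$ (established in the previous lemma) with the specific structure of the generator, exploiting the fact that the antisymmetric part $\mathcal{A}$ and the symmetric part $\mathcal{S}$ behave very differently under the spin-flip operation. The key observation is that the flip operation $\eta \mapsto \eta^x$ commutes with $\mathcal{S}$ in an appropriate sense, whereas it reverses the sign of certain terms in $\mathcal{A}$; the only way invariance can hold for all test functions is if each flip leaves $\nu$ unchanged.

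First I would make precise how the flip interacts with the generator. Write $\Gamma_x$ for the operator $(\Gamma_x \phi)(\eta) = \phi(\eta^x)$. Then $\Gamma_x$ is an involution on functions, and one checks directly from the definitions that $\Gamma_x \mathcal{S} \Gamma_x = \mathcal{S}$ (since flipping $\eta_x$ just permutes the flip terms) while the action on $\mathcal{A}$ changes the sign of the two terms $(\eta_{x+1}-\eta_{x-1})\partial_{\eta_x}$ and of the pieces of $\mathcal{A}f$ involving $\partial_{\eta_{x-1}}$ and $\partial_{\eta_{x+1}}$ through the appearance of $\eta_x$. The natural strategy is then: for $F$ a local function, apply $\int \mathcal{L}F\, d\nu = 0$ both to $F$ and to cleverly chosen modifications of $F$ (e.g. $\Gamma_x F$, or $F$ multiplied by $\eta_x$), and subtract. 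Because the symmetric part is invariant under conjugation by $\Gamma_x$ but the antisymmetric part is not, the difference isolates a flip-discrepancy term $\int [\phi(\eta^x) - \phi(\eta)]\, d\nu$ against an expression that must vanish.

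A cleaner route, which I expect is the intended one, uses the finite-dimensional approximation $\mathcal{L}_n$ and the bound (\ref{eq:10}), together with the fact that $\mathcal{A}_n$ is antisymmetric and $\mathcal{S}_n$ symmetric in $\mathbb{L}^2(\mu_{\bar\beta})$. The idea is that $\nu$ having finite entropy density means $\nu_k$ is absolutely continuous with respect to $\mu_{\bar\beta}|_{\Lambda_k}$ with a density in $L\log L$; one writes the stationarity $\int \mathcal{L}_n F\, d\nu = \int (\mathcal{A}_n + \mathcal{S}_n)F\, d\nu = 0$, decomposes into the symmetric and antisymmetric parts relative to $\mu_{\bar\beta}$, and shows — using that $\mathcal{A}$ is a first-order (transport) operator whose contribution to a Dirichlet-type form vanishes, while $\mathcal{S}$ produces a genuine nonnegative flip-Dirichlet form $\sum_x \gamma_x \int (\phi(\eta^x) - \phi(\eta))^2 d\nu$ — that this flip-Dirichlet form of (a suitable density or cylinder function) must be zero. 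From $\gamma_x \ge \gamma_- > 0$ and the vanishing of the Dirichlet form one concludes, first, that the density is flip-invariant, hence that $\int[\phi(\eta^x) - \phi(\eta)]\, d\nu = 0$ for all local bounded $\phi$ and all $x$.

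The main obstacle will be making the Dirichlet-form bookkeeping rigorous on the non-compact, infinite-dimensional state space: one cannot simply plug the density of $\nu$ into $\mathcal{L}$, so the argument must be run at the level of the finite-volume cylinder marginals $\nu_k$ and their densities, controlling the error terms coming from the boundary corrections in $\mathcal{A}_n$ and from the truncation estimate (\ref{eq:10}), and checking that the antisymmetric part $\mathcal{A}$ really contributes nothing to the relevant quadratic functional (this is where translation invariance and the explicit form of the current are used). Once the flip-Dirichlet form is shown to vanish, the conclusion $\int[\phi(\eta^x) - \phi(\eta)]\, d\nu = 0$ is immediate by polarization, so essentially all the work is in the a priori estimate that forces that form to be zero.
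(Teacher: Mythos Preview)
Your proposal correctly identifies the target --- show that the flip Dirichlet form of $\nu$ vanishes --- but the mechanism you propose for reaching it has a genuine gap. Neither of your two routes works as written.

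For the first route, applying $\int {\mathcal L}F\,d\nu=0$ to $F$ and to $\Gamma_x F$ and subtracting does not isolate a term of the form $\int[\phi(\eta^x)-\phi(\eta)]\,d\nu$. Since $\Gamma_x$ commutes with ${\mathcal S}$, the difference $\int[{\mathcal L}(\Gamma_x F)-{\mathcal L}F]\,d\nu=0$ only produces an identity involving $\eta_x\,\partial_{\eta_{x\pm 1}}F$ and $(\eta_{x+1}-\eta_{x-1})\partial_{\eta_x}F$; no flip-discrepancy of a general test function appears.

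For the second route, the sentence ``one writes the stationarity $\int {\mathcal L}_n F\,d\nu=0$'' is the problem. The measure $\nu$ is stationary for ${\mathcal L}$, not for the finite-volume generator ${\mathcal L}_n$, and the cylinder marginals $\nu_k$ are not stationary for any generator at all; so there is no stationarity equation at the level where $\nu$ has a density. More fundamentally, knowing only the \emph{linear} relation $\int({\mathcal A}+{\mathcal S})F\,d\nu=0$ for all $F$ does not let you separate $\int {\mathcal A}F\,d\nu$ from $\int {\mathcal S}F\,d\nu$, nor does it produce the \emph{quadratic} object $\sum_x\gamma_x\int(Y_x\sqrt g)^2 d\mu_{\bar\beta}$; you have not explained what replaces ``plug $F=\log g$'' when $g$ does not exist.

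The missing ingredient is dynamical: an entropy production inequality along the semigroup. One works with the localized measure $\nu_*^{(m)}=\nu|_{\Lambda_m}\otimes\mu_{\bar\beta}|_{\Lambda_m^c}$, which has \emph{finite} total entropy $H(\nu_*^{(m)}|\mu_{\bar\beta})=H_{\Lambda_m}(\nu|\mu_{\bar\beta})$, and uses the standard bound
\[
H(\nu_*^{(m)}P_t^n\,|\,\mu_{\bar\beta})+t\,D_n\bigl(\tfrac1t\!\int_0^t\!\nu_*^{(m)}P_s^n\,ds\bigr)\le H(\nu_*^{(m)}\,|\,\mu_{\bar\beta}),
\]
which is valid for \emph{any} initial law, not just stationary ones. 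This is where the antisymmetry of ${\mathcal A}_n$ in $\LL^2(\mu_{\bar\beta})$ is actually used: it kills the ${\mathcal A}$-contribution to the entropy derivative, leaving the nonnegative flip Dirichlet form. One then passes to $n\to\infty$ via (\ref{eq:10}), chooses test functions of the form $\sum_{|i|\le m}\tau_i\phi_0$, divides by $2m+1$, and lets $m\to\infty$; translation invariance of $\nu$ together with the variational formula (\ref{eq:vfh}) for the entropy density makes the entropy terms cancel, and the \emph{stationarity} of $\nu$ enters only here, through $\int P_t\phi_0\,d\nu=\int\phi_0\,d\nu$. What remains is $\sup_{\psi}\int\tfrac{Y_x^2\psi}{\psi}\,d\nu=0$, i.e.\ $D_n(\nu)=0$, which gives the flip invariance. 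Your outline lacks precisely this entropy--production--then--average structure; without it the argument does not close.
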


\begin{proof}
We only give a sketch of the proof since the arguments are almost the same as in \cite{FFL}, Proposition 6.1 (see also chapter 2 of \cite{BOO}). 

The proof is divided in two steps. Let us first consider a generic probability measure $\nu_*$, not necessarily translation invariant, such that $H(\nu_* | \mu_{\bar \beta}) < +\infty$ and let us denote by $g$ the density of $\nu_*$ with respect to $\mu_{\bar \beta}$. We introduce, for any $n$, the Dirichlet forms
\begin{equation}
\label{eq:df-v}
D_n (\nu_*) = \sup_{\psi} \left\{ - \int \cfrac{{\mc S}_n \psi}{\psi} d\nu_*\right\}
\end{equation}
where the supremum is carried over the set ${\mc F}$ composed of the positive functions $\psi : \Omega \to (0,+\infty)$ such that $0<M^{-1} \le \psi \le M$ for some positive constant $M$. 

It is easy to check that if $D_n (\nu_*) <+\infty$ then 
\begin{equation}
\label{eq:last}
D_n (\nu_*) = \cfrac{1}{2} \sum_{x=-n}^n \gamma_x \int \left( Y_x {\sqrt g} \right)^2 d {\mu}_{\bar \beta}
\end{equation}
where for any function $u: \Omega \to \RR$, $Y_x u$ is the function defined by  $(Y_{x} u) (\eta) = u(\eta^x) -u(\eta)$. Observe that $Y_x^2 = -2 Y_x$ so that $-{\mc S}_n =(1/2) \sum_{x=-n}^n \gamma_x  Y_x^2$.

In fact, even if $\nu_*$ is not absolutely continuous with respect to $\mu_{\bar \beta}$, the Dirichlet form $D_n (\nu_*)$ defined by (\ref{eq:df-v}) makes sense in $[0,+\infty]$.

Recall that $(P_t^n)_{t \ge 0}$ is the semigroup generated by the finite dimensional dynamics introduced in section \ref{sec:model}.  We have the following well known entropy production bound (see \cite{BOO} or \cite{KL}, Theorem 9.2)
\begin{equation*}
H(\nu_* P_t^n \, | \, \mu_{\bar \beta} ) + t D_n ({\bar \nu}_{*,t}^n) \le H (\nu_* \, | \, \mu_{\bar \beta})
\end{equation*}
where ${\bar \nu}_{*,t}^n = t^{-1} \int_{0}^t \nu_* P_{s}^n ds$.

Let us denote the density of ${\bar \nu}_{*,t}^n $ with respect to $\mu_{\bar \beta}$ by ${\bar g}_t^n$. Since $H (\nu_* | \mu_{\bar \beta}) <+\infty$, we have $D_n ({\bar \nu}_{*,t}^n) < +\infty$ and, by the explicit formula (\ref{eq:last}) of the Dirichlet form,
\begin{equation}
H(\nu_* P_t^n \, | \, \mu_{\bar \beta} ) + \cfrac{\gamma_- \,t}{2} \sum_{j=-n}^n \int \left( Y_j \sqrt{{\bar g}_t^n} \right)^2  d{\mu_{\bar \beta}} \le H (\nu_* \, | \, \mu_{\bar \beta})
\end{equation}
The second term on the left-hand side of the previous inequality is composed by a sum of positive parts. We can restrict this for any $m \le n$. By using (\ref{eq:vfh0}) and the variational formula (\ref{eq:df-v}) for the Dirichlet form, we get that, for any function $\phi \in C_0^1 (\Omega)$ and any functions  $\psi_j \in {\mc F}, j\in \{-m,\ldots,m\}$, 
\begin{equation*}
\int P_t^n \phi d\nu_* - \log \int e^{\phi} d \mu_{\bar \beta} + \cfrac{\gamma_- \,  t}{2} \sum_{j=-m}^m \cfrac{Y_j^2 \psi_j}{\psi_j}\, d {\bar \nu}_{*,t}^n  \le H (\nu_* | \mu_{\bar \beta})
\end{equation*}
We let $n \to \infty$ and, by (\ref{eq:10}), we have
\begin{equation}
\label{eq:fin}
\int P_t  \phi d\nu_* - \log \int e^{\phi} d \mu_{\bar \beta} + \cfrac{\gamma_- \, t}{2} \sum_{j=-m}^m \cfrac{Y_j^2 \psi_j}{\psi_j} \, d {\bar \nu}_{*,t}  \le H (\nu_* | \mu_{\bar \beta})
\end{equation}
where ${\bar \nu}_{*,t} = t^{-1} \int_0^t \nu_* P_s ds$.

In the second step of the proof we apply (\ref{eq:fin}) to $\nu_* = \nu_*^{(m)} = \nu \Big|_{\Lambda_m} \otimes\,  \mu_{\bar \beta} \Big |_{\Lambda_m^c}$. We recall that $\Lambda_m$ denotes the box $\{ -m, \ldots,m\}$ and $\Lambda_m^c$ stands for $\ZZ \backslash \Lambda_m$. Observe that $H(\nu_*^{(m)} | \mu_{\bar \beta}) = H_{\Lambda_m} (\nu | \mu_{\bar \beta})$ so that
\begin{equation*}
\lim_{m \to \infty} (2m+1)^{-1} H (\nu_*^{(m)} | \mu_{\bar \beta}) ={\bar H} (\nu | \mu_{\bar \beta})
\end{equation*}

By choosing $\phi = \sum_{i=-m}^{m} \tau_i \phi_0$, $\psi_i =\tau_i \psi_0$, with $\phi_0 \in C_0^1 (\Omega)$ and $\psi_0 \in {\mc F}$, we get
\begin{equation*}
\sum_{i=-m}^{m} \int P_t (\tau_i \phi_0) d\nu_{*}^{(m)} - {\bar F}_m (\phi_0) + \cfrac{ \gamma_- \, t}{2} \sum_{i=-m}^m \int \tau_i \cfrac{Y_0^2 \psi_0}{\psi_0 } d{\bar \nu}_{*,t}^{(m)} \le H (\nu_*^{(m)} | \mu_{\bar \beta}) 
\end{equation*}

We claim that
\begin{equation}
\label{eq:diff}
\begin{split}
\lim_{m \to \infty} \cfrac{1}{2m+1} \sum_{i=-m}^{m} \int P_t (\tau_i \phi_0) d\nu_{*}^{(m)} = \int P_t \phi_0 d\nu = \int \phi_0 d\nu , \\
\lim_{m \to \infty} \cfrac{1}{2m+1} \sum_{i=-m}^m \int \tau_i \cfrac{Y_0^2 \psi_0}{\psi_0 } d{\bar \nu}_{*,t}^{(m)} = \int \cfrac{Y_0^2 \psi_0}{\psi_0 } d{\nu} 
\end{split}
\end{equation}

Then, by using (\ref{eq:vfh}) and optimizing over $\phi_0$ and $\psi_0$, we get
\begin{equation*}
\sup_{\psi_0} \int \cfrac{Y_0^2 \psi_0}{\psi_0} d\nu =0
\end{equation*}
It is clear that we can repeat the argument substituting $Y_j$ to $Y_0$, and we obtain
\begin{equation*}
\sup_{\psi_0} \int \cfrac{Y_j^2 \psi_0}{\psi_0} d\nu =0
\end{equation*}
so that, by summing over $j$, we have $D_n (\nu)=0$ which implies that $\nu$ is invariant by any flip. 

It remains to show (\ref{eq:diff}). The difficulty comes from the fact that even if the function $u$ is local  $P_t u$ is not. But it is easy to see, by using (\ref{eq:10}), that we can replace the semigroup of the infinite dynamics $P_t$ by the semigroup of the finite dimensional dynamics $P_t^n$, if $n$ is sufficiently large. The function $P_t^n u$ is then local and the ergodic theorem permits to conclude. 

We refer the interested reader to \cite{BOO} for the details of the arguments.

\end{proof}

\begin{proof}[Proof of Proposition \ref{prop:ergo}]

By Lemma \ref{lem:lem9}, we have $\int {\mc S} g \, d \nu =0$ for any bounded measurable function $g$ on $\Omega$. It follows that for any $g \in C_0^1 (\Omega)$,
\begin{equation}
\label{eq:af}
\int {\mc A} g \, d\nu =0
\end{equation}

Since $\nu$ has finite entropy density, we have $\int e_{0} d\nu < +\infty$. By translation invariance, the ergodic theorem gives the existence $\nu$ a.s., and in ${\mathbb L}^{1} (\nu)$,  of 
\begin{equation*}
u(\eta)=\lim_{\ell \to \infty} \cfrac{1}{2\ell +1} \sum_{|x| \le \ell} \eta_x, \quad {\mc E}(\eta)= \lim_{\ell \to \infty} \cfrac{1}{2\ell +1} \sum_{|x| \le \ell} \eta_x^2
\end{equation*}

Since $\nu$ is invariant with respect to any flip, we have $\nu$ almost surely that $u(\eta)= 0$. 

Assume first that $\nu$ is exchangeable.

For any ${\bz} \in [0,\infty)$ let $\nu_{\bz}$ be the probability measure 
$$\nu_{\bz} =\nu\left(\cdot | {\mc E} =\bz \right)$$

If $\bz=0$ then $\nu_{\bz}$ is the Dirac mass concentrated on the configuration $\delta_0$ with each coordinate equal to $0$. 

Let us now assume that $\bz \ne 0$. 

Consider a test function $g$ in (\ref{eq:af}) of the form 
$$g(\eta) = f (\eta) \, \chi\left( \cfrac{1}{2\ell +1} \sum_{|x| \le \ell} \eta^2_x \right)$$
with $f, \chi$ compactly supported and smooth. It is easy to show, by taking the limit $\ell \to \infty$ in (\ref{eq:af}) with $g$ as above, that
\begin{equation*}
\int {\mc A} f \, d\nu_{\bz} =0
\end{equation*}
and this can be extended to any $f \in C^{1}_0 (\Omega)$. We apply the previous equality with a function $f$ of the form
\begin{equation*}
f(\eta)= \eta_x \phi (\eta)
\end{equation*}
with $\phi \in C_0^1 (\Omega)$ independent of $\eta_x$. Then we get
\begin{eqnarray*}
0&=&\int (\eta_{x+1} -\eta_{x-1} ) \phi d\nu_{\bz} (\eta) + \sum_{y \ne x} \int (\eta_{y+1} -\eta_{y-1}) {\eta_x} \partial_{\eta_y} \phi d\nu_{\bz}\\
&=& \int (\eta_{x+1} -\eta_{x-1} ) \phi d\nu_{\bz} (\eta)\\
&+&  \int (\eta_{x} -\eta_{x-2}) {\eta_x} \partial_{\eta_{x-1}} \phi d\nu_{\bz} +  \int (\eta_{x+2} -\eta_{x}) {\eta_x}  \partial_{\eta_{x+1}} \phi d\nu_{\bz}\\
&+&  \sum_{y \ne x-1,x,x+1} \int (\eta_{y+1} -\eta_{y-1}) {\eta_x} \partial_{\eta_y} \phi d\nu_{\bz}\\
\end{eqnarray*}

We claim that the last term is equal to zero. This is a consequence of the exchangeability of $\nu_{\bz}$. Let $\Lambda$ be the support of $\phi$ (which does not contain $x$ by assumption). Observe that, for any $y \ne x-1,x,x+1$, the site $x$ does not belong to the support of $(\eta_{y+1}- \eta_{y-1}) \partial_{\eta_y} \phi$. Let $t$ be sufficiently large (e.g. $t > |x| + \max_{s \in \Lambda} |s| +10$). By exchangeability we have, for any $k \ge 0$, that
\begin{equation*}
\int (\eta_{y+1} -\eta_{y-1}) {\eta_x} \partial_{\eta_y} \phi d\nu_{\bz}= \int (\eta_{y+1} -\eta_{y-1}) \eta_{t+k} \partial_{\eta_y} \phi d\nu_{\bz}
\end{equation*}    
Hence, we get
\begin{equation*}
\int (\eta_{y+1} -\eta_{y-1}) {\eta_x} \partial_{\eta_y} \phi d\nu_{\bz}= \cfrac{1}{\ell} \sum_{k=0}^{\ell-1} \int (\eta_{y+1} -\eta_{y-1}) \eta_{t+k} \partial_{\eta_y} \phi d\nu_{\bz}
\end{equation*}   
Let $\ell$ go to infinity and use the convergence of ${\ell}^{-1} \sum_{k=0}^{\ell-1} \eta_{t+k}$ to $u(\eta)=0$ to conclude.

The same argument shows that
\begin{equation*}
  \int \eta_{x-2} {\eta_x}  \partial_{\eta_{x-1}} \phi d\nu_{\bz} =0, \quad \int \eta_{x+2} {\eta_x}\partial_{\eta_{x+1}} \phi d\nu_{\bz} =0
\end{equation*}
and, similarly, we have
\begin{equation*}
 \int \eta_{x}^2 \partial_{\eta_{x-1}} \phi d\nu_{\bz} = \bz \int \partial_{\eta_{x-1}} \phi d\nu_{\bz}, \quad  \int \eta_{x}^2 \partial_{\eta_{x+1}} \phi d\nu_{\bz} = \bz  \int \partial_{\eta_{x+1}} \phi d\nu_{\bz},
\end{equation*}

Hence, we proved that, for any $x \in \ZZ$ and for any function $\phi \in C_0^1 (\Omega)$ such that $x$ does not belong to the support of $\phi$, 
\begin{equation*}
\int (\eta_{x+1} -\eta_{x-1} ) \phi d\nu_{\bz} (\eta) + \bz \int  (\partial_{\eta_{x-1}} -\partial_{\eta_{x+1}}) \phi d\nu_{\bz} =0
\end{equation*}

We apply this for a function $\phi$ depending only on $(\eta_{2k})_{k \in \ZZ}$, so that, for any $k$, $\phi$ is independent of $\eta_{2k+1}$. We have
\begin{equation*}
\int (\eta_{2k+2} -\eta_{2k} ) \phi d\nu_{\bz} (\eta) + \bz \int  (\partial_{\eta_{2k}} -\partial_{\eta_{2k+2}}) \phi d\nu_{\bz} =0
\end{equation*}
This implies that the law of $(\eta_{2k})_{k \in \ZZ}$ under $\nu_{\bz}$ is a product of centered Gaussian probability measures on $\RR$ with variance $\bz$ (see e.g. \cite{Hsu}).

The same result occurs for the law of $(\eta_{2k+1})_{k \in \ZZ}$.

Let now $\Phi_{A} (\eta) = \prod_{s \in A} \phi_s (\eta_s)$ be a test function with $A$ a finite arbitrary set of $\ZZ$ and $\phi_s$ real valued bounded functions. We write the set $A$ in the form $A_0 \cup A_1$ where $A_0$ is the set composed of elements of $A$ which are even and $A_1$ the set composed of elements of $A$ which are odd. Let $B_0$ be a set composed of even sites such that $|B_0|=|A_1|$ and $A \cap B_0 =\emptyset$. Let $\sigma$ be a permutation on $\ZZ$ such that $\sigma (A_1)=B_0$ and $A_0$ is fixed under the action of $\sigma$. We denote by $\sigma \cdot \eta$ the configuration defined by $(\sigma \cdot \eta)_x = \eta_{\sigma (x)}$. By exchangeability of $\nu_{\bz}$ we have
\begin{equation*}
\nu_{\bz} (\Phi_A (\sigma \cdot \eta)) =\nu_{\bz} (\Phi_A (\eta))
\end{equation*} 
and 
$$\Phi_A (\sigma \cdot \eta)= \prod_{s \in A_0} \phi_s (\eta_s) \prod_{s \in B_0} \phi_{\sigma^{-1} (s)} (\eta_s)$$
Since the function $\Phi_{A} (\sigma \cdot \eta)$ is a function depending only on $(\eta_{2k})_{k \in \ZZ}$ and $A_0 \cap B_0 = \emptyset$ we know that
\begin{equation*}
\nu_{\bz} (\Phi_{A} (\sigma \cdot \eta)) = \prod_{s \in A_0} \left(\int \phi_s (x) g_{1/\bz }(x) dx \right) \prod_{s \in B_0} \left( \int \phi_{\sigma^{-1} (s)} (x)  g_{1/\bz}(x) dx \right)
\end{equation*}
We recall that $g_{1/\bz}$ is the density of the centered Gaussian probability measure on $\RR$ with variance $\bz$.
Hence, we proved
\begin{equation*}
\nu_{\bz} (\Phi_A ( \eta)) = \prod_{s \in A} \left(\int \phi_s (x) g_{1/\bz}(x) dx \right) 
\end{equation*} 
which shows that $d \nu_{\bz} (\eta)$ is equal to $ \prod_{x \in \ZZ} g_{1/\bz} (\eta_x) d\eta_x$.

We now show that $\nu$ is exchangeable. Let us consider the test function $\chi(\eta)= \phi(\eta_x,\eta_{x+1}) \psi (e_y; y\ne x,x+1)$ with $\phi,\psi$ smooth and compactly supported functions. By (\ref{eq:af}), we have 
$$\int d\nu{\mc A} \chi =0= \int {\mc A}\phi\,  \psi \,d\nu  + \int {\phi} \, {\mc A}\psi\,  d\nu$$
Observe that the second term is given by
$$\sum_{y \ne x,x+1} \int d\nu (\eta) \eta_y (\partial_{e_y} \psi)(\eta) (\eta_{y+1}- \eta_{y-1}) \phi(\eta_x, \eta_{x+1})$$
This is equal to zero because $\nu$ is invariant by the flips and the function $\eta \to \eta_y (\partial_{e_y} \psi)(\eta) (\eta_{y+1}- \eta_{y-1}) \phi(\eta_x, \eta_{x+1})$ is an odd function of $\eta_y$ for $y \ne x,x+1$. 

Moreover we have that
$$({\mc A} \phi)(\eta)= (\eta_{x+2} - \eta_x) \partial_{\eta_{x+1}} \phi + (\eta_{x+1} -\eta_{x-1}) \partial_{\eta_x} \phi$$
Remark that $\eta_{x+2} \psi \partial_{\eta_{x+1}} \phi$ is odd with respect to $\eta_{x+2}$ so that its integral with respect to $\nu$ is equal to $0$, and similarly for $\eta_{x-1} \psi \partial_{\eta_x} \phi$. Hence, we get 
\begin{equation*}
\int d\nu (\eta)\,  (\eta_{x+1}\partial_{\eta_x} \phi- \eta_{x} \partial_{\eta_{x+1}} \phi)\,  \psi =0
\end{equation*}

This equation implies that $\nu (\eta_x, \eta_{x+1} | (e_y; y \neq x,x+1))$ is exchangeable. 

Let now $\Phi$ be a local test function of the form 
$$\Phi (\eta)= \prod_{s \in \ZZ} \phi_s (\eta_s)$$
where $(\phi_s)_s$ is a sequence of bounded smooth functions equal to $1$ for $|s| \ge A$ for a positive constant $A$. Our aim is to prove that for any $x$ we have
\begin{equation}
\label{eq:exc}
\nu (\Phi (\eta^{x,x+1})) = \nu (\Phi (\eta))
\end{equation}
which implies the exchangeability of $\nu$. We can assume that each $\phi_s$ is even or odd since every function can be decomposed as the sum of an even and an odd function. Moreover each even function $\phi_s (\eta_s)$ takes the form ${\tilde \phi}_s (e_s)$ for a suitable function ${\tilde \phi}_s$. 
 
If one of the $\phi_s$ is odd,  since $\nu$ is invariant by all flip operators, (\ref{eq:exc}) is trivial because the two terms are equal to zero. We assume that all the $\phi_s$ are even so that $\Phi$ is in fact a function depending only of the energies $e_s$ and we write $\Phi(\eta) ={\tilde \Phi} (e)=\prod_{s \in \ZZ} {\tilde \phi}_s (e_s)$. We shall denote by ${\tilde \nu}$ the law of $e:=\{e_y \, ; \, y \in \ZZ\}$. We have
\begin{equation*}
\begin{split}
&\int \Phi (\eta) d\nu(\eta) =\int d{\tilde \nu} (e) {\tilde \Phi} (e)\\
&=\int d{\tilde \nu} (e_y ; y \ne x,x+1) \left( \int {\tilde \Phi} (e)d{\tilde \nu} (e_x, {e_{x+1}} | e_y, y \ne x,x+1)\right)\\
&=\int d{\tilde \nu} (e_y; y \ne x,x+1) \left( \int {\tilde \Phi} (e^{x,x+1})d{\tilde \nu} (e_x, {e_{x+1}} | e_y, y \ne x,x+1)\right)\\
&=\int \Phi (\eta^{x,x+1}) d\nu(\eta)
\end{split}
\end{equation*}  
where we used the exchangeability of $\nu (\eta_x, \eta_{x+1} | (e_y; y \neq x,x+1))$ in the third equality. It concludes the proof that $\nu$ is exchangeable.

Hence, we can express $\nu$ as a mixture of $\mu_\beta$, $\beta \in (0,+\infty]$, with the convention that $\mu_{\infty}$ is the Dirac mass concentrated on the configuration $\delta_0$:
\begin{equation*}
\nu = \int_{(0,+\infty]} d\lambda (\beta) \mu_\beta
\end{equation*} 
In fact, $\lambda$ is the law under $\nu$ of the random variable $1/ {\mc E}(\eta)$. It remains to prove that $\nu ({\mc E}(\eta)=0)=\lambda (\{ +\infty\})=0$. It is a simple consequence of the fact that $H_{\Lambda_k} (\nu | {\mu}_{\bar \beta}) \le C_0 |\Lambda_k|$ for any $k$ and in particular for $k=0$. By (\ref{eq:vfh0}), we have that for any positive real $M$
\begin{equation*}
C_0 \ge M \int {\bf 1}_{\{0\} }(x) d\nu {\Big|}_{\{0\}} (x) -\log \left( \int e^{M {\bf 1}_{\{0\}} (x) } g_{\bar \beta} (x) dx \right)=M \lambda (\{ + \infty \})
\end{equation*} 
Since $M$ is arbitrary large, it follows that $\lambda(\{+\infty\})=0$.
\end{proof}

\section{Moments bounds}
\label{sec:mom-bounds}

The aim of this section is to give the proof of the following lemma:

\begin{Lemma}
\label{lem:mom-bounds}
Let $\mu^N$ be the probability measure $\mu_{\beta_0 (\cdot)}^N$ associated to a temperature profile bounded below by a strictly positive constant such that (\ref{eq:supen}) and (\ref{eq:entinit}) are valid. Let $(M_N)_{N\ge 1}$ be a sequence of positive integers such that $\liminf_{N \to \infty} M_N /N >0$. Then, there exists a positive constant $C$, which is independent of $N$, such that
\begin{equation*}
\sup_{t \ge 0} {\mathbb E}_{\mu^N} \left[ \cfrac{1}{M_N} \sum_{|x| \le M_N} e_{x} (t) \right] \le C
\end{equation*}
and
\begin{equation*}
\lim_{N \to \infty} \sup_{t \ge 0} {\mathbb E}_{\mu^N} \left[ \cfrac{1}{M_N^2 } \sum_{|x|\le M_N } e_x^2 (t) \right] =0
\end{equation*} 
\end{Lemma}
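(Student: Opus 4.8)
The structural fact behind the lemma is that, for each $t$, the law $\mu_t^N$ of $\eta(t)$ is a convex combination of \emph{centered} Gaussian measures, so that the control of energy moments reduces to the control of covariances. Freezing a realization of the Poisson clocks $\mathcal N$, the equations \eqref{eq:dyneq} (equivalently, the well-posedness and differentiability-in-initial-data results of Section \ref{sec:model}) show that $\sigma\mapsto\eta(t,\sigma)$ is linear, $\eta_x(t,\sigma)=\sum_{z\in\ZZ}O_{xz}(t,\mathcal N)\,\sigma_z$. Moreover the generator annihilates $\sum_z e_z$ through its antisymmetric part ($\mathcal A(\sum_z\eta_z^2/2)=\sum_z\eta_z\eta_{z+1}-\sum_z\eta_z\eta_{z-1}=0$ after reindexing) and trivially through its flip part, so $O(t,\mathcal N)$ preserves the $\ell^2(\ZZ)$-norm; being invertible (the dynamics is reversible, the flips are involutions) it is a unitary operator on $\ell^2(\ZZ)$. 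Since $\mu^N=\mu_{\beta_0(\cdot)}^N$ is a product of centered Gaussians with variances $\beta_0(z/N)^{-1}$, conditionally on $\mathcal N$ the vector $\eta(t)$ is centered Gaussian with covariance $O(t,\mathcal N)\,\mathrm{diag}(\beta_0(\cdot/N)^{-1})\,O(t,\mathcal N)^{*}$; in particular $\Sigma_{xx}(t,\mathcal N):=\sum_{z}O_{xz}(t,\mathcal N)^2\,\beta_0(z/N)^{-1}$, with $\sum_{z}O_{xz}^2=\sum_{x}O_{xz}^2=1$. By the Gaussian moment identities (and Tonelli, legitimate since $e_x,e_x^2\ge0$), ${\mathbb E}_{\mu^N}[e_x(t)]=\tfrac12\,{\mathbb E}_{\mathcal N}[\Sigma_{xx}(t,\mathcal N)]$ and ${\mathbb E}_{\mu^N}[e_x^2(t)]=\tfrac34\,{\mathbb E}_{\mathcal N}[\Sigma_{xx}(t,\mathcal N)^2]$.

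For the first bound I would use the entropy inequality \eqref{eq:enti}. For $K>0$ set $\phi_K=M_N^{-1}\sum_{|x|\le M_N}(e_x\wedge K)$, a bounded function, and apply \eqref{eq:enti} with $P=\mu_t^N$, $Q=\mu_{\bar\beta}$, $a=\tfrac12\bar\beta M_N$, together with \eqref{eq:enttime}:
$${\mathbb E}_{\mu^N}[\phi_K(\eta(t))]\le \frac1a\log\int e^{\frac{a}{M_N}\sum_{|x|\le M_N}e_x}\,d\mu_{\bar\beta}+\frac{H(\mu_t^N|\mu_{\bar\beta})}{a}\le \frac1a\log\int e^{\frac{a}{M_N}\sum_{|x|\le M_N}e_x}\,d\mu_{\bar\beta}+\frac{C_0N}{a}.$$
As the $e_x$ are i.i.d.\ under $\mu_{\bar\beta}$ with $\mu_{\bar\beta}(e^{\lambda e_0})=M_{\bar\beta}(\lambda)=\sqrt{\bar\beta/(\bar\beta-\lambda)}$ for $\lambda<\bar\beta$, and $a/M_N=\bar\beta/2<\bar\beta$, the first term equals $\tfrac{(2M_N+1)\log2}{\bar\beta M_N}$, while $C_0N/a=2C_0N/(\bar\beta M_N)$; both are bounded uniformly in $N$ and $t$ since $\liminf_N M_N/N>0$. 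Letting $K\uparrow\infty$ by monotone convergence yields $\sup_{t\ge0}{\mathbb E}_{\mu^N}[M_N^{-1}\sum_{|x|\le M_N}e_x(t)]\le C$. (If one is content to use that $\beta_0$ is bounded below away from $0$, the first bound is already immediate from the structural fact, since then $\Sigma_{xx}(t,\mathcal N)\le (\sup_u\beta_0(u)^{-1})\sum_zO_{xz}^2=\sup_u\beta_0(u)^{-1}$.)

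For the second bound I would combine the Gaussian structure with \eqref{eq:supen}. Since $(O_{xz}^2)_{z}$ is a probability vector, Jensen gives $\Sigma_{xx}(t,\mathcal N)^2\le\sum_zO_{xz}^2\,\beta_0(z/N)^{-2}$, whence
$$\frac1{M_N^2}\sum_{|x|\le M_N}{\mathbb E}_{\mu^N}[e_x^2(t)]=\frac{3}{4M_N^2}\,{\mathbb E}_{\mathcal N}\Big[\sum_{|x|\le M_N}\Sigma_{xx}(t,\mathcal N)^2\Big]\le\frac{3}{4M_N^2}\sum_{z\in\ZZ}\beta_0(z/N)^{-2}\,g_z,$$
where $g_z:=\sum_{|x|\le M_N}O_{xz}(t,\mathcal N)^2$ satisfies $0\le g_z\le\sum_xO_{xz}^2=1$ and $\sum_zg_z=\sum_{|x|\le M_N}\sum_zO_{xz}^2=2M_N+1$ surely. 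Using $\beta_0(z/N)^{-2}\le2(\beta_0(z/N)^{-1}-\bar\beta^{-1})^2+2\bar\beta^{-2}$ and the two properties of $g_z$,
$$\sum_z\beta_0(z/N)^{-2}g_z\le 2\sum_z(\beta_0(z/N)^{-1}-\bar\beta^{-1})^2+2\bar\beta^{-2}(2M_N+1).$$
Dividing by $M_N^2$: the first term is $o(N^2)/M_N^2=o(1)$ by \eqref{eq:supen} and $\liminf_N M_N/N>0$, and the second is $O(1/M_N)=o(1)$ since $M_N\to\infty$; this is uniform in $t$, giving the claim.

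The main obstacle is the first paragraph: rigorously establishing, for the genuinely infinite and non-compact dynamics, that $\mu_t^N$ is a mixture of centered Gaussians and that $O(t,\mathcal N)$ is a well-defined $\ell^2(\ZZ)$-unitary --- the clocks produce infinitely many sign flips, so $O(t,\mathcal N)$ is an infinite product of orthogonal maps, and $\mu^N$ is not supported on $\ell^2(\ZZ)$. This relies on the well-posedness and differentiability results recalled in Section \ref{sec:model} (see \cite{DaPrato,Fr,FFL}) and on the super-exponential spatial decay of the coefficients of the free flow $e^{tA}$ with $A=S-S^{-1}$, which makes $\sum_zO_{xz}(t,\mathcal N)\sigma_z$ absolutely convergent for $\sigma\in\Omega$; alternatively one first proves everything for the finite-dimensional dynamics of Section \ref{sec:model} (where $O_n(t,\mathcal N)$ is a finite product of orthogonal matrices) and passes to the limit via \eqref{eq:10} after a harmless smooth truncation of $e_x$ and $e_x^2$. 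Once this structural input is secured, the remaining computations above are elementary.
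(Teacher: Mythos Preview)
Your argument is correct and rests on the same structural observation as the paper: conditionally on the Poisson clocks the dynamics is linear and energy-conserving, so $\mu_t^N$ is a mixture of centered Gaussians whose covariance is a unitary conjugate of the initial diagonal one. The first bound via the entropy inequality is exactly the paper's argument.

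For the second bound the executions differ. The paper introduces the Markov process $(m(t),C(t))$ of means and covariances, observes that $\sum_{x,y}(C_{x,y}-\bar\beta^{-1}\delta_0(x-y))^2$ is conserved (equivalently, unitary conjugation preserves the Hilbert--Schmidt norm of $C-\bar\beta^{-1}I$), and then bounds $\sum_{|x|\le M_N}(C_{x,x}-\bar\beta^{-1})^2$ by this conserved quantity, which at time $0$ equals $\sum_x(\beta_0^{-1}(x/N)-\bar\beta^{-1})^2$. You instead stay on the diagonal: Jensen on $\Sigma_{xx}^2=\big(\sum_zO_{xz}^2\beta_0^{-1}(z/N)\big)^2$ together with the double stochasticity of $(O_{xz}^2)_{x,z}$ (rows and columns sum to $1$) yields $\sum_{|x|\le M_N}\Sigma_{xx}^2\le\sum_z\beta_0^{-2}(z/N)\,g_z$ with $0\le g_z\le 1$ and $\sum_z g_z=2M_N+1$, after which the same splitting around $\bar\beta^{-1}$ and \eqref{eq:supen} conclude. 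Your route is a touch more elementary in that it never invokes the off-diagonal covariances or the formal generator $\mathcal G$; the paper's route has the advantage that the well-posedness of $(m(t),C(t))$ in the weighted space $\aleph$ is stated explicitly, which cleanly handles the infinite-volume issue you flag in your last paragraph. Either way the technical point you identify---making the unitarity of $O(t,\mathcal N)$ rigorous in infinite volume---is the only nontrivial input, and your suggested remedy via the finite-dimensional approximation and \eqref{eq:10} (or, equivalently, the paper's well-posedness statements for the covariance dynamics) is adequate.
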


Let us first explain why the second equality of this lemma is nontrivial. The standard arguments to get moment upper bounds are based on the entropy inequality (\ref{eq:enti}) and the existence of exponential moments. In our case it would be necessary to have $\mu_\beta (e^{\alpha \eta_0^4}) < +\infty$ for $\alpha$ sufficiently  small. This is false since $\mu_\beta$ is a Gaussian measure. In \cite{B}, following an idea of Varadhan, and despite the absence of exponential moments, the use of the entropy inequality for the {\textit{microcanonical}} measure was sufficient to get a weak form of the lemma we want to prove. This approach cannot be carried here because we are in infinite volume and because the Dirichlet form is too degenerate to reproduce the argument.

\begin{proof}

The first statement is a simple consequence of the entropy inequality (\ref{eq:enti}). Indeed, for any $\delta>0$, we have
\begin{eqnarray*}
{\mathbb E}_{\mu^N}  \left[\cfrac{1}{M_N} \sum_{|x| \le M_N} e_t (x) \right]&\le& \cfrac{H(\mu_{t}^N | \mu_{\bar \beta})}{\delta M_N} +\cfrac{1}{\delta M_N }\log \left( \int  e^{\delta \sum_{|x| \le M_N} \eta_x^2 /2 } d\mu_{\bar \beta} (\eta) \right)
\end{eqnarray*}
The first term on the right-hand side is of order one by (\ref{eq:enttime}) and the second term is also of order one if $\delta$ is sufficiently small. Hence the left-hand side is of order one in $N$ uniformly in time.

The bound on the second moment of the energy is more difficult to obtain and the entropy inequality is not sufficient. We exploit here the Gaussian structure of the initial state. 

Recall the integral equations (\ref{eq:dyneq}) defining the dynamics. Each Poisson process ${\mc N}_x$ is interpreted as a clock and a jump of ${\mc N}_x$ as a ring of the clock. Conditionally to the realization of ${\mc N}= ({\mc N}_x)_x$, the dynamics is linear, thus the law remains Gaussian in the time interval between two successive rings. When a clock rings the flip operation conserves the Gaussian property of the state. Hence, conditionally to ${\mc N}$, the state remains Gaussian for any time. It follows that  the law $\mu^N_t $ of the process at time $t$ is a convex combination of Gaussian measures $G_{m,C}$ with mean $m \in \RR^{\ZZ}$ and correlation matrix $C \in {\mathcal S}_{\ZZ} (\RR)$, the space of symmetric matrices indexed by $\ZZ$: 
\begin{equation*}
\mu^N_t = \int d\rho_t (m,C) G_{m,C} 
\end{equation*}
Moreover, the convex combination $\rho_t (m,C)$ is the law at time $t$ of the Markov process $(m(t),C(t))$ with formal generator $N^2 {\mc G}$ where
\begin{equation*}
\begin{split}
({\mathcal G} F)(m,C)&=\sum_{x,y} (C_{x+1,y} - C_{x-1,y} +C_{x,y+1}-C_{x,y}) \partial_{C_{x,y}}F\\
&+ \sum_x (m_{x+1} -m_{x}) \partial_{m_x} F +\sum_{x} [F(C^x,m^x) -F(C,m)] 
\end{split}
\end{equation*}
with $C^x$ given by
\begin{equation*}
(C^x)_{u,v} = 
\begin{cases}
C_{u,v} \text{  if  } [u \ne x \text{ and } v \ne x] \text{  or  } [u=v=x],\\
-C_{u,v} \text{  otherwise}
\end{cases}
\end{equation*}
and 
\begin{equation*}
(m^x)_u = (-1)^{\delta_0 (x-u)} m_u
\end{equation*}

In other words,  $(C(\cdot),m(\cdot))$ are the solutions of the following integral equations
\begin{equation*}
\begin{cases}
C_{x,y} (t')= (-1)^{{\mc N}_{x} (t') +{\mc N}_{y} (t')} \left( C_{x,y} (0) \right.\\
\left.\phantom{C_{x,y} (t')}-\int_{0}^{t'} (-1)^{{\mc N}_{x} (t') +{\mc N}_{y} (t')} \left[ C_{x+1 ,y} (s) -C_{x-1 ,y} (s) +C_{x,y+1} (s) -C_{x,y-1} (s) \right] ds\right)\\
m_x (t') =  (-1)^{{\mc N}_{x} (t')} \left( m_{x} (0) -\int_{0}^{t'} (-1)^{{\mc N}_{x} (t')} \left[ m_{x+1} (s) -m_{x-1} (s)\right] ds\right)
\end{cases}
\end{equation*}
with initial conditions
\begin{equation*}
m_x (0) =0, \quad C_{x,y} (0) =\delta_{0} (x-y) \beta_0^{-1} (x/N)
\end{equation*}
and $t'=tN^2$.

The existence and uniqueness of solutions is easily established (by the same methods as presented in section \ref{sec:model}) in the space $\aleph=\aleph_0 \times \aleph_1$, where
\begin{equation*}
\begin{split}
\aleph_{0}=\bigcap_{\alpha>0}  \left\{ m\in \RR^{\ZZ} \; ; \; \sum_{x} e^{-\alpha |x|} m_x^2  <+\infty \right\}\\
\aleph_1=\bigcap_{\alpha >0}  \left\{ C \in  {\mc S}_{\ZZ} (\RR)\; ; \; \sum_{x,y}  e^{-\alpha (|x| + |y|)} C_{x,y}^2 <+\infty \right\}
\end{split}
\end{equation*}
Observe that the initial condition belongs to $\aleph$. Moreover, for any $(m,C) \in \aleph$, the Gaussian measure with mean $m$ and correlation matrix $C$ is meaningful (see e.g. chapter 2 of \cite{DaPrato}). 

This Markov process conserves the three quantities
\begin{equation}
\label{eq:cq}
\sum_{x \in \ZZ} m_x^2, \quad \sum_{x,y \in \ZZ} C_{x,y}^2, \quad \sum_{x \in \ZZ} C_{x,x}
\end{equation}

The initial condition $\mu^N$ is such that $\rho_0$ is the Dirac mass concentrated on 
$$m=0, \quad C_{x,y} =\delta_{0} (x-y) \beta_0^{-1} (x/N)$$
Therefore, we have $m(t)=0$ for any $t \ge 0$. By denoting, with abuse of notations, by $\rho_t (C)$ the law of $C (t)$ at time $t$, we have by the two last conservation laws (\ref{eq:cq}) that
\begin{equation*}
\int d\rho_{t} (C) \left(\cfrac{1}{M_N^2} \sum_{x,y \in \ZZ^2} (C_{x,y} -{\bar \beta}^{-1} \delta_0 (x-y))^{2}\right) = \cfrac{1}{M_N^2} \sum_{x\in \ZZ} [\beta^{-1}_0 (x/N) -{\bar \beta}^{-1}]^2
\end{equation*}
Moreover, we have
\begin{eqnarray*}
{\mathbb E}_{\mu^N} \left[ \cfrac{4}{M_N^2 } \sum_{|x|\le M_N } e_x^2 (t) \right] = {M_N}^{-2} \sum_{|x| \le M_N} \int d\rho_t (C) G_{0,C} (\eta_x^4)\\
=\cfrac{3}{{M_N}^{2}}  \sum_{|x| \le M_N} \int d\rho_t (C)  C_{x,x}^2 \\
=3  \int d\rho_t (C) \left\{ \cfrac{1}{M_N^2} \sum_{|x| \le M_N}(C_{x,x}-{\bar \beta}^{-1})^2 +\cfrac{2}{{\bar \beta} M_N^2} \sum_{|x| \le M_N} C_{x,x}\right \} +O\left(\frac{1}{M_N}\right)
\end{eqnarray*}
where we used the fact that, for a Gaussian centered variable, the fourth moment is given by three times the square of the second one. 

Observe that
\begin{equation*}
\int d\rho_t (C) \left\{\cfrac{1}{M_N^2} \sum_{|x| \le M_N} C_{x,x}\right \} =2 {\mathbb E}_{\mu^N} \left[ \cfrac{1}{M_N^2} \sum_{|x| \le M_N} e_x (t) \right]
\end{equation*}
and this term is order $M_N^{-1}$ by the first part of the lemma.

Up to terms of order $M_N^{-1}$, we are left with 
\begin{eqnarray*}
 \int d\rho_t (C) \left\{ \cfrac{1}{M_N^2} \sum_{|x| \le M_N}(C_{x,x}-{\bar \beta}^{-1})^2 \right \}\\
 \le  \int d\rho_t (C) \left\{ \cfrac{1}{M_N^2} \sum_{x,y \in \ZZ^2}(C_{x,y}-{\bar \beta}^{-1} \delta_{0} (x-y) )^2 \right \}\\
 = \int d\rho_0 (C) \left\{ \cfrac{1}{M_N^2} \sum_{x,y \in \ZZ^2}(C_{x,y}-{\bar \beta}^{-1} \delta_{0} (x-y) )^2 \right\}\\
 =  \cfrac{1}{M_N^2} \sum_{x\in \ZZ}(\beta_{0}^{-1} (x/N) -{\bar \beta}^{-1})^2
 \end{eqnarray*}
since the penultimate sum is conserved by $(C (t))_{t \ge 0}$.  By the assumption (\ref{eq:supen}), the last term goes to zero as $N$ goes to infinity.
  \end{proof}

\section{Green-Kubo formula}
\label{sec:GK}
In this section we study the homogenization properties for the diffusion coefficient in the linear response theory framework. To present the results we have to introduce some notations. 

Let $(\gamma_x)_{x \in {\ZZ}}$ be a sequence of i.i.d. positive random variables satisfying the assumption
$${\mathbf P} \left[ \gamma_- \le \gamma_x \le \gamma_+ \right] =1$$
where ${\mathbf P}$ is the probability measure on ${\mathbb R}^{\ZZ}$ given by the law of  the disorder $\gamma=(\gamma_x)_{x \in \ZZ}$. The corresponding expectation is denoted by $\mathbf E$.

In this section, time is not accelerated by a factor $N^2$. We first consider the closed system of length $N \ge 1$ with periodic boundary conditions. Let $\TT_N=\{0,\ldots,N-1\}$ be the usual discrete torus of length $N$. The generator ${\mathcal L}_N$ of the system is given by (\ref{eq:generator}) with the sums over $x \in \ZZ$ replaced by $x \in \TT_N$. 

Linear response theory predicts that the diffusion coefficient $D:=D(\{\gamma\},\beta)$ appearing in (\ref{eq:chaleur}) is given by
\begin{equation}
\label{eq:GK1}
D = \lim_{\lambda>0, \lambda \to 0} \lim_{ N \to \infty} L_N (\lambda)
\end{equation}
where $L_N:=L^{\gamma,\beta}_N$ is the Laplace transform of the current-current correlation function. It is defined for $z \in H^+$, $H^+=\{ z \in \CC\, ; \, {\mf R} (z) >0\}$, by
\begin{equation*}
L_N (z)= \cfrac{{\beta^2}}{2 N} \int_0^ \infty dt e^{-z t}\Big \langle \sum_{x \in \TT_N} j_{x,x+1} (t), \sum_{y \in \TT_N} j_{y,y+1} (0) \Big \rangle
\end{equation*}
Here, $\langle \cdot , \cdot \rangle := \langle \cdot , \cdot \rangle_{\beta} $ denotes the scalar product in ${\mathbb L}^2 (\mu_{\beta}^N)$ where $$\mu_{\beta}^N(d\eta) = \prod_{x \in \TT_N} g_{\beta} (\eta_x) d\eta_x$$ is the  Gibbs equilibrium measure with inverse temperature $\beta>0$ on $\RR^{\TT_N}$. We also use the short notation $\langle \cdot \rangle_{\beta}:=\langle \cdot \rangle$ for the expectation with respect to $\mu_{\beta}^N$.

The  Laplace transform $L_N$ can be written as
\begin{equation*}
L_N (z)= \cfrac{\beta^2}{2N}\left \langle \sum_{x\in \TT_N} j_{x,x+1}, (z -{\mathcal L}_N)^{-1} \left( \sum_{y \in \TT_N} j_{y,y+1} \right) \right \rangle
\end{equation*}

Observe that the definition (\ref{eq:GK1}) is only formal since it is not clear a priori that the limits exist. 

We also consider the {\textit{homogenized}} Green-Kubo formula for the infinite volume dynamics. It is defined by
\begin{equation}
\label{eq:GK2}
{\bar D}(\beta) = \lim_{\lambda>0, \lambda \to 0}  L^{\beta} (\lambda)
\end{equation}
where $L:=L^{\beta}$ is the Laplace transform of the averaged current-current correlation function. It is defined for $z \in H^+$ by
\begin{equation*}
L (z)= \cfrac{\beta^2}{2} \int_0^ \infty dt e^{-z t} \ll j_{0,1} (t), j_{0,1} (0) \gg
\end{equation*}
where $\ll \cdot , \cdot \gg= \ll \cdot , \cdot \gg_{\beta}$ is the inner product defined for bounded local functions $f$ and $g$ by
\begin{equation*}
\ll f, g \gg_{\beta} ={\mathbf E} \left( \sum_{x \in \ZZ} \left[ \langle \tau_x f , g \rangle_{\beta} -\langle f \rangle_{\beta} \langle g \rangle_{\beta} \right] \right)
\end{equation*} 
We shall denote by ${\mathbb L}^2 (\ll \cdot \gg)$ the Hilbert space generated by the set of bounded local functions and the inner product $\ll \cdot, \cdot \gg$.

The aim of this section is to show the following homogenization result 

\begin{theo}
\label{th:homo}
For almost every realization of the disorder $\gamma$, the Green-Kubo formulas (\ref{eq:GK1}) and (\ref{eq:GK2}) converge and are equal: $D(\{\gamma\},\beta) = {\bar D}(\beta)$. Moreover, ${\bar D}$ is independent of $\beta$.
\end{theo}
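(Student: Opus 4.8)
\medskip
\noindent\emph{Proof plan.}
The starting observation is that the whole computation lives in the second Wick chaos. Since $j_{x,x+1}=-\eta_x\eta_{x+1}$ is a centred quadratic function and both $\mc A$ and $\mc S$ map quadratic functions to quadratic functions --- explicitly $\mc S(\eta_a\eta_b)=-2(\gamma_a+\gamma_b)\eta_a\eta_b$ for $a\ne b$ and $\mc S(\eta_a^2)=0$ --- the subspace $\mc H_2$ of $\LL^2(\mu_\beta^N)$ spanned by $\{\eta_a\eta_b-\beta^{-1}\delta_{ab}\}$ is invariant under $\mc L_N$, so that $(\lambda-\mc L_N)^{-1}\big(\sum_y j_{y,y+1}\big)\in\mc H_2$ and $L_N(\lambda)$ becomes an explicit matrix element of the resolvent of an operator on $\ell^2(\TT_N\times\TT_N)$: a bounded antisymmetric transport part, plus multiplication by $-2(\gamma_a+\gamma_b)\mathbf 1_{a\ne b}$, the diagonal $a=b$ playing the role of the conserved (energy) sector. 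Dissipativity gives $\|(\lambda-\mc L_N)^{-1}\|\le\lambda^{-1}$, and with $f=(\lambda-\mc L_N)^{-1}\sum_y j_{y,y+1}$, testing the resolvent equation against $f$ yields $\big\langle\sum j,(\lambda-\mc L_N)^{-1}\sum j\big\rangle=\lambda\|f\|^2+\langle f,(-\mc S_N)f\rangle$; together with $\|\sum_y j_{y,y+1}\|_\beta^2=N\beta^{-2}$ this gives the a priori bound $0\le L_N(\lambda)\le(2\lambda)^{-1}$, uniform in $N$ and in the disorder. The microscopic fluctuation--dissipation identity displayed above, rewritten as $j_{x,x+1}=\mc L h_x-\nabla\psi_x+\varphi_x$ with $h_x,\psi_x,\varphi_x$ local quadratic functions ($h_x=\tfrac{\eta_x\eta_{x+1}}{2(\gamma_x+\gamma_{x+1})}$), is the structural input used in the last step.

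For fixed $\lambda>0$ I would prove that $L_N(\lambda)$ self-averages. Because the multiplication part is bounded below by $4\gamma_-$ off the diagonal, a Combes--Thomas type estimate gives exponential decay of the resolvent kernel transverse to the diagonal, while the regulariser $\lambda$ forces decay at scale $\lambda^{-1/2}$ along it; hence the periodic kernel based at $(a,a+1)$ differs from the corresponding infinite-volume kernel by $O(e^{-c\sqrt{\lambda}N})$ and, after translating by $a$, depends on $\gamma$ only through $\tau_a\gamma$. As the $\gamma_x$ are i.i.d., the shift on $\mathbf P$ is ergodic, so Birkhoff's theorem gives $L_N(\lambda)\to L^\beta(\lambda)$ $\mathbf P$-almost surely, where $L^\beta(\lambda)=\tfrac{\beta^2}{2}\ll j_{0,1},(\lambda-\mc L)^{-1}j_{0,1}\gg$; a Fubini argument identifies it with the Laplace transform of the averaged current--current correlation of (\ref{eq:GK2}). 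The uniform bound and analyticity in $\lambda$ upgrade this to an almost sure statement valid for all $\lambda>0$ at once.

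The remaining point is the existence of $\lim_{\lambda\downarrow0}L^\beta(\lambda)$. Passing the fluctuation--dissipation identity to infinite volume and discarding the discrete gradient $\nabla\psi_0$ (which has zero $\ll\cdot\gg$-norm) gives $j_{0,1}=\mc L h_0+\varphi_0$ in $\LL^2(\ll\cdot\gg)$; moreover $j_{0,1}$ is orthogonal to the conserved energies, and since $-\mc S$ has a spectral gap $4\gamma_-$ on the off-diagonal block, $j_{0,1}$ lies in the Sobolev space $\mc H_{-1}$ built from $\ll\cdot\gg$ and $-\mc S$, with finite degree in the natural grading. One then verifies a graded sector condition for $(\mc S,\mc A)$, grading by the distance $|a-b|$ to the diagonal: $-\mc S$ is grading-diagonal and strictly positive off level zero, while $\mc A$ is bounded and shifts the level by one --- the favourable ``one slow level'' situation. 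The general convergence theorem for the Green--Kubo formula of non-reversible dynamics under a graded sector condition (the resolvent/variational method of Komorowski--Landim--Olla) then yields that $\lim_{\lambda\to0}L^\beta(\lambda)=:\bar D(\beta)$ exists and is finite; the disorder enters this limit only through the random coefficients $\gamma_a+\gamma_b$ (via $\varphi_0$), but the limit itself is deterministic. Combining with the previous step, $D(\{\gamma\},\beta)=\lim_{\lambda\to0}\lim_{N\to\infty}L_N(\lambda)=\bar D(\beta)$ for $\mathbf P$-almost every $\gamma$.

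Finally, $\beta$-independence is a scaling argument: $\eta\mapsto\sqrt\beta\,\eta$ is a unitary map from $\LL^2(\mu_\beta)$ onto $\LL^2(\mu_1)$ leaving $\mc A$ and $\mc S$, hence $\mc L$, invariant, while $j_{0,1}\mapsto\beta^{-1}j_{0,1}$; thus $\tfrac{\beta^2}{2}\ll j_{0,1},(\lambda-\mc L)^{-1}j_{0,1}\gg_\beta=\tfrac12\ll j_{0,1},(\lambda-\mc L)^{-1}j_{0,1}\gg_1$, i.e.\ $L^\beta(\lambda)=L^1(\lambda)$ for every $\beta$, whence $\bar D(\beta)=\bar D(1)$. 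I expect the genuinely hard part to be the $\lambda\to0$ step: proving that the Green--Kubo formula converges at all for this non-reversible infinite-volume dynamics, which requires the graded sector condition and careful control of the disorder-induced correction $\varphi_0$ projecting onto the slow sector. The homogenization step, though technical, is comparatively soft, being driven by the exponential localization from $\lambda>0$, the uniform ellipticity (\ref{eq:gounds}) of the dissipative block, and the i.i.d.\ structure of the medium.
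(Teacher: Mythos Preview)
Your plan is sound and shares with the paper the key reduction to the second Wick chaos and the scaling argument for $\beta$-independence, but the route to $\lim_{N\to\infty}L_N(z)=L^\beta(z)$ is genuinely different. You work at each fixed $\lambda>0$, invoke a Combes--Thomas type decay of the resolvent kernel on $\ell^2(\TT_N^2)$ to localise $\phi_z(x,x+1)$ in the environment, and then apply Birkhoff's theorem; the $(2\lambda)^{-1}$ bound together with analyticity lets you upgrade to all $\lambda$ simultaneously. The paper instead exploits that $j_{x,x+1}$ is an eigenfunction of $\mc S$ to get a bound $|L_N(z)|\le C$ uniform on all of $H_+$, then expands the resolvent as an absolutely convergent Neumann series only for ${\mf R}(z)\ge\lambda_0$ large, obtaining an explicit random-walk representation $L(z)=-\tfrac12\sum_k(-4)^k\tilde{\mathbb E}[\varepsilon(\{S\}_k)e^{-\sum_j V_z(S_j)}\delta_{\pm1}(S_k^2-S_k^1)]$ to which the ergodic theorem applies term by term; Montel's theorem and the identity principle for analytic functions then propagate the equality from $\{\mf R(z)\ge\lambda_0\}$ to all of $H_+$. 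The paper's approach is more elementary (no Combes--Thomas, no delicate analysis of the slow diagonal sector) and yields an explicit formula as a bonus; your approach is more robust and conceptually closer to standard homogenization, but you should be aware that the decay along the diagonal (where $-\mc S$ vanishes) is the subtle part of your localisation estimate and needs an argument rather than a heuristic. For the $\lambda\to0$ step you invoke the graded sector machinery; the paper simply cites \cite{B2} and \cite{Bo}, where indeed a closely related resolvent argument in the second chaos is carried out, so your identification of this as the hard step matches the paper's.
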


We recall that the functions $L_N$ and $L$ are analytical functions on $H_+$ (see e.g. \cite{RS}, Theorem VIII.2).

\begin{Lemma}
There exists a constant $C:=C(\beta, \gamma_+)$, independent of $N$, $\gamma$ and $z\in H_+$, such that
$$|L_N (z)| \le C$$
\end{Lemma}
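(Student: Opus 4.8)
The plan is to bound $L_N(z)$ for $z \in H_+$ directly via the variational (Dirichlet form) characterization of the resolvent. Writing $w = \sum_{y \in \TT_N} j_{y,y+1}$, we have
$$L_N(z) = \frac{\beta^2}{2N} \langle w, (z - \mathcal{L}_N)^{-1} w \rangle.$$
Since $\mathcal{L}_N = \mathcal{A}_N + \mathcal{S}_N$ with $\mathcal{A}_N$ antisymmetric and $\mathcal{S}_N$ symmetric negative in $\mathbb{L}^2(\mu_\beta^N)$, a standard computation (taking real parts) gives for $u = (z - \mathcal{L}_N)^{-1} w$ that $\mathfrak{R}(z)\, \|u\|^2 + \langle u, (-\mathcal{S}_N) u \rangle = \mathfrak{R}\langle w, u\rangle$, hence $\langle u, (-\mathcal{S}_N) u \rangle \le \mathfrak{R}\langle w,u\rangle \le |\langle w,u\rangle|$. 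The key point is that, because the current $j_{y,y+1}=-\eta_y\eta_{y+1}$ lies in the range of the fluctuation-dissipation decomposition, one can write $w = \mathcal{S}_N h + (\text{antisymmetric piece})$ for an explicit bounded-norm $h$; more simply, using that $j_{y,y+1}$ is orthogonal in $\mathbb{L}^2(\mu_\beta^N)$ to the kernel of $\mathcal{S}_N$ (it is odd in $\eta_y$), one has $\langle w, \phi \rangle = \langle w, \Pi \phi\rangle$ where $\Pi$ projects onto the complement of $\ker \mathcal{S}_N$, and on that subspace a spectral-gap-free bound is available from the explicit form $-\mathcal{S}_N = \tfrac12 \sum_x \gamma_x Y_x^2$ with $\gamma_x \le \gamma_+$.

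Concretely, I would use the Cauchy–Schwarz-type estimate with the Dirichlet norm: for any $\phi$,
$$|\langle w, \phi \rangle| \le \langle w, (-\mathcal{S}_N)^{-1} w' \rangle^{1/2} \cdot \langle \phi, (-\mathcal{S}_N)\phi\rangle^{1/2}$$
where $w'$ is the projection of $w$ onto $(\ker \mathcal{S}_N)^\perp$ (which equals $w$ here since each $j_{y,y+1}$ changes sign under the flip $Y_y$, so $Y_y j_{y,y+1} = -2 j_{y,y+1}$, giving $(-\mathcal{S}_N) j_{y,y+1} = \gamma_y\, j_{y,y+1} + \gamma_{y+1} j_{y,y+1}$... ) — in fact $-\mathcal{S}_N$ acts diagonally enough on the $j$'s that $\langle w,(-\mathcal S_N)^{-1}w\rangle \le \gamma_-^{-1}\langle w,w\rangle \le (\gamma_-^{-1}/ 2)\cdot 2N \cdot \langle \eta_0^2\eta_1^2\rangle_\beta = O(N)$. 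Combining with $\langle u, (-\mathcal{S}_N) u\rangle \le |\langle w,u\rangle| \le \langle w,(-\mathcal S_N)^{-1}w\rangle^{1/2}\langle u,(-\mathcal S_N)u\rangle^{1/2}$ yields $\langle u,(-\mathcal S_N)u\rangle \le \langle w,(-\mathcal S_N)^{-1}w\rangle = O(N)$, and then $|\langle w,u\rangle| \le \langle w,(-\mathcal S_N)^{-1}w\rangle = O(N)$. Dividing by $2N/\beta^2$ gives $|L_N(z)| \le C(\beta,\gamma_+)$ uniformly in $N$, $z$, and (since only $\gamma_- \le \gamma_+$ enter, and in fact one can bound using $\gamma_+$ alone by a cruder version) in $\gamma$.

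The main obstacle is verifying cleanly that $w$ lies in the range of $(-\mathcal{S}_N)^{1/2}$ with a norm controlled uniformly — i.e., that the degeneracy of $\mathcal{S}_N$ (it only sees the odd part of each coordinate) does not hurt us. This works precisely because the current $j_{x,x+1}$ is a product of two distinct coordinates and is therefore odd in each, so it sits in the "good" subspace on which $-\mathcal{S}_N \ge \gamma_- > 0$; the bound $\langle \eta_0^2 \eta_1^2 \rangle_\beta = \beta^{-2}$ then makes $\langle w, w\rangle = N\beta^{-2} + O(1)$ explicit, and all constants depend only on $\beta$ and $\gamma_-,\gamma_+$. One should double-check the antisymmetry argument works verbatim with complex $z$ (take real parts throughout, noting $\mathfrak{R}\langle u, \mathcal{A}_N u\rangle = 0$ and $\mathfrak{R}\langle u, \mathcal{S}_N u\rangle = \langle u,\mathcal{S}_N u\rangle \le 0$), which is routine.
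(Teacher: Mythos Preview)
Your approach is essentially the same as the paper's: the paper simply invokes Proposition~6.1 of Kipnis--Landim (which is exactly the Dirichlet-form resolvent bound $|\langle w,(z-{\mathcal L}_N)^{-1}w\rangle|\le \langle w,(-{\mathcal S}_N)^{-1}w\rangle$ you rederive via real parts and Cauchy--Schwarz) together with the eigenvector identity ${\mathcal S}_N j_{x,x+1}=-2(\gamma_x+\gamma_{x+1})\,j_{x,x+1}$, and you reproduce both ingredients. Two small corrections: the eigenvalue of $-{\mathcal S}_N$ on $j_{x,x+1}$ is $2(\gamma_x+\gamma_{x+1})$, not $\gamma_x+\gamma_{x+1}$ (each flip contributes $-2\gamma$), and the resulting bound $\langle w,(-{\mathcal S}_N)^{-1}w\rangle\le N/(4\gamma_-\beta^2)$ depends on $\gamma_-$, not $\gamma_+$ --- there is no ``cruder version'' using $\gamma_+$ alone, since you need a lower bound on the spectrum of $-{\mathcal S}_N$ restricted to the span of the currents. (Incidentally the $\beta^2$ prefactor cancels $\langle\eta_0^2\eta_1^2\rangle_\beta=\beta^{-2}$, so the final constant is in fact independent of $\beta$ as well.)
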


\begin{proof}
The proof is a simple consequence of Proposition 6.1 in \cite{KL} and of the fact that ${\mathcal S} j_{x,x+1} = -2 (\gamma_x +\gamma_{x+1}) j_{x,x+1}$ (see also Theorem 2 in \cite{B2}).
\end{proof}

Let $h^{N}_z:=h^{N}_z (\eta; \beta, \gamma)$ be the solution of the resolvent equation in $\LL^2 (\langle \cdot \rangle)$:
$$(z -{\mathcal L}_N) h^N_{z} = \sum_{x \in \TT_N} j_{x,x+1}$$
We have
\begin{equation*}
L_N (z) =\cfrac{\beta^2}{2} \left\langle h^N_z , \cfrac{1}{N} \sum_{y \in \TT_N} j_{y,y+1} \right \rangle
\end{equation*}
 
Let ${h}_z:=h_z (\eta; \beta)$ be the solution of the resolvent equation in $\LL^2 (\ll \cdot \gg)$:
$$(z -{\mathcal L}) h_{z} =  j_{0,1}$$
We have
\begin{equation*}
L (z) = \cfrac{\beta^2}{2}\ll h_z , j_{0,1} \gg
\end{equation*}

Observe that if $\eta$ is distributed according to $\mu_\beta$ then $\beta^{1/2} \eta$ is distributed according to $\mu_1$. Since  $h_z (\eta;1)=h_z (\eta;\beta)$ and $j_{x,x+1}$ is an homogeneous function of degree two in $\eta$, it follows that $L^{\beta} (z)=L^1 (z)$. This implies the independence of the diffusion coefficient with respect to $\beta$.

In the following lemma we give an explicit formula for $L(z)$ if ${\mf R} (z)$ is sufficiently large. 

We shall denote by ${\mathbb P}_{R.W.}$  the law of the two-dimensional simple symmetric random walk $(S_j)_{j\ge 0} =(S_j^1, S_j^2)_{j \ge 0}$ starting from $(0,1)$ and by ${\mathbb E}_{R.W.}$ the corresponding expectation. Let ${\tilde {\mathbb E}}$ be the annealed expectation ${\mathbf E} {\mathbb E}_{R.W.}$.

For any path $\{S_j\}_{\{j=0,\ldots,k\}}$ of length $k$, we define ${\varepsilon} (\{S\}_k)=  \prod_{j=0}^{k-1} ((S_{j+1} -S_j)\cdot \bw) \in \{\pm 1\}$, where $\bw$ is the vector $(1,1)$ and $\bx \cdot \by$ denotes the usual scalar product of the two vectors $\bx$ and $\by$ of $\RR^2$. We also introduce the random potential 
\begin{equation*}
\exp( -V_{z} (x,y))= \cfrac{1}{z + {\bf 1}_{x \ne y} (\gamma_x + \gamma_y)}
\end{equation*}

 \begin{Lemma}
 There exists $\lambda_0 >0$ such that, for any $z \in H_+$ with ${\mf R} (z) \ge \lambda_0$, the Laplace transform $L(z)$ is given by
 \begin{equation}
 \label{eq:Laplaceform}
 L(z) =-\cfrac{1}{2} \sum_{k=0}^{\infty} (-4)^{k} {\tilde {\mathbb E}} \left[ {\varepsilon} (\{S\}_k)  \, e^{-\sum_{j=0}^{k} V_{z} ( S_j)}\delta_{\pm 1} (S_k^2 -S_k^1)\right]
 \end{equation}
 \end{Lemma}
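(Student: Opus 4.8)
The plan is to expand the resolvent $(z-\mathcal L)^{-1}$ as a geometric series, identify the combinatorial structure of each term, and recognize it as a random walk expectation. First I would split the generator as $\mathcal L=\mathcal A+\mathcal S$ and write the resolvent equation $(z-\mathcal L)h_z=j_{0,1}$ in the form $h_z=(z-\mathcal S)^{-1}j_{0,1}+(z-\mathcal S)^{-1}\mathcal A h_z$. Since $\mathcal S$ is diagonal on the orthogonal basis of Hermite-type monomials $\eta_{x_1}\cdots\eta_{x_n}$ — indeed $\mathcal S(\eta_x\eta_y)=-\mathbf 1_{x\ne y}(\gamma_x+\gamma_y)\eta_x\eta_y$ and more generally $\mathcal S$ acts as multiplication by $-\sum_{i}\gamma_{x_i}\mathbf 1_{x_i\text{ appears odd number of times}}$ — the operator $(z-\mathcal S)^{-1}$ is just multiplication by the ``potential weight'' $\exp(-V_z)$. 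The current $j_{0,1}=-\eta_0\eta_1$ is a degree-two monomial, and I would check that $\mathcal A$ preserves the degree-two sector (it maps $\eta_x\eta_y$ to $\eta_{x+1}\eta_y-\eta_{x-1}\eta_y+\eta_x\eta_{y+1}-\eta_x\eta_{y-1}$, plus the diagonal terms when $x=y$ which require separate bookkeeping). Hence the whole computation lives in the span of $\{\eta_x\eta_y\}_{x,y}$, a space naturally identified with functions on $\ZZ^2$, where $\mathcal A$ becomes (twice) the generator of a simple random walk up to the signs $\varepsilon$.

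Next I would iterate: $h_z=\sum_{k\ge 0}\big[(z-\mathcal S)^{-1}\mathcal A\big]^k(z-\mathcal S)^{-1}j_{0,1}$, and this series converges in $\mathbb L^2(\ll\cdot\gg)$ provided $\mathfrak R(z)\ge\lambda_0$ is large, since $\|(z-\mathcal S)^{-1}\mathcal A\|<1$ once $\mathfrak R(z)$ dominates the (bounded) off-diagonal strength of $\mathcal A$ in the degree-two sector — this is where $\lambda_0$ comes from, and I would make the norm bound explicit using $\gamma_x\le\gamma_+$. Then $L(z)=\tfrac{\beta^2}{2}\ll h_z,j_{0,1}\gg$ becomes a sum over $k$ of matrix elements $\ll \big[(z-\mathcal S)^{-1}\mathcal A\big]^k(z-\mathcal S)^{-1}(\eta_0\eta_1),\eta_0\eta_1\gg$. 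Unwinding the action of $(z-\mathcal S)^{-1}\mathcal A$ as a transfer operator on $\ZZ^2$, each such element is a sum over nearest-neighbour paths $S_0=(0,1),S_1,\dots,S_k$ on $\ZZ^2$ of the product of the edge signs (giving $\varepsilon(\{S\}_k)$, since each step contributes the factor $(S_{j+1}-S_j)\cdot\bw=\pm1$), the potential weights $\prod_{j=0}^k e^{-V_z(S_j)}$, a combinatorial factor $(-4)^k$ absorbing the normalization between the lattice Laplacian and the simple random walk (each site has $4$ neighbours, and the sign of $\mathcal A$ contributes the minus), and the constraint that the endpoint pairs back with $\eta_0\eta_1$ — which under the inner product $\ll\cdot\gg$, after using translation invariance to fix one coordinate, forces $S_k^2-S_k^1=\pm1$, hence the factor $\delta_{\pm1}(S_k^2-S_k^1)$. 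Taking the disorder average turns $\mathbf E\mathbb E_{R.W.}$ into the annealed expectation $\tilde{\mathbb E}$, and the overall $-\tfrac12$ comes from the $\beta^2/2$ prefactor combined with the sign of $j_{0,1}=-\eta_0\eta_1$ (recall $L^\beta=L^1$, so $\beta$ drops out). This yields exactly \eqref{eq:Laplaceform}.

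The subtle points I would want to handle carefully are: (i) the diagonal terms — when the random walk hits the diagonal $S^1=S^2$, the monomial $\eta_x^2$ behaves differently from $\eta_x\eta_y$ with $x\ne y$ both under $\mathcal S$ (the potential $V_z$ is defined with the indicator $\mathbf 1_{x\ne y}$, precisely to encode this) and under $\mathcal A$ (since $\mathcal A(\eta_x^2)=2\eta_x(\eta_{x+1}-\eta_{x-1})$ still lands in the off-diagonal part of the degree-two sector, but one must verify the random-walk picture and the $(-4)^k$ weight remain consistent there — this is why the walk is on all of $\ZZ^2$ including the diagonal, with the potential, not the walk's transition rule, carrying the distinction); (ii) justifying that $j_{0,1}$ and hence $h_z$ stay in the closure of the degree-two sector, i.e. that $\mathcal A$ and $\mathcal S$ do not leak into other Hermite sectors — this follows because both operators preserve the parity and the total degree of the monomials, and $j_{0,1}$ is a pure degree-two object. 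I expect the main obstacle to be this bookkeeping of the diagonal contributions together with pinning down the exact constant $(-4)^k$ and the endpoint $\delta_{\pm1}$ constraint from the inner product $\ll\cdot\gg$; once the degree-two sector is correctly identified with weighted paths on $\ZZ^2$, the convergence for $\mathfrak R(z)\ge\lambda_0$ and the resummation are routine.
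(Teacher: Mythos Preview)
Your proposal is correct and follows essentially the same route as the paper. The paper makes the degree-two reduction explicit by writing $h_z(\eta)=\sum_{x,y}\phi_z(x,y)\eta_x\eta_y$ with $\phi_z$ symmetric, so that your operator $(z-\mathcal S)^{-1}\mathcal A$ becomes the concrete operator $T_z$ on functions of $\ZZ^2$ given by $(T_z u)(x,y)=e^{-V_z(x,y)}(\tilde\nabla u)(x,y)$; the Neumann series $\phi_z=\sum_k(-1)^kT_z^k\rho_z$ is then controlled by the elementary bound $\|T_z\|_\infty\le 4/\lambda$ (hence $\lambda_0>4$ suffices), which is slightly cleaner than estimating the operator norm in $\mathbb L^2(\ll\cdot\gg)$ as you propose. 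One point worth sharpening in your write-up: the passage from the endpoint constraint $S_k+(x,x)\in\{(0,1),(1,0)\}$ (which is what the inner product with $j_{0,1}$ actually gives, after the sum over translations picks out a single $x$) to the translation-free condition $\delta_{\pm1}(S_k^2-S_k^1)$ uses the stationarity of the disorder law under shifts --- the paper makes this explicit by translating the potential by $(S_k^2,S_k^2)$ or $(S_k^1,S_k^1)$ \emph{after} taking $\mathbf E$, and you should flag that this is where the annealing is genuinely needed rather than merely notational.
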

 
 \begin{proof}
Since the generator ${\mathcal L}$ maps a polynomial function to a polynomial function of the same degree, the solution of the resolvent equation is expected to be of the form  
\begin{equation*}
h_{z} (\eta) = \sum_{x,y \in \ZZ^2} \phi_{z} (x,y) \eta_x \eta_y
\end{equation*}
where $\phi_z (x,y)$, $(x,y) \in \ZZ^2$, is the (symmetric) solution of
\begin{equation}
\label{eq:phi0}
(z + (\gamma_x +\gamma_y){\bf 1}_{x \neq y}) \phi_{z} (x,y) +({\tilde \nabla} \phi_z) (x,y) =-{\cfrac{\delta_1 (x)\delta_0 (y) +\delta_{0} (x)\delta_1 (y)}{2}}
\end{equation}
with, for any function $u:\ZZ^2 \to \RR$,
$$({\tilde \nabla} u)(x,y)=(u(x,y+1) -u(x,y-1)) +(u(x+1,y)-u(x-1,y))$$

We shall denote by $\lambda$ the real part of $z \in H_+$. In the sequel we show that, if $\lambda$ is sufficiently large, a solution to (\ref{eq:phi0}) exists, so that $h_z$ is of the form given above. In fact, it is not difficult to show that a solution to (\ref{eq:phi0}) exists for every $z \in H_+$. 


The Laplace transform $L(z)$ is equal to
\begin{eqnarray*}
L(z)&=& \cfrac{\beta^2}{2} \ll h_z ,j_{0,1} \gg = \cfrac{1}{2} {\ll} { \left\{\sum_{x,y \in \ZZ^2}  \phi_z (x,y)  \eta_x \eta_y\right\} , j_{0,1}} \gg \\
&=& -\cfrac{\beta^2}{2}\sum_{x,y} {\mathbf E} \left[ \phi_{z} (x,y) \lim_{n \to \infty} \cfrac{1}{2n+1} \sum_{|k| \le n} \langle \eta_x \eta_y \eta_k \eta_{k+1}\rangle \right]\\
&=&\cfrac{1}{2} \sum_{x,y} {\mathbf E} \left[ \phi_{z} (x,y) (\delta_{1}(x-y) + \delta_{-1} (x-y)\right]\\
&=&{\mathbf E} \left[\sum_{x \in \ZZ} \phi_z (x,x+1) \right]
\end{eqnarray*}

We define the operator $T_z$, acting on the set of real valued functions $u$ on ${\ZZ^2}$, by
\begin{equation}
\label{eq:T1}
(T_z u)(x,y) =\cfrac{1}{(z + (\gamma_x +\gamma_y){\bf 1}_{x \neq y})}({\tilde \nabla} u) (x,y)
\end{equation}
Then (\ref{eq:phi0}) can be written in the following form
\begin{equation*}
\phi_z + T_z\phi_z = \rho_z
\end{equation*}
where $\rho_z$ is the function given by
\begin{equation*}
\rho_z (x,y) = -\cfrac{(\delta_1 (x) \delta_0 (y) +\delta_{0} (x)\delta_1 (y))}{2 (z + (\gamma_x +\gamma_y){\bf 1}_{x \neq y})}
\end{equation*}

Observe that $\|T_z\phi \|_{\infty} \le (4/\lambda) \| \phi \|_{\infty}$ so that if $\lambda>4$ then $T_z$ is contractive for the $\| \cdot \|_{\infty}$ norm. It follows that for $\lambda$ sufficiently large

\begin{equation*}
\phi_{z} = \sum_{k=0}^{\infty} (-1)^k T_z^k \rho_z
\end{equation*}

For any $\bx \in \ZZ^2$, we have the following representation of the operator $T_z^k$
\begin{equation}
\label{eq:T2}
\begin{split}
&\; \;\; (T_z^k u)(\bx)\\
&= \sum_{|e_1|=1}\ldots\sum_{|e_k|=1} (e_1 \cdot \bw) \ldots (e_k\cdot \bw) e^{-\sum_{j=0}^{k-1}V_{z} (\bx+e_1+\ldots+e_j) } u(\bx +e_1+\ldots+e_k) 
\end{split}
\end{equation}
with the convention that the term in the exponential corresponding to $j=0$ is $V_z (\bx)$.
We obtain
\begin{equation*}
\begin{split}
&\phi_z (x,x+1)\\
&=-\cfrac{1}{2} {\mathbb E}_{R.W.}  \left[ \sum_{k=0}^{\infty}(-4)^k  {\varepsilon (\{S\}_k) }e^{-\sum_{j=0}^k V_{z} (S_j +(x,x))}\times  \right.\\
&\left. {\phantom{ \sum_{k=0}^{\infty}(-4)^k}} \left[ \delta_1 (x+S_k^1) \delta_0 (x+S_k^2) + \delta_ 0 (x+S_k^1) \delta_1 (x+S_k^2) \right]\right]
\end{split}
\end{equation*}
By summing over $x \in \ZZ$ and by taking the expectation with respect to the disorder, we obtain
\begin{eqnarray*}
L(z)&=&-\cfrac{1}{2} \sum_{k=0}^{\infty} (-4)^{k} {\tilde {\mathbb E}}  \left[  {\varepsilon (\{S\}_k) }e^{-\sum_{j=0}^k V_{z} (S_j -(S_{k}^2,S_{k}^2)) } {\bf 1}_{S_k^2 +1=S_k^1} \right]\\
&-&\cfrac{1}{2} \sum_{k=0}^{\infty} (-4)^{k} {\tilde {\mathbb E}}  \left[ {\varepsilon (\{S\}_k) }e^{-\sum_{j=0}^k V_{z} (S_j -(S_{k}^1,S_{k}^1))}{\bf 1}_{S_k^2 -1=S_k^1}\right]
\end{eqnarray*}
By taking first the expectation with respect to $\gamma$, we see that we can translate the environment and hence the potential by $(S_k^2, S_k^2)$ in the first expectation and by $(S_k^1, S_k^1)$ in the second one. Therefore, we get (\ref{eq:Laplaceform}).

 \end{proof}
 
\begin{Lemma}
There exists $\lambda_0 >0$ such that, for any $z \in H_+$ with ${\mf R} (z) \ge \lambda_0$ and almost every disorder $\gamma$, the limit of $L_N (z)$ as $N$ goes to infinity exists and is given by
\begin{equation}
-\cfrac{1}{2}\sum_{k=0}^{\infty} (-4)^{k} {\tilde {\mathbb E}} \left[ {\varepsilon} (\{S\}_k)  \, e^{-\sum_{j=1}^{k} V_{z} ( S_j)}\delta_{\pm 1} (S_k^2 -S_k^1)\right]
\end{equation}
\end{Lemma}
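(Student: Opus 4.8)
The strategy is to repeat, on the torus $\TT_N$, the computation that produced the random walk representation (\ref{eq:Laplaceform}) for $L(z)$, and then to replace the disorder average by a spatial average over $\TT_N$ to which the ergodic theorem applies. Since ${\mc L}_N$ preserves the degree of polynomials, the solution $h^N_z$ of $(z-{\mc L}_N)h^N_z=\sum_{x\in\TT_N}j_{x,x+1}$ is quadratic, $h^N_z(\eta)=\sum_{x,y\in\TT_N}\phi^N_z(x,y)\eta_x\eta_y$, where $\phi^N_z$ is the symmetric solution on $\TT_N^2$ of the torus analogue of (\ref{eq:phi0}), with source $\rho^N_z(x,y)=-\tfrac12\,{\bf 1}_{|x-y|\equiv 1}\,e^{-V_z(x,y)}$. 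The Wick (pairing) formula for the Gaussian product $\mu^N_\beta$ then gives, exactly as in the derivation of (\ref{eq:Laplaceform}), that $L_N(z)=\tfrac1N\sum_{x\in\TT_N}\phi^N_z(x,x+1)$. Writing $T^N_z$ for the torus version of the operator (\ref{eq:T1}), one has $\|T^N_z\|_\infty\le 4/{\mf R}(z)$, so for ${\mf R}(z)\ge\lambda_0>4$ the Neumann series $\phi^N_z=\sum_{k\ge 0}(-1)^k(T^N_z)^k\rho^N_z$ converges and $|(T^N_z)^k\rho^N_z|\le C_\lambda\,(4/{\mf R}(z))^k$ with $C_\lambda$ independent of $N$. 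Inserting the walk representation (\ref{eq:T2}) on the torus then yields, for each $k$, a $k$-th term of the form
\begin{equation*}
-\frac12(-4)^k\,\frac1N\sum_{x\in\TT_N}{\mathbb E}_{R.W.}\Big[\varepsilon(\{S\}_k)\,e^{-\sum_{j=0}^k V_z(S_j)}\,{\bf 1}_{S^2_k-S^1_k\equiv\pm1\ (\mathrm{mod}\ N)}\Big],
\end{equation*}
where the walk is now on $\TT_N^2$ and started from $(x,x+1)$.

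The uniform geometric bound allows $\lim_{N\to\infty}$ and $\sum_k$ to be exchanged, so it suffices to treat each $k$ separately. Fix $k$. A $k$-step walk stays within $\ell^\infty$-distance $k$ of its starting point, so for $N$ large enough (depending on $k$): the torus walk started from $(x,x+1)$ coincides with the $\ZZ^2$ walk, the indicator modulo $N$ becomes the genuine $\delta_{\pm1}(S^2_k-S^1_k)$, and for all but $O(k)$ values of $x\in\TT_N$ the entire trajectory stays inside $\{0,\dots,N-1\}$ so the potential is the genuine $V_z$ built from $\gamma$; the excluded $O(k)/N$ fraction of sites contributes an error tending to $0$. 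Using translation invariance of the $\ZZ^2$ random walk I shift the walk started from $(x,x+1)$ by $(x,x)$ to one started from $(0,1)$, which replaces $V_z(S_j)$ by $V_z^{\tau_x\gamma}(S_j)$ and leaves $\varepsilon$ and $\delta_{\pm1}$ untouched; the $k$-th term of $L_N(z)$ thus becomes
\begin{equation*}
-\frac12(-4)^k\,\frac1N\sum_{x\in\TT_N}\Psi_k(\tau_x\gamma)+o(1),\qquad \Psi_k(\gamma)={\mathbb E}_{R.W.}\Big[\varepsilon(\{S\}_k)\,e^{-\sum_{j=0}^k V_z^\gamma(S_j)}\,\delta_{\pm1}(S^2_k-S^1_k)\Big],
\end{equation*}
and $\Psi_k$ is a bounded function of the disorder depending only on $\gamma_{-(k+1)},\dots,\gamma_{k+1}$.

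Since $(\gamma_x)_x$ is i.i.d., the shift on $(\RR^{\ZZ},{\mathbf P})$ is ergodic, so Birkhoff's theorem gives $\tfrac1N\sum_{x\in\TT_N}\Psi_k(\tau_x\gamma)\to{\mathbf E}[\Psi_k]$ for ${\mathbf P}$-a.e.\ $\gamma$; hence the $k$-th term of $L_N(z)$ converges almost surely to $-\tfrac12(-4)^k\,{\tilde{\mathbb E}}\big[\varepsilon(\{S\}_k)\,e^{-\sum_{j=0}^k V_z(S_j)}\,\delta_{\pm1}(S^2_k-S^1_k)\big]$, i.e.\ exactly the $k$-th term of (\ref{eq:Laplaceform}). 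Intersecting the countably many full-measure sets over $k\in\NN$, and using $\sum_k C_\lambda(4/{\mf R}(z))^k<\infty$ to justify passing the limit through the sum, we conclude that for a.e.\ $\gamma$ and every $z$ with ${\mf R}(z)\ge\lambda_0$ the limit $\lim_{N\to\infty}L_N(z)$ exists and equals the stated expression (equivalently $L(z)$). I expect the main obstacle to be the torus wrap-around and boundary bookkeeping: it is precisely in order to make each $k$-th term a spatial average of a genuinely \emph{local} function of the environment that one must translate the environment along the walk, and one must check that the resulting $O(k/N)$ discrepancies are negligible — together with ensuring the geometric tail estimate is uniform in $N$, so that the per-$k$ convergence upgrades to convergence of the series. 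Both points are controlled by the contraction bound $4/{\mf R}(z)<1$.
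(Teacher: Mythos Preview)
Your proof is correct and follows the same route as the paper: quadratic ansatz for $h^N_z$, Neumann series via the contraction $\|T^N_z\|_\infty\le 4/{\mf R}(z)$, random-walk representation of $(T^N_z)^k\rho^N_z$, and the ergodic theorem applied term by term after shifting the environment. You are more explicit than the paper about the torus wrap-around (the $O(k/N)$ boundary sites) and about exchanging $\lim_N$ with $\sum_k$, which the paper compresses into the single line ``since $V_z((x,x)+S_j)=\tau_x V_z(S_j)$, the ergodic theorem implies\ldots''.
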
 

\begin{proof} 
The proof is very similar to the previous one. We look for a solution in the form

\begin{equation*}
h^N_{z} (\eta) = \sum_{x,y} \phi_{z} (x,y) \eta_x \eta_y
\end{equation*}
with $\phi_z (x,y)$, $(x,y) \in {\TT}^2_N$, the solution of
\begin{equation}
\label{eq:phi}
(z + (\gamma_x +\gamma_y){\bf 1}_{x \neq y}) \phi_{z} (x,y) +({\tilde \nabla} \phi_z) (x,y) =-{\cfrac{\delta_1 (x-y) +\delta_{-1} (x-y)}{2}}
\end{equation}


Let $\lambda$ be the real part of $z \in H_+$ and define the operator $T_z$, acting on the real valued functions on ${\TT_N^2}$, according to (\ref{eq:T1}). Then (\ref{eq:phi}) can be written in the form $\phi_z + T_z \phi_z = \rho_z$, where $\rho_z$ is the function given by
\begin{equation*}
\rho_z (x,y) =- \cfrac{(\delta_1 (x-y) +\delta_{-1} (x-y))}{2 (z+ (\gamma_x +\gamma_y){\bf 1}_{x \neq y})}
\end{equation*}

Observe that $\|T_z u \|_{\infty} \le (4/\lambda) \| u \|_{\infty}$ so that if $\lambda>4$ then $T_z$ is contractive for the $\| \cdot \|_{\infty}$ norm. Therefore we have the following representation of $\phi_z$

\begin{equation*}
\phi_{z} = \sum_{k=0}^{\infty} (-1)^k T_z^k \rho_z 
\end{equation*}

For any $\bx \in \TT_N^2$ we have
\begin{eqnarray*}
&&(T_z^k u)(\bx)\\
&=& \sum_{|e_1|=1}\ldots\sum_{|e_k|=1} (e_1 \cdot \bw) \ldots (e_k\cdot \bw) e^{-\sum_{j=0}^{k-1}V_{z} (\bx+e_1+\ldots+e_j) } u(\bx +e_1+\ldots+e_k)
\end{eqnarray*}

Hence, we obtain
\begin{equation*}
(T_z^k \rho_z) (x,x+1)= -\cfrac{1}{2} (-4)^{k} {\mathbb E}_{R.W.} \left[ {\varepsilon} (\{S\}_k)  e^{-\sum_{j=1}^{k} V_{z} ((x,x) +S_j)} \delta_{\pm 1} (S_k^2 -S_k^1) \right]
\end{equation*}

Since $V_{z} ((x,x)+S_j) =\tau_x V_{z} (S_j )$, the ergodic theorem implies
\begin{equation*}
\lim_{N \to \infty} \cfrac{1}{N} \sum_{x \in \TT_N} \phi_{z} (x,x+1) = -\cfrac{1}{2}\sum_{k=0}^{\infty} (-4)^{k} {\tilde {\mathbb E}} \left[ \varepsilon (\{ S\}_k) \, e^{-\sum_{j=1}^{k} V_{z} (S_j)}\delta_{\pm 1} (S_k^2 -S_k^1)\right]
\end{equation*}

This completes the proof.


\end{proof}

Since the sequence $(L^{\gamma,\beta}_N (z))_N$ is a bounded sequence of analytical functions on $H_+$, Montel theorem implies it forms a compact sequence in the Banach space of analytical functions. Let $L_{\infty}^{\gamma^1, \beta}, L_{\infty}^{\gamma^2,\beta}$ be any (analytical) limit points corresponding to the realizations of $\gamma^1$ and $\gamma^2$ of the disorder. For any $z \in H_+$ such that  $\{{\mf R} (z) \ge \lambda_0 \}$, we have
\begin{equation*}
L_{\infty}^{\gamma_1,\beta} (z) =L^{\beta} (z) = L_{\infty}^{\gamma_2,\beta} (z)
\end{equation*}
Since the two analytical functions $L_{\infty}^{\gamma_1,\beta}$ and $L_{\infty}^{\gamma_2,\beta}$ on $H_+$ \, coincide \, on $\{z \, ; \, {\mf R} (z) \ge \lambda_0\}$ with $L^{\beta}$, they are equal on $H_+$ to $L^{\beta}$. It follows that, for almost every realization of the disorder and every $z \in H_+$, the limit as $N$ goes to infinity of $L^{\gamma,\beta}_N (z)$ exists and is equal to $L^{\beta} (z)$. The theorem is a trivial consequence of the following non trivial fact:

\begin{Lemma}
The limit, as $\lambda \in (0,+\infty)$ goes to $0$, of $L^{\beta} (\lambda)$ exists.
\end{Lemma}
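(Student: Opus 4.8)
The plan is to establish existence of $\lim_{\lambda \downarrow 0} L^{\beta}(\lambda)$ by exploiting the variational (Hilbert-space) structure behind the resolvent representation $L(z) = \tfrac{\beta^2}{2}\ll h_z, j_{0,1}\gg$ where $(z - \mc{L})h_z = j_{0,1}$ in $\LL^2(\ll\cdot\gg)$. First I would decompose $\mc{L} = \mc{A} + \mc{S}$ into antisymmetric and symmetric parts with respect to $\ll\cdot\gg$, where $\mc{S}$ acts diagonally on the relevant sector: since $j_{0,1}$ is a degree-two homogeneous function (a bilinear $\eta_x\eta_y$ combination) and $\mc{L}$ preserves the degree, the whole computation reduces to a problem on the Hilbert space of such quadratic functions, which (modulo the trivial directions) is isometric to a weighted $\ell^2$ space of functions $\phi(x,y)$ on $\ZZ^2$. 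On that space $\mc{S}$ becomes multiplication by $-2(\gamma_x + \gamma_y)\mathbf{1}_{x\ne y}$, hence $-\mc{S} \ge 0$ is a bounded, boundedly invertible operator away from the diagonal, while $\mc{A}$ becomes (up to the factor) the skew-symmetric discrete operator $\tilde\nabla$.

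The standard route is then the ``sector condition'' / resolvent-bound argument of Landim--Olla--Varadhan type. Writing $h_\lambda$ for the solution of $(\lambda - \mc{L})h_\lambda = j_{0,1}$, one has the a priori bound $\lambda \ll h_\lambda, h_\lambda \gg + \ll h_\lambda, (-\mc{S}) h_\lambda \gg \le \ll h_\lambda, j_{0,1}\gg$, and since $j_{0,1}$ itself lies in the range of $(-\mc{S})$ (indeed $\mc{S} j_{0,1} = -2(\gamma_0+\gamma_1) j_{0,1}$, so $j_{0,1} = (-\mc{S})\bigl(\tfrac{1}{2(\gamma_0+\gamma_1)} j_{0,1}\bigr)$, the $(-\mc{S})$-norm of $j_{0,1}$ relative to the $H_{-1}$ norm is finite), one obtains $\ll h_\lambda, (-\mc{S}) h_\lambda\gg \le C$ and $\lambda \ll h_\lambda, h_\lambda \gg \le C$ uniformly in $\lambda$. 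This gives weak compactness of $\{(-\mc{S})^{1/2} h_\lambda\}$ in the appropriate Hilbert space. The next step is to identify the limit: show that $(-\mc{S})^{1/2} h_\lambda$ converges \emph{strongly} (not just weakly), which is what is needed for $\ll h_\lambda, j_{0,1}\gg$ to converge. For this I would follow the argument showing that $\lambda \mapsto \ll h_\lambda, j_{0,1}\gg$ is monotone (or has bounded variation): differentiating the resolvent identity, $\tfrac{d}{d\lambda}\ll h_\lambda, j_{0,1}\gg = -\ll h_\lambda, h_\lambda \gg \le 0$ when everything is real, and the integral $\int_0^1 \ll h_\lambda, h_\lambda\gg\, d\lambda$ is controlled because $\ll h_\lambda, h_\lambda \gg \le C/\lambda$ is not integrable by itself — so one instead uses the two-parameter estimate $\ll h_\lambda - h_{\lambda'}, (-\mc{S})(h_\lambda - h_{\lambda'})\gg \to 0$ as $\lambda, \lambda' \downarrow 0$, which follows from the resolvent identity $h_\lambda - h_{\lambda'} = (\lambda' - \lambda)(\lambda - \mc{L})^{-1} h_{\lambda'}$ together with the uniform bounds, giving a Cauchy sequence in the $(-\mc{S})^{1/2}$-norm. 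Then $L^\beta(\lambda) = \tfrac{\beta^2}{2}\ll h_\lambda, j_{0,1}\gg$ converges because $j_{0,1} \in \mathrm{Ran}((-\mc{S})^{1/2})$.

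Concretely, I would carry this out as follows: (i) set up the quadratic-function sector and the explicit isometry with a weighted $\ell^2$ space on $\ZZ^2$, identifying $\mc{A} \leftrightarrow \tilde\nabla$ (skew-adjoint) and $\mc{S} \leftrightarrow -2(\gamma_x+\gamma_y)\mathbf{1}_{x\ne y}$; (ii) record the a priori bounds $\sup_\lambda \ll h_\lambda, (-\mc{S})h_\lambda\gg < \infty$ and $\sup_\lambda \lambda\ll h_\lambda, h_\lambda\gg < \infty$ from the identity above, using $\mc{S} j_{0,1} = -2(\gamma_0+\gamma_1)j_{0,1}$; (iii) prove the Cauchy estimate for $(-\mc{S})^{1/2}h_\lambda$ via the resolvent identity; (iv) conclude that $\ll h_\lambda, j_{0,1}\gg$ converges, hence so does $L^\beta(\lambda)$. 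The main obstacle is step (iii): the naive bound $\ll h_\lambda, h_\lambda\gg \le C/\lambda$ blows up, so one cannot directly integrate $\tfrac{d}{d\lambda}\ll h_\lambda, j_{0,1}\gg$; the resolution is to work in the ``$H_1$'' norm $\ll \cdot, (-\mc{S})\cdot\gg$ where the extra factor of $\lambda' - \lambda$ in the resolvent identity combines with a $1/\lambda$ from the $\ll h, h\gg$ term and a spectral-theory argument (or equivalently, the fact that $j_{0,1}$ has finite $H_{-1}$ norm relative to $-\mc{S}$) to kill the singularity. One must be slightly careful because $\mc{A}$ does not map the domain of $(-\mc{S})$ into itself in an obvious way, but since here $\mc{A} \leftrightarrow \tilde\nabla$ is bounded on the weighted $\ell^2$ space, that difficulty is absent and the standard Hilbert-space functional calculus applies directly.
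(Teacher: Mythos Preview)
Your framework is the one behind the references the paper cites (\cite{B2}, \cite{Bo}): pass to the degree-two sector, identify $-\mc S$ with multiplication by $(\gamma_x+\gamma_y)\mathbf 1_{x\ne y}$ and $\mc A$ with the bounded skew operator $-\tilde\nabla$, use $\mc S j_{0,1}=-2(\gamma_0+\gamma_1)j_{0,1}$ to get the uniform $H_1$-bound on $h_\lambda$, and then argue convergence of $\ll h_\lambda,j_{0,1}\gg$. So the overall route matches.

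The gap is in step~(iii). You conclude that, because $\tilde\nabla$ is bounded on the weighted $\ell^2$ space, ``the standard Hilbert-space functional calculus applies directly.'' But the actual obstruction is not unboundedness of $\mc A$: it is that $-\mc S$ \emph{vanishes on the diagonal} $\{x=y\}$ (equivalently, on the span of the conserved energies $e_x$). Hence the seminorm $\|\cdot\|_1^2=\ll\cdot,(-\mc S)\cdot\gg$ is degenerate there and the sector estimate $|\ll f,\mc A g\gg|\le C\|f\|_1\|g\|_1$ \emph{fails}: take $f$ supported on the diagonal, so $\|f\|_1=0$, while $\tilde\nabla g$ need not vanish on the diagonal. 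Your a~priori bound controls only the off-diagonal part $Qh_\lambda$; the resolvent-identity step then couples back to $Ph_\lambda$, which is of order $\lambda^{-1}$, and the Cauchy estimate does not close as written.

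What the cited argument does in this linear/quadratic setting is to eliminate the diagonal first. With $P,Q$ the diagonal/off-diagonal projections one checks $P\mc A P=0$ (the operator $\tilde\nabla$ applied to a function supported on $\{x=y\}$ vanishes on $\{x=y\}$), so the diagonal row of the resolvent equation gives $Ph_\lambda=\lambda^{-1}P\mc A Qh_\lambda$, and substitution yields a closed equation on $\mathrm{Ran}\,Q$:
\[
\Bigl[\lambda+(-\mc S)-Q\mc A Q+\lambda^{-1}\bigl(-Q\mc A P\mc A Q\bigr)\Bigr]\,Qh_\lambda \;=\; j_{0,1}.
\]
Now $-\mc S\ge 2\gamma_->0$ is genuinely coercive on $\mathrm{Ran}\,Q$, $Q\mc A Q$ is bounded skew-symmetric, and $-Q\mc A P\mc A Q=(P\mc A Q)^*(P\mc A Q)\ge 0$ acts as a nonnegative penalty. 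In this form the symmetric part is uniformly coercive in $\lambda$, and the existence of $\lim_{\lambda\downarrow 0}\ll Qh_\lambda,j_{0,1}\gg=\lim_{\lambda\downarrow 0}\ll h_\lambda,j_{0,1}\gg$ follows by the standard argument for a bounded skew perturbation of a coercive operator with a vanishing-penalty term. Once you insert this Schur-complement reduction, your outline becomes a complete proof along the lines of \cite{B2}.
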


\begin{proof}
The proof is similar to the proof of Theorem 1 in \cite{B2} (see also \cite{Bo}). 
\end{proof}

\end{document}